\newcommand{\Oo}{\mathcal O} 
\newcommand{\meet}{\operatorname{m}} 
\newcommand{\Patrascu}{P{\v a}tra{\c s}cu} 
\newcommand{\tto}{\cdots}
\newcommand{\nil}{\mathbf{nil}}
\newcommand{\slfrac}[2]{\left.#1\middle/#2\right.}
\newcommand{\dual}[1]{\ensuremath{{#1}^*}}
\newcommand{\cotree}[1]{\dual{\overline{#1}}}
\newcommand{\linkablequery}{\operatorname{linkable}}
\newcommand{\oneflipquery}{\operatorname{one-flip-linkable}}
\newcommand{\insertrm}{\operatorname{insert}}
\newcommand{\remove}{\operatorname{remove}}
\newcommand{\vertexjoin}{\operatorname{join}}
\newcommand{\vertexsplit}{\operatorname{cut}}
\newcommand{\flip}{\operatorname{flip}}
\newcommand{\clustermerge}{\operatorname{merge}}
\newcommand{\clustersplit}{\operatorname{split}}
\newcommand{\expose}{\operatorname{expose}}
\newcommand{\EET}{\operatorname{EET}}
\newcommand{\set}[1]{\left\{#1\right\}}
\newcommand{\simplepath}[2]{#1\cdots{}#2}
\newtheorem{observation}{Observation}
\title{Dynamic Planar Embeddings of Dynamic Graphs\footnote{Announced at STACS 2015}}
\author{Eva Rotenberg \and Jacob Holm}
\institute{
	University of Copenhagen, \\
	Department of Computer Science, \\
	Universitetsparken 1,\\
	DK-2100 Copenhagen East\\
	J. Holm \at
	\email{jaho@di.ku.dk}           
	\and
	E. Rotenberg \at
	\email{roden@di.ku.dk}
}
\begin{document}

\maketitle

\begin{abstract}
	We present an algorithm to support the dynamic embedding in the plane
	of a dynamic graph. An edge can be inserted across a face between
	two vertices on the face boundary (we call such a vertex pair linkable), and edges can be deleted. The planar
	embedding can also be changed locally by flipping components that are
	connected to the rest of the graph by at most two vertices. 
	Given vertices $u,v$, $\operatorname{linkable}(u,v)$
	decides whether $u$ and $v$ are linkable in the current embedding, and if so, returns a list of suggestions for the placement of $(u,v)$ in the embedding.
	For non-linkable vertices $u,v$, we define a new query, $\operatorname{one-flip-linkable}(u,v)$ providing a suggestion for a flip that will make them linkable if one exists.
	We support all updates and queries in $\Oo(\log^2 n)$ time. Our time
	bounds match those of Italiano et al. for a static (flipless) embedding of a dynamic graph.
	Our new algorithm is simpler, exploiting that
	the complement of a spanning tree of a connected plane graph is a
	spanning tree of the dual graph.  The primal and dual trees are
	interpreted as having the same Euler tour, and a main idea of the new algorithm is an elegant 
	interaction between top trees over the two trees via
	their common Euler tour.
\end{abstract}\newpage



\section{Introduction}

We present a data structure for supporting and maintaining a dynamic
planar embedding of a dynamic graph.
In this article, a \emph{dynamic graph} is a graph where edges can be
removed or inserted, and vertices can be cut or joined, but where an edge $(u,v)$ can only be added if it does not violate planarity. More precisely,
the edges around each vertex are ordered cyclically by the embedding, 
similar to the edge-list representation of the graph.
A \emph{corner} (of a face) is the gap between consecutive edges incident to some
vertex. Given two corners $c_u$ and $c_v$ of the same face $f$, 
incident to the vertices $u$ and $v$ respectively, the operation
$\insertrm(c_u,c_v)$ inserts an edge between $u$ and $v$ in the
dynamic graph, and embeds it across $f$ via the specified corners.  We provide
an operation $\linkablequery(u,v)$ that returns such a pair of corners 
$c_u$ and $c_v$ if they exist. If there are more options, we
can list them in constant time per option after the first.  A vertex may
be cut through two corners, and linkable vertices may be joined by
corners incident to the same face, if they are connected, or incident
to any face otherwise. That is, joining vertices corresponds to
linking them across a face with some edge $e$, and then contracting
$e$.

It may often be relevant to change the embedding, e.g. in order to be
able to insert an edge.  In a \emph{dynamic embedding}, the user is
allowed to change the embedding by what we call flips, that is, to
turn part of the graph upside down in the embedding.
Of course, the relevance of this depends on what we want to describe
with a dynamic plane graph.  If the application is to describe roads on
the ground, flipping orientation would not make much sense.  But if we
have the application of graph drawing or chip design in mind, flips
are indeed relevant.  In the case of chip design, a layer of a chip is
a planar embedded circuit, which can be thought of as a planar
embedded graph.  An operation similar to flip is also supported by most
drawing software.

Given two vertices $u,v$, we may ask whether they can be linked after
modifying the embedding with only one flip. We introduce a new operation, the $\oneflipquery(u,v)$
query, which answers that question, and returns the vertices and corners
describing the flip if it exists.

Our data structure is an extension to a well-known duality-based
dynamic representation of a planar embedded graph known as a
tree/cotree decomposition~\cite{DBLP:conf/soda/Eppstein03}.  We maintain
top trees~\cite{DBLP:journals/talg/AlstrupHLT05} both for the primal and dual spanning
trees. We use the fact that they share a common (extended) Euler tour - in a new way -
to coordinate the updates and enable queries that either
tree could not answer by itself.  All updates and queries in the combined
structure are supported in time $\Oo(\log^2 n)$, plus, in case of $\linkablequery(u,v)$, the length of the returned list.

\subsection{Dynamic Decision Support Systems}

An interesting and related problem is that of dynamic planarity
testing of graphs.  That is, we have a planar graph, we insert some
edge, and ask if the graph is still planar, that is, if there exist an embedding of it in the plane? 

The problem of dynamic planarity testing appears technically harder
than our problem, and in its basic form it is only relevant when the
user is completely indifferent to the actual embedding of the graph.
What we provide here falls more in the category of a decision support
system for the common situation where the desired solution is not
completely captured by the simple mathematical objective, in this case
planarity. We are supporting the user in finding a good embedding,
e.g., informing the user what the options are for inserting an edge
(the $\linkablequery$ query), but leaving the actual choice to the
user.  We also support the user in changing their mind about the
embedding, e.g. by flipping components, so as to make edge insertions
possible.  Using the $\oneflipquery$ query we can even suggest a flip
that would make a desired edge insertion possible if one exists.

\subsection{Previous work}
Dynamic graphs have been studied for several decades.  Usually, a \emph{fully
dynamic} graph is a graph that may be updated by the deletion or
insertion of an edge, while \emph{decremental} or \emph{incremental} refer to
graphs where edges may only be deleted or inserted, respectively.  A
dynamic graph can also be one where vertices may be deleted along with
all their incident edges, or some combination of edge- and vertex
updates~\cite{Patrascu:2007}. 

\nocite{Karger94randomsampling}

Hopcroft and Tarjan~\cite{Hopcroft1974} were the first to solve
planarity testing of static graphs in linear time.  
Incremental planarity testing was solved by La
Poutr\'{e}~\cite{LaPoutre:1994}, who improved on work by Di Battista,
Tamassia, and Westbrook~\cite{di1989incremental,di1996line,westbrook1992}, to obtain a total running time of $\Oo(\alpha(q,n))$
where $q$ is the number of operations, and where $\alpha$ is the inverse-Ackermann function. Galil, Italiano, and
Sarnak~\cite{Galil99fullydynamic} made a data structure for fully
dynamic planarity testing with $\Oo(n^{\slfrac{2}{3}})$ worst case
time per update, which was improved to $\Oo(n^{\slfrac{1}{2}})$ by
Eppstein et al.~\cite{Eppstein:1996}. For maintaining embeddings of planar
graphs, Italiano, La Poutr\'{e}, and Rauch~\cite{DBLP:conf/esa/ItalianoPR93}
present a data structure for maintaining a planar embedding of a
dynamic graph, with time $\Oo(\log^2 n)$ for update and for linkable-query, where
insert only works when compatible with the
embedding. 
The dynamic tree/cotree decomposition was first introduced by Eppstein et al.~\cite{EppItaTam-Algs-92} who used it to maintain the Minimum Spanning Tree (MST) of a planar embedded dynamic graph subject to a sequence of change-weight$(e,\Delta x)$ operations in $\Oo(\log n)$ time per update. 
Eppstein~\cite{DBLP:conf/soda/Eppstein03} presents a data
structure for maintaining the MST of a 
dynamic graph, which handles updates in $\Oo(\log
n)$ time if the graph remains plane. More precisely
the user specifies a combinatorial embedding in terms of a cyclic ordering of the
edges around each vertex. Planarity of the user specified embedding is
checked using Euler's formula. Like our algorithm, Eppstein's supports flips. The fundamental difference is that
Eppstein does not offer any support for keeping the embedding planar, e.g.,
to answer $\linkablequery(u,v)$, the user would in principle have
to try all $n^2$ possible corner pairs $c_u$ and $c_v$ incident to $u$ and $v$, and ask if $\insertrm(c_u,c_v)$ violates planarity.

As far as we know, the $\oneflipquery$ query has not been studied
before. Technically it is the most challenging operation supported in
this paper.

The highest lower bound for the problem of planarity testing is \Patrascu's $\Omega(\log n)$ lower bound for fully dynamic
planarity testing~\cite{Patrascu:2006}. From the reduction it is clear that this lower bound holds as well for maintaining an embedding of a planar graph as we do in this article.

\subsection{Results}
We present a data structure for maintaining an embedding of a planar graph under the following updates:
\begin{description}
	\item[delete($e$)] Deletes the specified edge, $e$.
	\item[insert($c_u$,$c_v$)] Inserts an edge through the specified corners, if compatible with the current embedding.
	\item[linkable($u$,$v$)] Returns a data structure representing the (possibly empty) list of faces that $u$ and $v$ have in common, and for each common face, $f$, the (nonempty) lists of corners incident to $u$ and $v$, respectively, on $f$.
	\item[articulation-flip($c_1$,$c_2$,$c_3$)] If the corners $c_1,c_2,c_3$ are all incident to the same vertex, and $c_1,c_2$ are incident to the same face, this update flips the subgraph between $c_1$ and $c_2$ to the space indicated by $c_3$.
	\item[separation-flip($c_1$, $c_2$, $c_3$, $c_4$)] If the corners $c_1,c_2$ are incident to the vertex $v$, and $c_3,c_4$ are incident to the vertex $u$, and $c_2,c_3$ are incident to the face $f$, and $c_1,c_4$ are incident to the face $g$. Flips the subgraph delimited by these corners, or, equivalently, by $u,g,v,f$.
	\item[one-flip-linkable($u$,$v$)] Returns a suggestion for an articulation-flip or separation-flip making $u$ and $v$ linkable, if they are not already.
\end{description}

\subsection{Paper outline.}
In Section~\ref{sec:maintaining}, we give first a short introduction to the data structures needed in our implementation, then we describe the implementations of the functions linkable (Section~\ref{sec:query}), insert and delete (Section~\ref{sec:updates}), and articulation-flip and separation-flip (Section~\ref{sec:flip}). Finally, Section~\ref{sec:flip-find} is dedicated to our implementation of $\oneflipquery$. 


\section{Maintaining a dynamic embedding}\label{sec:maintaining}
In this section we present a data structure
to maintain a dynamic embedding of a planar graph.  In the following,
unless otherwise stated, we will assume $G=(V,E)$ is a planar graph
with a given combinatorial embedding and that $\dual{G}=(F,\dual{E})$
is its dual.

Our primary goal is to be able to answer $\linkablequery(u,v)$, where $u$ and $v$ are vertices: Determine if an
edge between $u$ and $v$ can be added to $G$ without violating
planarity and without changing the embedding.  
If it can be inserted, return the list of pairs of \emph{corner}s (see
Definition~\ref{def:corner} below). Each corner-pair,  ($c_u$, $c_v$), 
describes a unique place where such an edge may be inserted.  If no such pair exists, return the empty list.

The data structure must allow efficient updates such as $\insertrm$, $\remove$,
$\vertexsplit$, $\vertexjoin$, and $\flip$.  We defer the exact
definitions of these operations to Section~\ref{sec:updates}.

As in most other dynamic graph algorithms we will be using a spanning
tree as the main data structure, and note:
\begin{observation}\label{obs:tree-cotree}[Tutte~\cite[p. 289]{tutte1984graph}]
  If $E_T\subseteq{}E$ induces a spanning tree $T$ in $G$, then
  $\dual{(E\setminus{}E_T)}$ induces a spanning tree $\cotree{T}$
  in~$\dual{G}$ called the \emph{co-tree} of $T$.
\end{observation}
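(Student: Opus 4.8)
The plan is to show that $\cotree{T}$ is a spanning tree of $\dual{G}$ by establishing two things separately: that it has the right number of edges, and that it is acyclic. A forest on all vertices of $\dual{G}$ with exactly one fewer edge than it has vertices must be connected, hence a spanning tree, so these two facts together finish the argument. Throughout I assume $G$ is connected, as is implicit in the hypothesis that a spanning tree $T$ exists.

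First, the counting. Write $n=|V|$, $m=|E|$, and let $f=|F|$ be the number of faces, i.e.\ the number of vertices of $\dual{G}$. Since $T$ is spanning, $|E_T|=n-1$, so its complement has $|E\setminus E_T| = m-n+1$ edges, and this is exactly the edge set of $\cotree{T}$. By Euler's formula $n-m+f=2$, so a spanning tree of $\dual{G}$ must have $f-1 = m-n+1 = |E\setminus E_T|$ edges. Thus $\cotree{T}$ already has precisely the right edge count.

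Second, acyclicity. Here I would invoke the classical planar duality between cycles and edge cuts: under the bijection $e\leftrightarrow\dual{e}$, a simple cycle in $\dual{G}$ corresponds to a bond (minimal edge cut) of $G$, which separates $V$ into two nonempty parts $S$ and $V\setminus S$, the cut being exactly the set of primal edges with one endpoint in each part. If $\cotree{T}$ contained a cycle, the corresponding cut would consist entirely of edges from $E\setminus E_T$; then removing it would leave every edge of $T\subseteq E_T$ in place, so $G$ would remain connected across $S$ and $V\setminus S$ --- contradicting that the cut separates them. Hence $\cotree{T}$ contains no cycle in $\dual{G}$.

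Combining the two, $\cotree{T}$ is a forest on the $f$ vertices of $\dual{G}$ with $f-1$ edges, and is therefore a spanning tree. The main obstacle is the acyclicity step: it rests on the cycle/cut duality, whose careful justification uses the Jordan curve theorem to see that a dual cycle partitions the primal vertices into a nonempty inside and outside whose connecting edges are exactly the duals of the cycle's edges. One must also be mindful of multigraph degeneracies (loops and parallel edges), which is why it is cleanest to cite the standard duality result rather than re-derive it from scratch.
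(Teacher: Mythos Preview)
Your argument is correct and is the standard proof: Euler's formula gives the edge count, cycle--cut duality gives acyclicity, and a spanning acyclic subgraph with $f-1$ edges on $f$ vertices is a tree. One small point worth making explicit is that you are viewing $\cotree{T}$ as a spanning subgraph of $\dual{G}$ (i.e.\ on the full vertex set $F$, possibly with isolated vertices a priori), so that the forest-with-$f-1$-edges argument applies; you say this implicitly but it is the step a reader might stumble on.

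As for comparison with the paper: the paper does not prove this observation at all. It is stated with a citation to Tutte~\cite[p.~289]{tutte1984graph} and used as a black box. So your proof is not duplicating anything in the text; it simply supplies the classical justification that the authors chose to outsource.
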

\begin{lemma}\label{lem:cotree-cycle-has-result}
  If $u$ and $v$ are vertices, $T$ is a spanning tree, and $e$ is any
  edge on the $T$-path between $u$ and $v$, then any face containing
  both $u$ and $v$ lies
  on the fundamental cycle in $G^\ast$ induced by $\dual{e}$ in the co-tree $\cotree{T}$.
\end{lemma}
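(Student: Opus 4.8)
The plan is to reduce the statement to the classical duality between cuts in $G$ and cycles in $\dual{G}$. Since $e$ lies on the $T$-path from $u$ to $v$, it is a tree edge, $e\in E_T$, so its dual $\dual{e}$ is \emph{not} an edge of the co-tree $\cotree{T}$ (whose edges are the duals of the non-tree edges, by the preceding observation); hence $\dual{e}$ genuinely induces a fundamental cycle, namely the unique cycle obtained by adding $\dual{e}$ to $\cotree{T}$. First I would delete $e$ from $T$, which splits $T$ into two subtrees spanning a partition $V=A\sqcup B$ with $u\in A$ and $v\in B$; this is precisely the consequence of $e$ lying on the $u$--$v$ path. Let $C$ be the associated fundamental cut, i.e.\ the set of all edges of $G$ with one endpoint in $A$ and the other in $B$.

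The crux is to identify $C$ with the fundamental cycle of $\dual{e}$. First, $C$ is a bond (a minimal edge cut) of $G$, so by planar cut/cycle duality the set $\dual{C}=\set{\dual{g}\cond g\in C}$ is a simple cycle in $\dual{G}$. Next I would verify that this cycle is exactly the fundamental cycle of $\dual{e}$ in $\cotree{T}$: clearly $e\in C$, while every other edge $g\in C\setminus\set{e}$ must be a non-tree edge, since any tree edge $\neq e$ keeps both of its endpoints on the same side of the partition once $e$ is removed. Therefore $\dual{C}$ consists of the single non-co-tree edge $\dual{e}$ together with co-tree edges only, and by uniqueness of fundamental cycles it \emph{is} the fundamental cycle induced by $\dual{e}$.

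It remains to show that a face $f$ incident to both $u$ and $v$ appears as a vertex on $\dual{C}$. Consider the facial walk bounding $f$: it is a closed walk in $G$ visiting both $u\in A$ and $v\in B$. A closed walk that meets both sides of the cut must traverse an edge of $C$ (indeed a positive, in fact even, number of them), so $f$ is incident to some cut edge $g\in C$. Its dual $\dual{g}$ lies on the cycle $\dual{C}$, and since $f$ is one of the two faces on either side of $g$, the face $f$ is an endpoint of $\dual{g}$ in $\dual{G}$ — that is, $f$ is a vertex on the fundamental cycle, as claimed.

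The main obstacle is the duality step in the middle paragraph: it rests on the bond/cycle correspondence for planar graphs and on correctly tracking which edges are tree versus co-tree edges. I would also treat the degenerate case where $e$ is a bridge, where $C=\set{e}$ and $\dual{e}$ is a self-loop in $\dual{G}$: then the only face incident to $e$ is the single face on both of its sides, and since every face containing $u$ and $v$ must meet $e$, it coincides with that face, which is the sole vertex of the (degenerate) fundamental cycle, so the statement still holds.
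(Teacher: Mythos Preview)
Your proof is correct and follows essentially the same approach as the paper: identify the fundamental cycle of $\dual{e}$ with the cut in $G$ separating $u$ from $v$, then argue that the boundary walk of any common face must cross that cut, making $f$ an endpoint of some dual cut-edge on the cycle. You are simply more explicit than the paper about the bond/cycle duality (the paper asserts in one line that ``the fundamental cycle $C(e)$ induces a cut of the graph which separates $u$ and $v$''), and you handle the bridge case separately, which the paper omits; neither addition changes the underlying argument.
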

\begin{proof}
The fundamental cycle $C(e)$ induces a cut of the graph which separates $u$ and $v$. 

Let $f$ be a face which is incident to both $u$ and $v$. Then there will be two paths between $u$ and $v$ on the face $f$, choose one of them, $p$. Since $C(e)$ induces a cut of the graph which separates $u$ and $v$, some edge $e_C\in p$ must belong to the cut. But then, $f$ is a dual endpoint of $e_C^\ast$, and thus, $f$ belongs to $C(e)$.
\qed
\end{proof}

Thus, the main idea is to search a path in the co-tree.  This is
complicated by the fact (see Figure~\ref{fig:cotree-search-hard}) that the set of faces that are adjacent to $u$
and/or to $v$ need not be contiguous in $\cotree{T}$, so it is
possible for the cycle to change arbitrarily between faces that are
adjacent to any combination of neither, one, or both of $u$, $v$.
\begin{SCfigure}[50]
\centering
\includegraphics[width=0.45\linewidth]{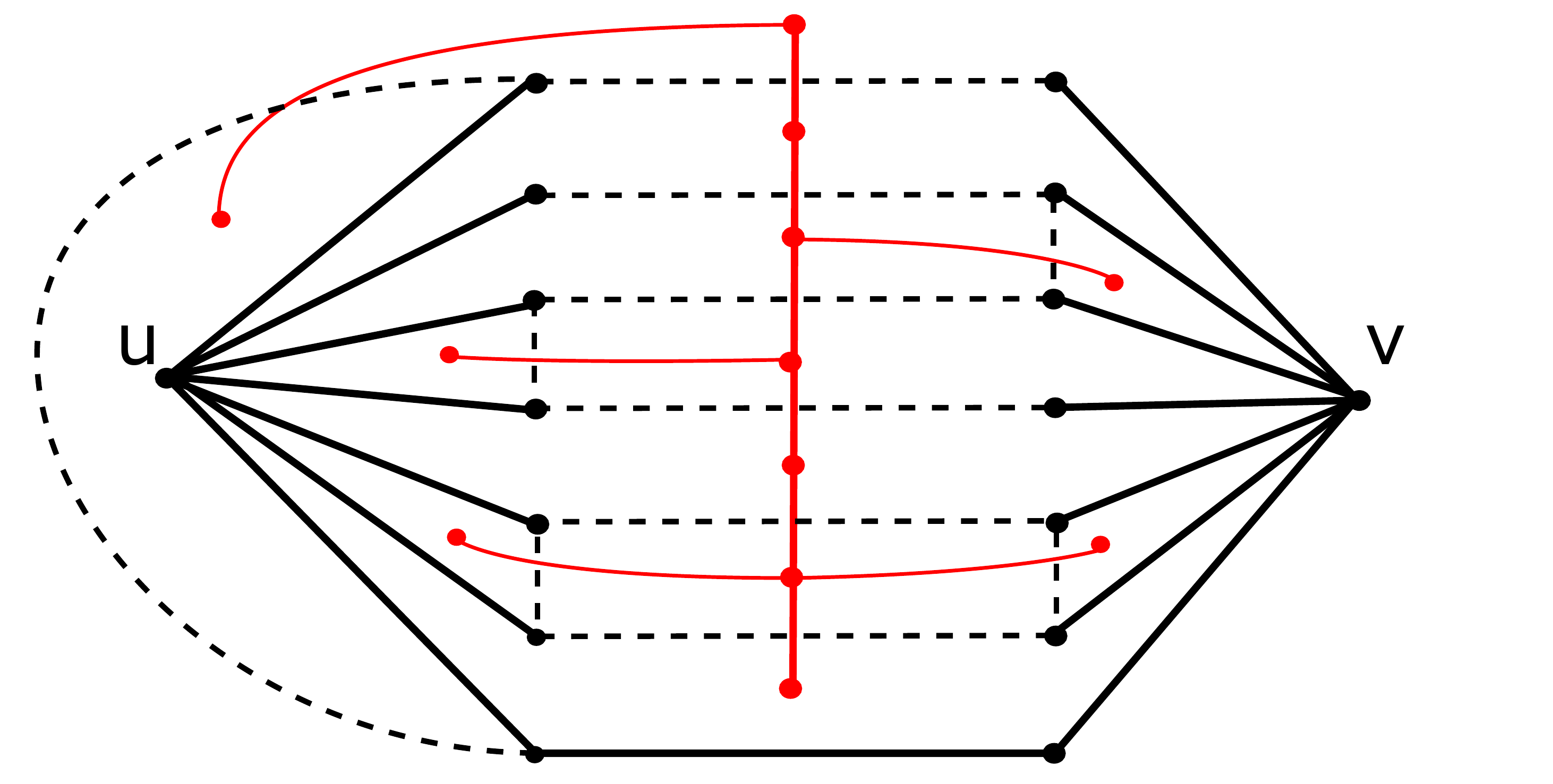}
\caption{The co-tree path may switch arbitrarily between faces
  that are adjacent to any combination of neither, one, or both of
  $u$, $v$.}
\label{fig:cotree-search-hard}
\end{SCfigure}

Our linkable query will consist of two phases. A marking phase, in which we ``activate'' (mark) all corners incident to each of the two vertices we want to link (see Section~\ref{sec:marking}), and a searching phase, in which we search for faces incident to ``active'' (marked) corners at both vertices (see Section~\ref{sec:query}). But first, we define corners.

\subsection{Corners and the extended Euler tour}\label{sec:corners}

The concept of a corner in an embedded graph turns out to be very
important for our data structure.  Intuitively it is simply a place in
the edge list of a vertex where you might insert a new edge, but as we
shall see it has many other uses.
\begin{definition}\label{def:corner}
  If $G$ is a non-trivial connected, combinatorially embedded graph, a
  \emph{corner} in $G$ is
  a $4$-tuple $(f,v,e_1,e_2)$ where $f$ is a face, $v$ is a
  vertex, and $e_1$, $e_2$ are edges (not necessarily distinct) that
  are consecutive in the edge lists of both $v$ and~$f$. 

  If $G = (\{v\}, \emptyset)$ and $G^\ast = (\{f\},\emptyset)$, there
  is only one corner, namely the tuple $(f,v,\nil,\nil)$.
\end{definition}
Note that faces and vertices appear symmetrically in the definition. Thus, there is a one-to-one correspondence between the corners of $G$ and the corners of $\dual{G}$.
This is important because it lets us work with corners in $G$ and
$\dual{G}$ interchangeably. Another symmetric structure is the \emph{Extended Euler Tour} (defined in Definition~\ref{def:EET} below). Recall that an Euler Tour of a tree is constructed by doubling all edges and finding an Eulerian Circuit of the resulting multigraph.
This is extended in the following the following way: 
\begin{definition} \label{def:EET}
   Given a spanning tree $T$ of a plane graph $G$, an \emph{oriented} Euler tour is one where consecutive tree edges in the tour are consecutive in the ordering around their shared vertex.
   
   Given the oriented Euler tour of the spanning tree $T$ of a non-trivial connected graph $G$, we may define the \emph{extended Euler tour} $\EET(T)$ by expanding between each consecutive pair of edges ($e_-,e_+$) with the list of corners and non-tree edges that come between $e_-$ and $e_+$ in the orientation around their shared vertex.
   If $G $ is a single-vertex graph, its extended Euler tour consist solely of the unique corner of $G$. 
\end{definition}

Thus, $\EET(T)$ is a cyclic arrangement 
that contains each edge in $G$ exactly twice and each corner 
in $G$ exactly once (see Figure~\ref{fig:euler_tour}).
Even more interesting is:
\begin{lemma}\label{lem:corner-euler-tour}
Given a spanning tree $T$ of $G$ with extended Euler tour $\EET(T)$, the corresponding tour $\EET(\cotree{T})$ in
$\dual{G}$ defines exactly the opposite cyclical arrangement of the
corresponding edges and corners in $\dual{G}$.
\end{lemma}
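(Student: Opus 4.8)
The plan is to prove the lemma through a single geometric picture on the sphere: both extended Euler tours are traversals of one and the same closed curve, read off in opposite directions.

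First I would give a topological interpretation of $\EET(T)$. Embed $G$ on the sphere $S^2$ and let $N$ be a regular (tubular) neighborhood of the spanning tree $T$. Since $T$ is contractible, $N$ is a closed disk, so its boundary $\gamma = \partial N$ is a single simple closed curve; fix an orientation of $\gamma$. I would then argue that walking along $\gamma$ reproduces exactly the cyclic sequence $\EET(T)$ of Definition~\ref{def:EET}. Along each tree edge the curve runs on both sides, contributing that edge twice. Each non-tree edge $e=(u,v)$ of $G$ has both endpoints inside $N$, leaves $N$ near $u$ and re-enters near $v$, so $\gamma$ meets it in exactly two points, contributing it twice. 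Each corner is a gap between consecutive edges around a vertex and is met by $\gamma$ exactly once. The decisive point is that the local order in which $\gamma$ sweeps through the gap between two consecutive tree edges around a vertex is precisely the rotational order of the corners and non-tree edges in that gap --- this is the ``expansion'' step --- while the oriented Euler tour condition (consecutive tree edges in the tour are consecutive in the rotation) is exactly the statement that $\gamma$ passes from one tree edge to the next along the boundary without any further tree edge lying between them. The single-vertex base case is immediate, since then $\gamma$ bounds a disk containing only the unique corner.

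Next I would apply Observation~\ref{obs:tree-cotree} topologically. The complement $D = S^2 \setminus \operatorname{int} N$ is again a disk, and it is a regular neighborhood of the co-tree $\cotree{T}$ in the embedding of $\dual{G}$: the dual vertices (faces of $G$) together with the dual tree edges $\dual{e}$ for $e\notin E_T$ all live in $D$, and $D$ deformation-retracts onto $\cotree{T}$. Applying the first step to $\dual{G}$ and $\cotree{T}$ therefore identifies $\EET(\cotree{T})$ with the traversal of $\partial D = \gamma$ --- the very same curve. Finally I would compare orientations: the disks $N$ and $D$ lie on opposite sides of their common boundary $\gamma$, so the boundary orientation induced by $N$ is the reverse of that induced by $D$, and hence $\EET(T)$ and $\EET(\cotree{T})$ read $\gamma$ in opposite directions. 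It then remains to match the two readings element-for-element under the stated correspondence: a corner is a corner in both $G$ and $\dual{G}$ by Definition~\ref{def:corner}; a tree edge $e\in E_T$, appearing as its two sides along $\gamma$ on the $N$-side, is met by $\gamma$ in exactly the two points where the dual non-tree edge $\dual{e}$ crosses it on the $D$-side; symmetrically, a non-tree edge $e$ of $G$ corresponds to the dual tree edge $\dual{e}$ of $\cotree{T}$. Thus the two tours are the reverse of one another, as claimed.

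I expect the main obstacle to be the rigorous justification of the first step --- that the boundary walk of $N$ literally realizes Definition~\ref{def:EET}, including the claim that $D$ retracts onto the co-tree so that the symmetric argument is legitimate. An alternative, fully combinatorial route avoids the topology: encode the embedding by a rotation system $(\rho,\alpha)$ with $\rho$ the rotation, $\alpha$ the edge involution, and face permutation $\varphi=\rho\alpha$; recall that the dual embedding interchanges the roles of $\rho$ and $\varphi$ (vertices become faces and vice versa) while reversing orientation, and observe that $\EET(T)$ is exactly the sequence of darts and corners visited when tracing $\gamma$. Passing to the dual swaps $\rho$ and $\varphi$, which reverses the direction of the trace and yields the reversal directly. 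I would keep the geometric argument as the main line and use the combinatorial reformulation as a consistency check.
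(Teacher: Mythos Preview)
Your proposal is correct and follows essentially the same approach as the paper: the paper defines an ``Euler cut'' as a simple closed curve separating the primal tree from the dual tree and intersecting each edge twice, constructs one by contracting the tree to a point and taking an $\varepsilon$-circle (which is precisely your tubular-neighborhood boundary $\gamma$), and then argues that this single curve encodes both $\EET(T)$ and $\EET(\cotree{T})$ with opposite orientations. Your write-up is somewhat more explicit about the element-by-element matching and about why the complement retracts onto the co-tree, and your combinatorial rotation-system alternative is not in the paper, but the geometric core is the same.
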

In order to prove the lemma, we use the following definition of an \emph{Euler cut}:
\begin{definition}
	Given any closed simple curve $C$ on the sphere $S^2$, then $S^2 \setminus C$ has two components, $A$ and $B$, each homeomorphic to the plane\footnote{Jordan curve Theorem.}. 
	Given an edge $e$ of the embedded connected graph $G$ with dual graph $G^\ast$, and a closed simple curve $C$, we say that $C$ intersects $e$ if $C$ intersects either $e$ of $G$, or $e^\ast$ of $G^\ast$. 
	Given a tree/cotree decomposition for $G$, we call $C$ an \emph{Euler cut} if the tree is entirely contained in $A$ and the cotree is entirely contained in $B$, and the curve only intersects each edge, $e$ or $e^\ast$, twice. 
\end{definition}

\begin{proof}[Proof of Lemma~\ref{lem:corner-euler-tour}] 
	Given any Euler cut $C$, note that the edges visited along the Euler tour of $G$ come in exactly the same order as they are intersected by $C$, if we choose the same ordering of the plane (clockwise or counter-clockwise). The placement of the corners is uniquely defined by the ordering of the edges.
	
	Such an Euler cut always exists, because both tree and cotree are connected, and because they share no common point. One can construct it by contracting one of the two trees to a point, and then considering an $\varepsilon$-circle $B_\varepsilon$ around that point, where $\varepsilon$ is small enough such that there is a minimal number of edge-intersections with that circle. The pre-image of $B_\varepsilon$ under the contraction will be an Euler cut. (See Figure \ref{fig:euler_cuts}.)
	
	Now, the Euler tour of $G$ is uniquely given by $C$. But we also know that $G$ is the dual of $G^\ast$, so $C$ is also an Euler cut for $G^\ast$, and thus, the Euler tour of $G^\ast$ is uniquely given by $C$. 
	
	Lastly, notice that in order for $C$ to determine the same ordering of the edges, we need to orient the two components oppositely, that is, one clockwise and one counter-clockwise.
	\qed
\end{proof}

\begin{figure}[H]
	\centering
	\includegraphics[width = 0.45\textwidth]{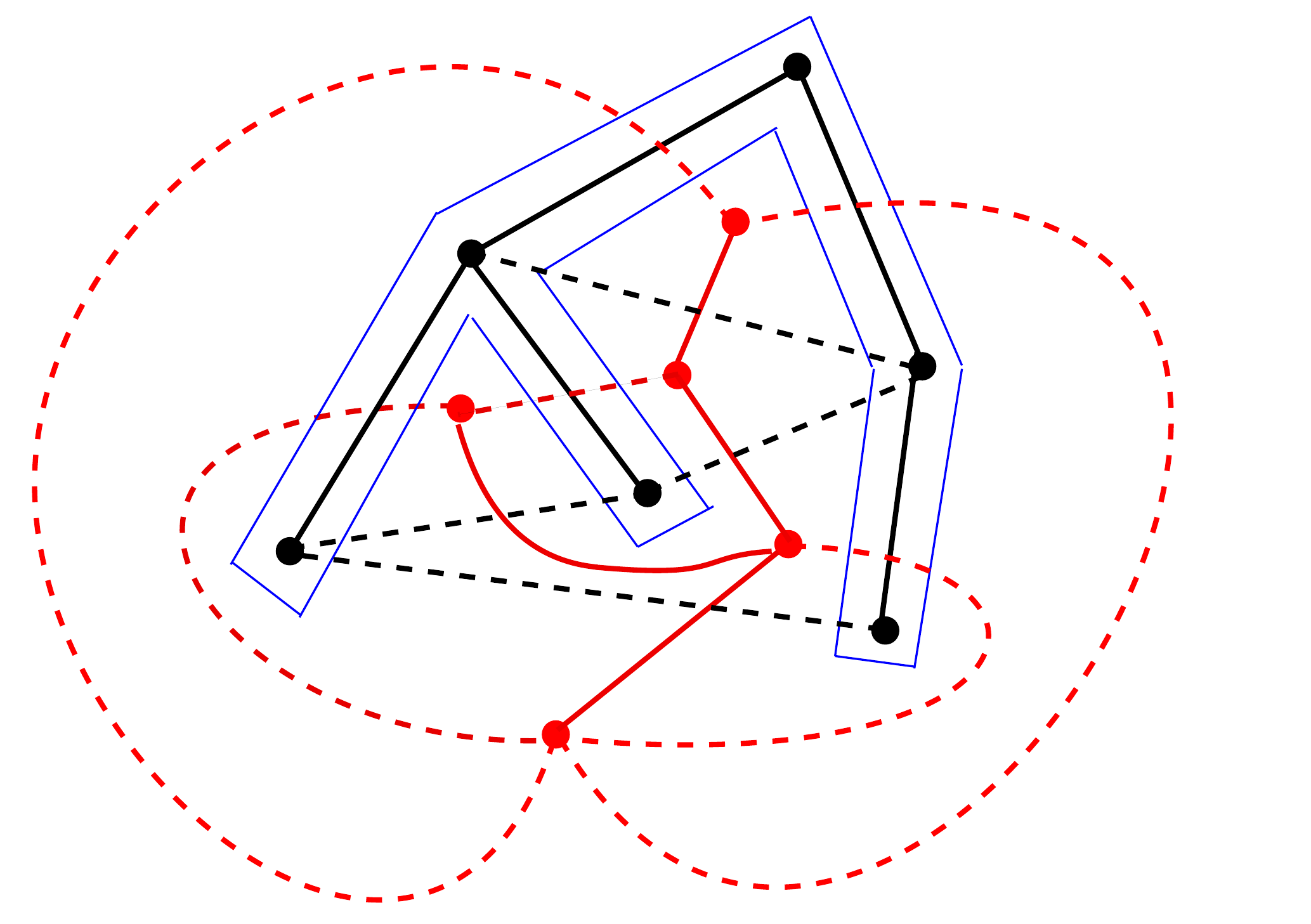}\hspace{0.1\textwidth}\includegraphics[width = 0.45\textwidth]{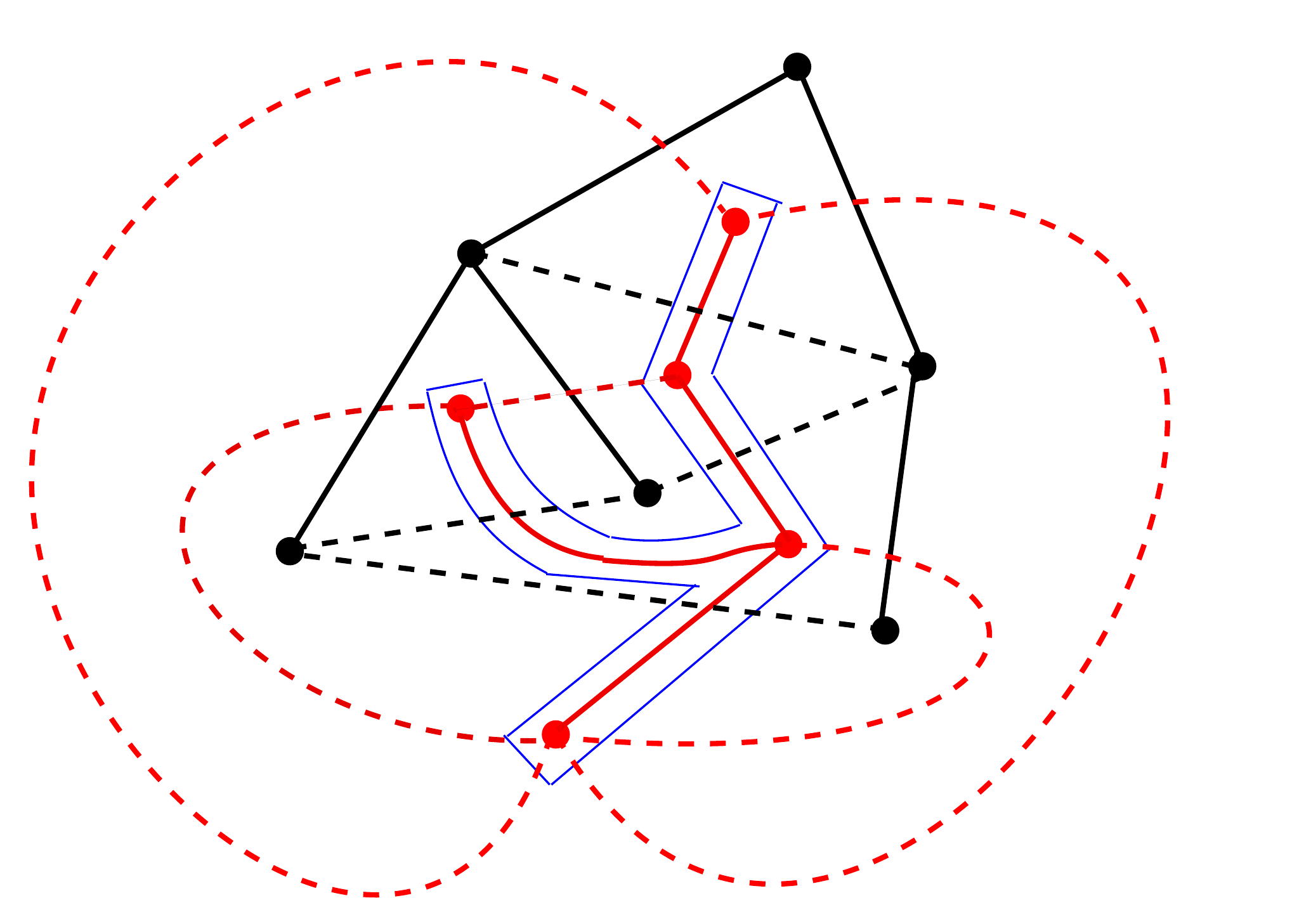}
	\caption{Two Euler cuts (blue) separating the primal tree from the dual tree. }
	\label{fig:euler_cuts}
\end{figure}
Thus, segments of the extended Euler tour translate directly between $T$
and $\cotree{T}$.  By \emph{segment}, we mean any contiguous sub-list
of the cycle.

\begin{SCfigure}[50]
\caption{This graph has extended Euler tour  \texttt{1a2b3c4d5b6e7d8f9e}$\ldots$\texttt{18h}, or, to write only the edges: \textbf{ab}cd\textbf{be}df\textbf{e}g\textbf{ah}gf\textbf{i}c\textbf{ih}. Edges not in the spanning tree are drawn with dotted lines.} 
\includegraphics[width=0.32\textwidth]{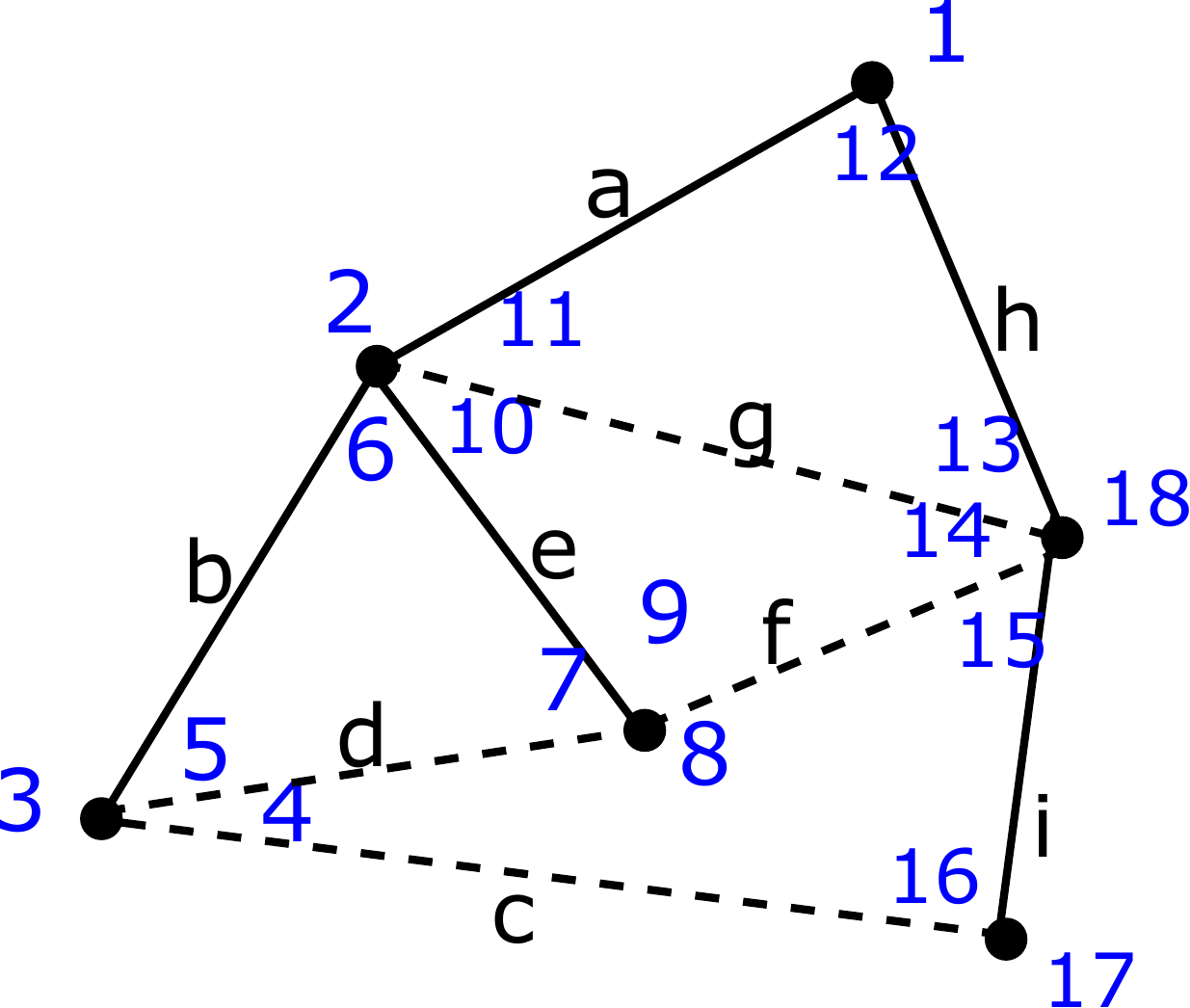}
\label{fig:euler_tour}
\end{SCfigure}

The high level algorithm for $\linkablequery(u,v)$, that is, to find a face in the embedding incident to both $u$ and $v$, (to be explained in detail later,) is now to build a structure consisting of an arbitrary
spanning tree $T$ and its co-tree $\cotree{T}$ such that we can

\begin{enumerate}[noitemsep,topsep=0pt,parsep=0pt,partopsep=0pt]
\item Find an edge $e$ on the $T$-path between $u$ and $v$.  This is
  easy.

\item Mark all corners incident to $u$ and $v$ in $T$.
  This is complicated by the existence of vertices of high degree, so
  a lazy marking scheme is needed.  However, it is easier
  than marking them in $\cotree{T}$ directly, since each vertex has a
  unique place in $T$ and no place in $\cotree{T}$.

\item Transfer those marks from $T$ to $\cotree{T}$ using
  Lemma~\ref{lem:corner-euler-tour}.  We can do this as long as
  the lazy marking scheme works in terms of segments of the extended
  Euler tour.

\item Search the fundamental cycle induced by $\dual{e}$ in $\cotree{T}$ for faces
  that are incident to a marked corner on both sides of the path.
\end{enumerate}

\subsection{Marking scheme}\label{sec:marking}

We need to be able to mark all corners incident to the two query
vertices $u$ and $v$, and we need to do it in a way that operates 
on segments of the extended Euler tour.  To this end we will 
maintain a modified version of a top tree over $T$ (see Alstrup et 
al.~\cite{DBLP:journals/talg/AlstrupHLT05} for an introduction to top trees), called the extended embedded top tree.

\subsubsection{Embedded top trees.}

Given a tree $T$, a top tree for $T$ is a binary tree. At the 
lowest level, its leaves are the edges of $T$. Its internal nodes, called \emph{clusters}, are sub-trees of $T$ with at most two boundary vertices. The set of boundary vertices of a cluster $C$ is denoted $\partial C$. At the highest level its root is a single cluster containing all of $T$. A non-boundary vertex of the subset $S\subset V$ may not be adjacent to a vertex of $V\setminus S$.
A cluster with two boundary vertices is called a \emph{path cluster}, and other clusters are called \emph{leaf clusters}.
Any internal node is formed by the merged union of its (at
most two) children. 
All operations on the top tree are implemented by
first breaking down part of the old tree from the top with
$\Oo(\log n)$ calls to a \emph{split} operation, end then building
the new tree with $\Oo(\log n)$ calls to a \emph{merge} operation.  
(This was proven by Alstrup et al. in \cite{DBLP:journals/talg/AlstrupHLT05}.)

The operation \emph{expose} on the top tree takes one or two vertices and makes them boundary of the root cluster. 
%
We may even expose any constant number of vertices, 
and then, instead of having one top tree with the cluster containing all of $T$ in the root, we obtain a constant number of top trees with those vertices  
exposed.


\begin{definition}
  An \emph{embedded top tree} is a top tree over an embedded tree $T$
  with the additional property that whenever it is merging $\{C_i\}$
  to form $C$, for each pair $C_i,C_{i+1}$ there exist edges $e_i\in
  C_i$ and $e_{i+1}\in C_{i+1}$ that are adjacent in the cyclic order
  around the common boundary vertex.
\end{definition}

\begin{definition}
  Given an embedded graph $G$ and a spanning tree $T$ of $G$, an
  \emph{extended embedded top tree} of $T$ is an embedded top tree over
  the embedded tree $T'$ defined as follows:
  \begin{itemize}
  \item The vertices of $T'$ are the vertices of $G$, plus one extra
    vertex for each corner of $G$ and two extra vertices for each
    non-tree edge (one for each end vertex of the edge).
  \item The edges of $T'$ are the edges of $T$, together with edges
    connecting each vertex $v$ of $G$ with the vertices representing
    the corners and non-tree edges incident to $v$.
  \item The cyclic order around each vertex is the order inherited 
    from the embedding of $G$.
  \end{itemize}
\end{definition}

\begin{observation}\label{obs:segments}
  In any extended embedded top tree over a tree $T$, path clusters
  consist of the edges and corners from two segments of $\EET(T)$, and
  leaf clusters consist of the edges and corners from one segment of
  $\EET(T)$.
\end{observation}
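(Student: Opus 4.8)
The plan is to prove the two claims simultaneously by induction on the merge structure of the extended embedded top tree, carrying along a strengthened invariant. For a cluster $C$ I would track not just \emph{how many} maximal segments of $\EET(T)$ it occupies, but also \emph{where} the endpoints of those segments sit: namely that (i) the edges of $C$ incident to each boundary vertex $b\in\partial C$ are cyclically consecutive in the embedding around $b$, and (ii) the edges and corners of $C$ form exactly $\max(1,|\partial C|)$ maximal segments of $\EET(T)$, with the two ends of each segment located at boundary vertices of $C$ (at the corner just outside the consecutive arc of $C$-edges at such a vertex). Since clusters have at most two boundary vertices, (ii) gives one segment for a leaf cluster (which has at most one boundary vertex, the whole-tree root being the degenerate case in which the single segment is the entire tour) and two segments for a path cluster, which is exactly Observation~\ref{obs:segments}.

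The cleanest way to see (ii), given (i), is the fat-tree picture underlying $\EET(T)$. Thickening $T'$ into a ribbon makes it a topological disk $D$ whose boundary walk, read in a fixed orientation, is precisely $\EET(T)$: each edge contributes its two sides and each corner the arc of the incident vertex-disk between two consecutive edges (this is the content of Definition~\ref{def:EET}). A cluster $C$, being a connected subtree, thickens to a sub-disk $D_C\subseteq D$; the part of $\partial D_C$ glued to $T'\setminus C$ lies entirely at boundary vertices, and by (i) this gluing happens along exactly one arc per boundary vertex (the non-$C$ edges at $b$ form the complementary, hence also consecutive, arc). The free boundary $\partial D_C\cap\partial D$ therefore splits into exactly $|\partial C|$ arcs, and these arcs are the maximal segments of $C$ in $\EET(T)$; when $|\partial C|=0$ the whole boundary circle is free, giving the single whole-tour segment.

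It then remains to establish the invariant, which I would do by induction on merges. The base case is a single edge of $T'$: its one edge at each endpoint is trivially consecutive, a path-edge occupies its two $\EET(T)$ occurrences as two length-one segments, and a pendant (corner or non-tree-edge) edge occupies a single segment at its unique boundary vertex. For the inductive step, when a merge internalizes a shared boundary vertex $b$ of consecutive children $C_i,C_{i+1}$, the embedded-top-tree property supplies edges $e_i\in C_i$ and $e_{i+1}\in C_{i+1}$ adjacent in the cyclic order around $b$; hence the only item of $\EET(T)$ between the segment of $C_i$ ending at $b$ and the segment of $C_{i+1}$ beginning at $b$ is the corner at $b$ between $e_i$ and $e_{i+1}$, which is now an internal corner of $C$ and thus a $C$-item, so the two segments fuse into one. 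Chaining this over the children meeting at $b$ fuses their consecutive arcs into a single consecutive arc (re-establishing (i)) and shows the segment count behaves as $\max(1,|\partial C|)$.

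I expect the main obstacle to be bookkeeping rather than conceptual. One must pin down exactly which corners at a boundary vertex are counted as belonging to $C$ — in particular the ``transition'' corners between $C$-edges and non-$C$-edges — so that the free boundary arc of $D_C$ matches ``the edges and corners of $C$'' on the nose. One must also promote the merge definition's \emph{pairwise} adjacency condition to the full consecutiveness claim (i), by arguing that the children meeting at $b$ are merged in cyclic order. Finally the $T$-versus-$T'$ correspondence — that the Euler tour of $T'$ with its pendant corner and non-tree-edge vertices reproduces $\EET(T)$ with each corner visited once — needs to be stated carefully, but it is already the content of Definition~\ref{def:EET} and introduces no new idea.
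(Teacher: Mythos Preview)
The paper does not prove this statement at all --- it is stated as a bare Observation and left to the reader's geometric intuition. Your inductive argument with the strengthened invariant (cyclic consecutiveness of the cluster's edges at each boundary vertex) is correct and is the natural way to make the observation rigorous; the ribbon picture is exactly the right geometric model. The bookkeeping you flag is indeed minor. For the consecutiveness promotion: two disjoint cyclically-consecutive arcs at $b$ with an adjacency between them must join into a single arc, since the adjacent edges are forced to be arc-endpoints; this handles the binary merge, and chaining along $C_i,C_{i+1}$ handles the non-binary case. For the $T$-versus-$T'$ correspondence, the cleanest phrasing is that the oriented Euler tour of $T'$, with each pendant edge's out--vertex-arc--in triple collapsed to the single corner or non-tree-edge occurrence it represents, is precisely $\EET(T)$; a subtree of $T'$ occupying $k$ arcs of the fat-$T'$ boundary then projects to $k$ segments of $\EET(T)$.
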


\begin{lemma}
  We can maintain an extended embedded top tree over $T$ using
  $\Oo(\log n)$ calls to merge and split per update.
\end{lemma}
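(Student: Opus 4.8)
The plan is to reduce the claim to the standard result of Alstrup et al.~\cite{DBLP:journals/talg/AlstrupHLT05}, quoted above, that an unconstrained top tree is maintained under $\link$, $\cut$, and $\expose$ using $\Oo(\log n)$ merges and splits each, and then to argue that neither the passage from $T$ to the extended tree $T'$ nor the embedding constraint destroys this bound. First I would note that $T'$ still has $\Oo(n)$ vertices: it adds one pendant leaf per corner and two per non-tree edge, and a connected plane graph has $\Oo(n)$ edges and hence $\Oo(n)$ corners. Next I would check that each graph update touches $T'$ only locally. The operations $\insertrm$ and $\remove$ create or delete $\Oo(1)$ pendant leaves (the corner- and non-tree-edge vertices whose incidences actually change) together with at most one tree-edge swap (a $\cut$ followed by a $\link$, the replacement being identified via the co-tree when a tree edge is removed); $\vertexjoin$ and $\vertexsplit$ are a constant number of $\link$/$\cut$ operations; and a $\flip$ is the reversal of a contiguous stretch of the structure. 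Each amounts to $\Oo(1)$ calls to $\link$, $\cut$, and $\expose$ on $T'$, so by the base result it costs $\Oo(\log n)$ merges and splits---\emph{provided} every merge can be carried out in an embedding-respecting way.

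Enforcing the embedded property is where the real work lies, and the lever is Observation~\ref{obs:segments}: every cluster is the content of one or two segments of $\EET(T)$, so an embedded top tree is precisely a balanced binary decomposition of the cyclic tour $\EET(T)$ that is compatible with the tree structure. Concretely, the clusters having a given vertex $v$ on their boundary occupy a cyclically contiguous family of arcs in the rotation at $v$, and the embedded merge rule forces us to combine only arcs that are adjacent in that rotation. I would therefore maintain, on top of the usual top-tree levels, a balanced binary tree over the rotation at each vertex, and perform every merge according to these local trees. The point to verify is that whenever the Alstrup et al.\ procedure is free to combine a pair of clusters sharing a boundary vertex, an embedding-respecting such pair is also available; this holds because, once the children at a boundary vertex are listed in rotation order, combining cyclically adjacent pairs bottom-up is itself a legal balanced schedule, and adjacency is the only relation the merge needs.

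The main obstacle is the interaction of this constraint with high-degree vertices. A single $\link$, or the insertion of a non-tree edge, raises the degree of $v$ and inserts one new arc into the rotation at $v$; the danger is that honoring cyclic adjacency forces a re-merge of all $\Theta(\deg v)$ clusters around $v$. I would resolve this exactly as one resolves an insertion into a balanced search tree: the new arc enters the local balanced tree at $v$ as a leaf, and only the $\Oo(\log \deg v) = \Oo(\log n)$ ancestor clusters on its root path need to be split and re-merged, leaving the remaining arcs untouched. Deletions and the rebalancing rotations are handled symmetrically, each step costing $\Oo(1)$ merges and splits. Composing this local work with the global $\Oo(\log n)$ top-tree path gives $\Oo(\log n)$ merges and splits in total.

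Finally, for $\flip$ I would combine Lemma~\ref{lem:corner-euler-tour} with the segment view: a flip reverses the rotation throughout a subgraph, which by Observation~\ref{obs:segments} is the reversal of the one or two $\EET(T)$-segments making up the affected clusters. This is implemented with lazy reversal bits on clusters, so a flip reduces to exposing the relevant boundary, splitting out the affected clusters, toggling their orientation, and merging back---again $\Oo(\log n)$ merges and splits. I expect the delicate bookkeeping to lie in keeping the per-vertex balanced trees consistent under repeated links, cuts, and orientation toggles, rather than in any single step.
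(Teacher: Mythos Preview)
Your approach is sound but takes a considerably more hands-on route than the paper. The paper's proof is four lines: it replaces each vertex of $T'$ by its \emph{chain} (a path of degree-$\leq 3$ vertices, one per incident edge, laid out in rotation order, as in Italiano et al.~\cite{DBLP:conf/esa/ItalianoPR93}), obtaining a ternary tree $\widehat{T'}$, and then invokes the standard reduction of top trees to Frederickson's topology trees for $\widehat{T'}$ from Alstrup et al.~\cite{DBLP:journals/talg/AlstrupHLT05}. The punchline is that because the chain lists the incident edges in embedding order, the top tree produced by this reduction is automatically an embedded top tree --- no separate argument is needed. Your per-vertex balanced binary trees over the rotation are, in effect, a re-invention of the chain substitution; once one cites the topology-tree machinery, both the local balancing at high-degree vertices and its interaction with the global $\Oo(\log n)$ schedule are handled for free, so your ``composing'' step and the attendant bookkeeping become unnecessary. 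Your discussion of $\flip$ also goes beyond what this lemma requires: the paper defers flip support to Section~\ref{sec:flip}, where it is handled by orientation bits on clusters rather than as part of the present maintenance argument.
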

\begin{proof}
  First note that we can easily maintain $T'$.  Now construct the
  ternary tree $\widehat{T'}$ by substituting each vertex by its chain
  (see Italiano et al.~\cite{DBLP:conf/esa/ItalianoPR93}).  Consider
  the top tree for $T'$ made via reduction to the topology tree for
  $\widehat{T'}$ (see Alstrup et
  al.~\cite{DBLP:journals/talg/AlstrupHLT05}).  Since the chain is
  defined to have the vertices in the same order as the embedding,
  this top tree is naturally an embedded top tree. \qed
\end{proof}

For the rest of this paper, we will assume all top trees are extended
embedded top trees, without any further mention.

\begin{observation}\label{obs:exposecorner}
Since corners in $T$ are represented by vertices in $T'$, we automatically get the ability to \emph{expose} corners, giving complete control over which $\EET(T)$ segment is available for information or modification.
\end{observation}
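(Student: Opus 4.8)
The plan is to reduce the claim to a direct application of the top tree's expose operation, using Observation~\ref{obs:segments} to translate between cluster boundaries and $\EET(T)$ segments. The essential structural fact is already baked into the definition of $T'$: every corner $c$ of $G$ appears as a genuine vertex $v_c$ of $T'$, attached as a leaf to its underlying vertex $v$ at exactly its position in the embedding's cyclic order. Since expose operates on any (constant number of) vertices of the underlying tree, and corner-vertices are bona fide such vertices, the operation applies to them verbatim. This is precisely the ``automatic'' ability the observation asserts, so the only content left to verify is the consequence for $\EET(T)$ segments.

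First I would make explicit the correspondence between the Euler tour of $T'$ and the cyclic arrangement $\EET(T)$. Walking the Euler tour of $T'$, each pendant edge leading to a corner-vertex $v_c$ is traversed down and then immediately back up; by the ``cyclic order inherited from the embedding'' clause in the definition of $T'$, this little excursion occurs at exactly the slot where corner $c$ is listed in $\EET(T)$ under the expansion rule of Definition~\ref{def:EET}. Hence each corner-vertex marks one well-defined point of the cyclic tour.

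Next I would invoke Observation~\ref{obs:segments}: the boundary vertices of any cluster are exactly the points delimiting its one or two $\EET(T)$ segments. Consequently, exposing a chosen corner-vertex forces the corresponding point of $\EET(T)$ to become a boundary of the root cluster, i.e.\ a cut point of the top-level segmentation. Exposing the two corners that bound a desired segment then isolates that segment as one arc of a root cluster, making it available for inspection or modification, while exposing a single corner pins one endpoint; varying the exposed corners realizes every segment, which is the asserted ``complete control.''

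The main obstacle is the alignment claimed in the first step: one must check that splicing corner-vertices in as pendants, in embedding order, leaves undisturbed the relative order in which edges and corners are met along the tour, so that a split at $v_c$ in $T'$ is literally a split at corner $c$ in $\EET(T)$. This follows from combining the cyclic-order clause of the definition of $T'$ with the expansion rule of Definition~\ref{def:EET}, but it is the one place where the bookkeeping has to be carried out carefully rather than waved through.
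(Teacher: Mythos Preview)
The paper treats this observation as self-evident and gives no proof at all; your proposal is a correct and careful unpacking of exactly the reasoning the paper leaves implicit, namely that corner-vertices of $T'$ are legitimate targets for expose, and that Observation~\ref{obs:segments} translates the resulting root-cluster boundaries into $\EET(T)$ segment endpoints. Your argument is sound and aligns with the paper's intended meaning, so there is nothing to correct.
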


\subsubsection{Slim-path top trees and four-way merges.}

We can tweak the top tree further such that the following property, called \emph{the slim path invariant}, is maintained: 
\begin{itemize}
\item for any path-cluster, all edges in the cluster that are incident to a boundary vertex belong to the cluster path. In other words, for each boundary vertex $b$, there is at most one edge in the cluster that is incident to $b$.
\end{itemize}
This can be done by allowing the following merge:
Given path-clusters $C,C'$ with boundary vertices $\partial C = \{u,v\}$ and $\partial C' = \{v, w\}$, and given two leaf-clusters with boundary vertex $v$, allow the merge of those four, to obtain one path cluster with boundary vertices $\{u,w\}$. For lack of a better word, we call this a \emph{four-way merge}.

\begin{figure}[H]
	\centering
	\includegraphics[width=0.3\linewidth]{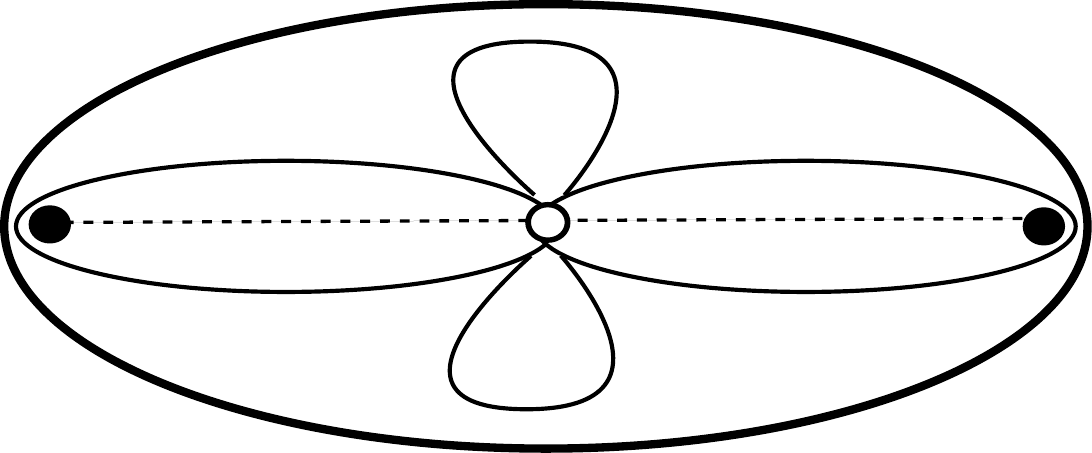}
	\caption{A four-way merge of $2$ paths and $2$ leaves.}
	\label{fig:belatedmerge}
\end{figure}


\noindent We allow the following merges:
\begin{itemize}
	\item Two leaves merge to one,
	\item A path cluster with boundary vertices $u,v$, and a leaf cluster with boundary vertex $v$, merge to form a leaf with boundary vertex $u$, 
	\item Two path clusters with boundary vertices $u, v$, and $v, w$, respectively, and up to two leaves with boundary vertex $v$ make a four-way merge to form a path cluster with boundary vertices $u,v$ (see Figure~\ref{fig:belatedmerge}).
\end{itemize}

Since the slim-path invariant has no restriction on leaves, clearly the first two do not break the invariant. The four-way merge does not break the invariant: If, before the merge, there was only one edge incident to $u$ and only one incident to $w$, this is still maintained after the merge.

\begin{definition}
	A \emph{slim-path top tree} is a top tree which maintains the slim-path invariant by allowing four-way merges.
\end{definition}

\begin{lemma}\label{lem:slimpath}
	We can maintain a slim-path top tree over a dynamic tree with height $\Oo(\log n)$, using $\Oo(\log n)$ calls to $\clustermerge$ and $\clustersplit$ per update.
\end{lemma}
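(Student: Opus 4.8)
The plan is to build directly on the preceding lemma, which already maintains an embedded top tree of height $\Oo(\log n)$ using $\Oo(\log n)$ calls to $\clustermerge$ and $\clustersplit$ per update via the topology tree of $\widehat{T'}$. I would keep this balanced hierarchy as the underlying object and obtain the slim-path top tree from it by a purely local transformation, so that the height and operation-count guarantees transfer essentially unchanged.

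The first step is to pin down exactly when an ordinary merge breaks the slim-path invariant. A path cluster acquires a second edge at a boundary vertex $b$ precisely when a leaf cluster incident to $b$ is merged into a path cluster that still keeps $b$ on its boundary. Using the embedding, the edges incident to a vertex $v$ through which a path passes, other than the two path edges, occupy at most two arcs of the cyclic order around $v$; by Observation~\ref{obs:segments} each such arc is carried by a single leaf cluster, so at most two leaf clusters sit at $v$. This is exactly the input of a four-way merge: two path clusters sharing $v$ together with up to two leaf clusters at $v$.

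The second step is the transformation itself. At each vertex $v$ through which a path passes I would arrange the relevant merges as a chain of at most three binary merges: combine the two leaf clusters at $v$, merge the result into one adjacent path cluster (temporarily producing a non-slim cluster in which $v$ carries more than one edge), and finally merge in the other path cluster so that $v$ becomes internal. I would then replace each such chain by a single four-way merge producing the path cluster $(u,w)$ directly, with $u$ and $w$ each retaining only their single path edge. The intermediate non-slim clusters then never appear as nodes of the slim-path top tree, so the invariant holds everywhere; splits are handled symmetrically by unbundling a four-way node back into its constituent binary splits. Since each four-way operation stands for $\Oo(1)$ binary ones, each update still costs $\Oo(\log n)$ calls to $\clustermerge$ and $\clustersplit$.

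The step I expect to be the main obstacle is the height bound, because a four-way node has up to four children and bundling is harmless only if it is genuinely local. The key point is that the three bundled binary merges form a caterpillar of constant height sitting on a single root-to-leaf branch of the underlying top tree, so collapsing them into one four-way node can only shorten every root-to-leaf path; hence the slim-path top tree inherits the $\Oo(\log n)$ height of the topology tree of $\widehat{T'}$. The care needed is in verifying, using the chain expansion at high-degree vertices together with the cyclic ordering, that the merges absorbing $v$'s leaves really do occur as such a contiguous constant-height chain, rather than being scattered across distant parts of the hierarchy.
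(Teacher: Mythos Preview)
Your overall strategy---reduce to the ordinary embedded top tree and show each ordinary merge corresponds to $\Oo(1)$ slim-path merges---is the same as the paper's. The difference is in how the correspondence is realized, and your version leaves open precisely the step you flag as the obstacle.

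You propose to locate, within the existing ordinary top tree, the chain of binary merges that absorb the leaf clusters at $v$ and then eliminate $v$, and to bundle that chain into one four-way merge. The problem is that in an ordinary top tree these merges need not be contiguous: a leaf cluster $L$ with boundary $v$ may be merged into a path cluster $P$ (boundary $\{u,v\}$) at some level, and then many unrelated merges (at $u$, or involving other leaves) may occur before $P\cup L$ finally meets a path cluster $P'$ with boundary $\{v,w\}$. So the ``caterpillar of constant height'' assumption is not guaranteed by the topology-tree construction, and your final paragraph correctly identifies this as unproven.

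The paper sidesteps the issue rather than verifying it. Instead of bundling merges that already exist, it \emph{delays} the offending leaf-into-path merges: each ordinary path cluster is represented in the slim-path tree as a slim path cluster together with up to four still-separate leaf clusters (one on each side of each boundary vertex). The leaves at $v$ are merged in only at the moment the ordinary top tree performs the path--path merge that eliminates $v$; that single ordinary merge then becomes two slim-path levels (pairwise merge of leaves at $v$, then the four-way merge). Every other ordinary merge corresponds to a single slim-path merge. This gives height at most twice the ordinary height and $\Oo(\log n)$ merges/splits per update, without ever needing the contiguity property you were worried about.
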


\begin{proof} 
	Proof by reduction to top trees (see figure \ref{fig:fourway}). For each path cluster in the ordinary top tree, keep track of a slim path cluster, and up to four internal leaves; two for each boundary vertex, one to each side of the cluster path. In the slim-path top tree, this would correspond to four different clusters, but this doesn't harm the asymptotic height of the tree. When merging two path clusters in the ordinary top tree, with paths $u\tto v$ and $v\tto w$, this simply correspond to two levels of merges in the slim-path top tree: First, the incident leaf clusters are merged, pairwise, and then, we perform a four-way merge. Other merges in the ordinary top tree correspond exactly to merges in the slim-path top tree. Thus, each merge in the ordinary top tree corresponds to at most $2$ levels of merges in the slim-path top tree, and the maintained height is $\Oo(\log n)$.
	\qed
\end{proof}

\begin{figure}[H]
	\centering
	\includegraphics[width=0.9\linewidth]{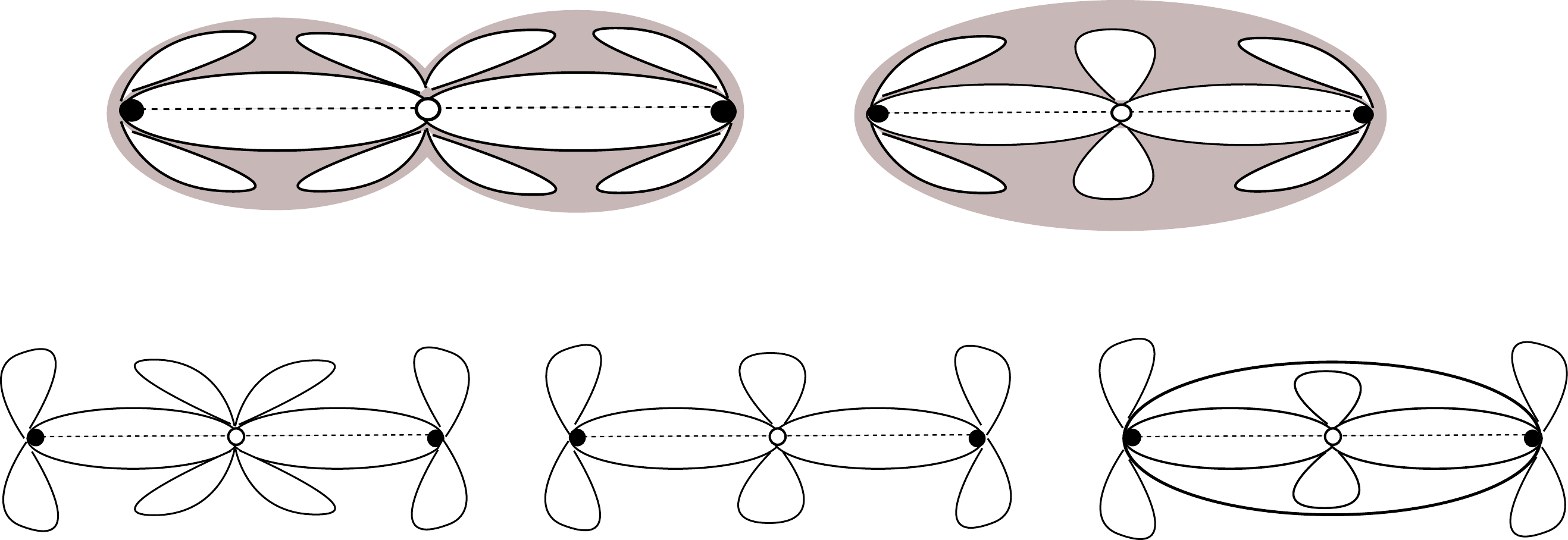}
	\caption{One merge in ordinary top trees corresponds to two merges in slim-path top trees.}
	\label{fig:fourway}
\end{figure}

Note that four-way merges and the slim-path invariant are not necessary. One could simply maintain, within each path cluster, information about the internal leaf clusters incident to each boundary vertex; one to each side of the cluster path. We present four-way merges because they provide some intuition about this particular form of usage of top trees.	

\begin{figure}[H]
	\centering
	\includegraphics[width=0.5\linewidth]{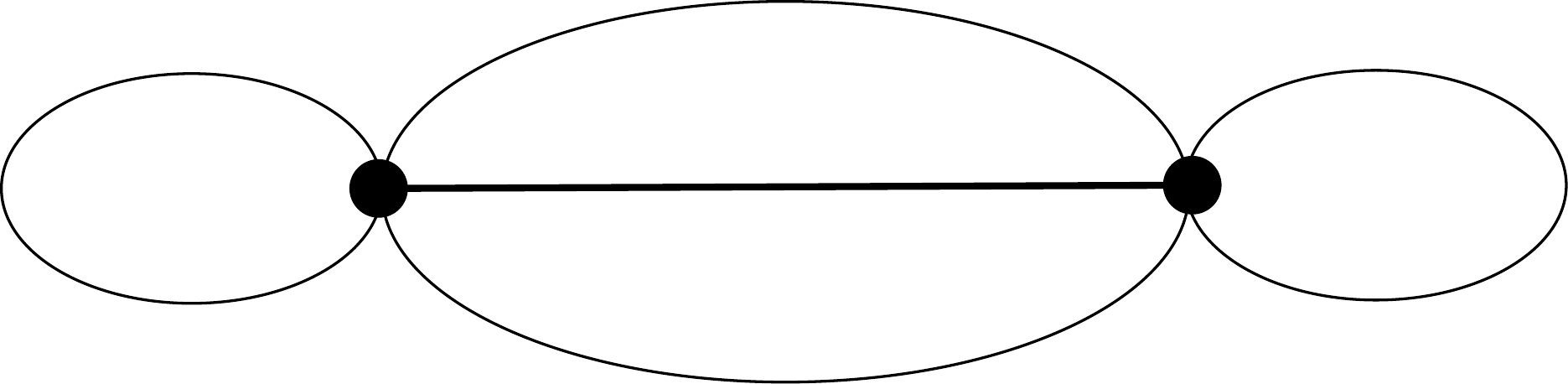}
	\caption{In a slim-path top tree, expose($u,v$) exposes a path and two leaves.}
	\label{fig:exposeslimpath}
\end{figure}

We also note that the expose($u,v$)-operation is different in a slim-path top tree, whenever $u$ or $v$ is not a leaf in the tree (see Figure~\ref{fig:exposeslimpath}.) Instead of returning a top tree which has at its top a path cluster with path $u\tto v$, the operation returns up to three top trees: one  with at its root a path cluster with path $u\tto u$, and, up to two top trees with leaf clusters at their roots, one for each exposed non-leaf vertex.


\begin{figure}[H]
\centering
\includegraphics[width=0.7\linewidth]{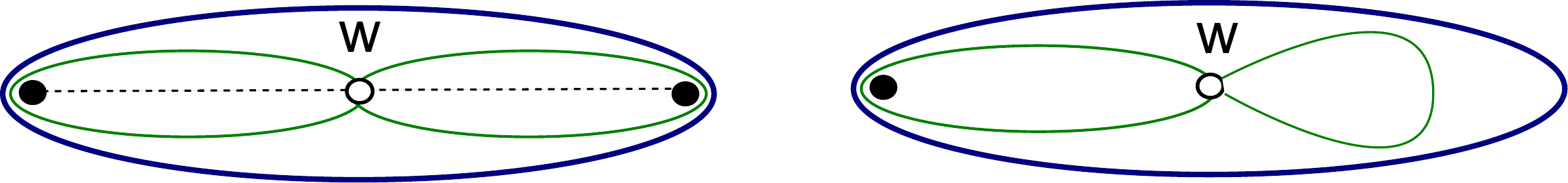}
\caption{The vertex $w$ is a boundary vertex of the green clusters, but not of their blue parent clusters.}
\label{fig:clustermerges}
\end{figure}

\begin{lemma}\label{lem:bnode-discard-segments}
  Whenever a merge of two clusters in the slim-path top tree causes a vertex $w$
  to stop being a boundary vertex (see Figure~\ref{fig:clustermerges}), all corners incident to $w$ are
  contained in one or two $\EET(T)$ segments. 
  These segments will be sub-segments of the (one or two) segments corresponding to the parent cluster $C$ (blue in Figure~\ref{fig:clustermerges}), and will not contain any corners incident to the (one or two) boundary vertices of $C$.
\end{lemma}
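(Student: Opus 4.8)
The plan is to identify $w$ with the shared boundary vertex that the merge eliminates, and then to argue separately for the two merge types in a slim-path top tree that can make a vertex interior. The first is the merge of a path cluster with boundary $\{u,w\}$ and a leaf cluster with boundary $\{w\}$, which produces a leaf cluster with boundary $\{u\}$; here $w$ becomes interior and the parent $C$ is a \emph{leaf} cluster. The second is the four-way merge of two path clusters with boundaries $\{u,w\}$ and $\{w,z\}$ together with up to two leaves at $w$, which produces a \emph{path} cluster with boundary $\{u,z\}$. The remaining merge, two leaves into a leaf, keeps its shared vertex on the boundary and only eliminates a vertex at the root, where $C$ is all of $T$ and the claim is trivial. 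The step common to both nontrivial cases is to note that, once $w$ stops being a boundary vertex, it is an interior vertex of $C$; by the defining property that a non-boundary vertex of a cluster is not adjacent to any vertex outside it, every $T'$-edge incident to $w$ lies in $C$. Since the corners and non-tree edges of $G$ at $w$ are exactly the $T'$-leaves attached to $w$, this means every corner incident to $w$ lies in $C$. By Observation~\ref{obs:segments}, $C$ is one $\EET(T)$ segment (if a leaf cluster) or two $\EET(T)$ segments (if a path cluster), so all corners of $w$ already lie inside these one or two segments; it remains to localize them.

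For the leaf-cluster case I would root the subtree of $C$ at its single boundary vertex $u$, so that the $\EET(T)$ segment of $C$ is precisely the contiguous Euler interval of that rooted subtree. All occurrences of $w$ in the Euler tour lie inside the sub-interval spanned by $w$'s own subtree, and all corners of $w$ sit at these occurrences; hence every corner of $w$ is contained in this single contiguous sub-segment. Because $u$ is a proper ancestor of $w$, its occurrences, and thus its corners, lie outside $w$'s subtree interval, so the chosen sub-segment contains no corner of the boundary vertex, as required.

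For the path-cluster case, $w$ lies strictly in the interior of the cluster path $u\tto z$. The two cluster-path edges at $w$ split the cyclic order around $w$ into two arcs, and the two $\EET(T)$ segments of $C$ correspond to traversing the two sides of the path; hence the corners of $w$ on one arc fall into one segment and those on the other arc into the other segment. In each segment the corners of $w$ occur only during the passage through $w$, which is a contiguous sub-interval lying strictly between the two ends of the segment. Finally, the slim-path invariant guarantees that the only edge of $C$ incident to $u$ (respectively $z$) is the cluster-path edge, so $u$ and $z$ have no corner in the interior of either segment; their corners sit at the very ends, outside the two sub-segments carrying the corners of $w$. This yields the required two sub-segments avoiding the boundary vertices' corners.

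The main obstacle is the placement bookkeeping in the last two steps: proving that the corners of $w$ genuinely occupy contiguous sub-intervals of $C$'s segments, and that the boundary vertices' corners fall strictly outside them. Both rely on the embedded structure, namely that the subtrees hanging off $w$ occupy contiguous arcs of the cyclic order and hence contiguous Euler sub-intervals, together with the slim-path invariant, which pins each boundary vertex's corners to the ends of the segment.
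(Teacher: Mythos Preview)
Your proposal is correct and rests on the same two ingredients as the paper's proof: Observation~\ref{obs:segments} and the slim-path invariant. You work harder than the paper does, carrying out an explicit case analysis on merge types and identifying minimal sub-segments (the $w$-subtree interval in the leaf case, the two passages through $w$ in the path case), whereas the paper simply notes that all corners of $w$ lie in $C$'s one or two segments and that the slim-path invariant forces any corner of a boundary vertex $v\in\partial C$ to be one of the delimiting corners adjacent to the unique tree-edge of $C$ at $v$.
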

\begin{proof}
	There are two statements to prove. Firstly, all corners incident to $w$ are contained in the one or two $\EET(T)$ segments described in Observation~\ref{obs:segments}. Secondly, it follows from the slim-path invariant that the delimiting corners for any boundary vertex $v\in\partial C$ are exactly those incident to the unique tree-edge in $C$ incident to $v$.
\qed
\end{proof}

The following term is defined for both top trees and slim-path top trees and is very convenient:
\begin{definition}\label{def:rootpath} 
	Given a vertex $v$, let $C_v$ denote the deepest cluster in the top tree where $v$ is not a boundary vertex. The \emph{root path} of $v$ in the top tree is the path from $C_v$ to the root in the top tree. For a corner $c$ incident to a vertex $v$, the \emph{root path} of $c$ is the root path of $v$.
\end{definition}

\subsubsection{Deactivation counts}

Now suppose we associate a (lazy) \emph{deactivation count} with each
corner. The deactivation count is set to $0$ before we start building the top tree. Define the
$\clustermerge$ operation on the top tree such that whenever a merge
discards a boundary vertex we \emph{deactivate} all corners on the at
most two segments of $\EET(T)$ mentioned in Lemma~\ref{lem:bnode-discard-segments} by increasing that count
(and define the $\clustersplit$ operation on the top tree to
reactivate them as necessary).  When the top tree is complete, the
corners that are still \emph{active} (have deactivation count $0$) are
exactly those incident to the boundary vertices of the root of the
top tree.  These boundary vertices are controlled by the $\expose$
operation on the top tree and changing the boundary vertices require
only $\Oo(\log n)$ merges and splits, so we have now argued the
following:
\begin{lemma}
  We can mark/unmark all corners incident to vertices $u$ and $v$ by
  increasing and decreasing the deactivation counts on 
  $\Oo(\log n)$ segments of the extended Euler tour.
\end{lemma}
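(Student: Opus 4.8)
The plan is to reduce the entire operation to a single $\expose$ on the slim-path top tree, using the deactivation-count machinery just set up. Recall the invariant established immediately above the statement: once the top tree has been built, a corner is \emph{active} (has deactivation count $0$) if and only if its incident vertex is a boundary vertex of the root cluster. Consequently, to mark exactly the corners incident to $u$ and $v$ it suffices to make $u$ and $v$ the boundary vertices of the root, i.e.\ to call $\expose(u,v)$: afterwards the active corners are precisely those incident to $u$ or $v$, which is exactly what ``marked'' should mean. To unmark, I would simply undo the $\expose$, returning the top tree -- and hence every deactivation count -- to its previous state.

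First I would recall that $\expose$ is realized by $\Oo(\log n)$ calls to $\clustersplit$, which tear down the current tree from the top, followed by $\Oo(\log n)$ calls to $\clustermerge$, which rebuild it with the new boundary; Lemma~\ref{lem:slimpath} guarantees this bound survives the passage to slim-path top trees up to a constant factor. The actual content of the proof is then an accounting argument: each such split and merge must change only $\Oo(1)$ segments of $\EET(T)$. This is where Lemma~\ref{lem:bnode-discard-segments} enters. A $\clustermerge$ makes at most a bounded number of vertices stop being boundary vertices -- in the four-way merge it is the single shared vertex, and in the other two merge types it is again one vertex -- and by Lemma~\ref{lem:bnode-discard-segments} each discarded vertex $w$ contributes at most two $\EET(T)$ segments whose corners are deactivated by \emph{increasing} their counts. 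Dually, each $\clustersplit$ reactivates exactly those segments by \emph{decreasing} the counts, so it too touches only $\Oo(1)$ segments. Summing over the $\Oo(\log n)$ splits and merges performed by $\expose$ yields $\Oo(\log n)$ segment-wise increments and decrements in total, matching the claimed bound, and giving the ``increasing and decreasing'' of the statement.

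The one point that needs care -- and the main obstacle -- is to confirm that this net sequence of increments and decrements produces exactly the intended marking rather than corrupting counts on corners of other vertices. This is supplied by the second half of Lemma~\ref{lem:bnode-discard-segments}: when a vertex $w$ is discarded in the merge forming a cluster $C$, the affected segments lie inside the segments of $C$ and contain no corner incident to a boundary vertex of $C$. Since $u$ and $v$ are boundary vertices of the root, and hence of every cluster containing them, their corners are never deactivated by a merge, whereas every other vertex is discarded along the way and its corners deactivated; the splits of $\expose$ then reactivate precisely the corners of the vertices being promoted to the new root boundary. Thus the counts end at $0$ exactly on the corners of $u$ and $v$, and undoing the $\expose$ restores every count, so unmarking costs the same $\Oo(\log n)$ segment operations. \qed
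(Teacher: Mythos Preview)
Your proposal is correct and follows essentially the same approach as the paper: the paper's justification is the paragraph immediately \emph{preceding} the lemma, which argues that $\expose(u,v)$ costs $\Oo(\log n)$ merges and splits, each of which (via Lemma~\ref{lem:bnode-discard-segments}) touches at most two $\EET(T)$ segments. Your write-up simply unpacks this with more care, including the correctness discussion in your final paragraph, which the paper leaves implicit. One tiny inaccuracy: in the leaf--leaf merge no boundary vertex is discarded at all, so your ``again one vertex'' overcounts there, but this only helps the bound.
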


What we really want, is to be able to search for the marked 
corners in $\cotree{T}$, so instead of storing the counts (even 
lazily) in the top tree over $T$, we will store them in a top 
tree over $\cotree{T}$.
Again, each cluster in this top tree covers one or two segments 
of the extended Euler tour. 
For each segment $S$ we keep track 
of 
a number $\Delta(S)$ such that for every corner, $c$, 
the deactivation count for $c$ equals the sum of $\Delta(S)$ over all segments $S$ containing $c$ in the root path of $c$ .
\begin{align*}
\text{deactivation count}(c) = \sum_{\substack{\text{cluster } C\in \text{root path}(c),\\\text{segment } S\in C \,\mid \, c\in S}} \Delta(S)
\end{align*}
In addition, for each boundary vertex (face) $f\in\partial C$, we keep track the value $c_{\min}(C,f)$, defined as the minimum deactivation count over all corners in the cluster $C$ that are incident to $f$.

To update the
deactivation counts of an arbitrary segment $S$, all we need to do is
modify the $\Oo(\log n)$ clusters that are affected, which can be
done in $\Oo(\log n)$ time, leading to
\begin{lemma}\label{lem:7}
  We can maintain a top tree over $\cotree{T}$ that has
  $c_{\min}(C,f)$ for each boundary vertex $f\in\partial C$ where $C$ is a cluster in $\cotree{T}$,
  in $\Oo(\log^2 n)$ time per change in the set of (at most two) marked vertices.
\end{lemma}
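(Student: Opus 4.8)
The plan is to read Lemma~\ref{lem:7} as a lazy \emph{range-add / grouped-minimum} maintenance problem transported onto the top tree over $\cotree{T}$, and to solve it with the standard lazy-offset technique adapted to clusters. By Observation~\ref{obs:segments} each cluster owns one or two $\EET(T)$ segments, and by Lemma~\ref{lem:corner-euler-tour} these segments are the same objects in $T$ and in $\cotree{T}$, so I can attach each increment $\Delta(S)$ to the cluster owning $S$ and think of the deactivation count of a corner $c$ as a sum of cluster values along the root path of $c$. The essential decision is to store each $c_{\min}(C,f)$ not as an absolute minimum but \emph{relative} to the increments sitting on the proper ancestors of $C$: I maintain the invariant that the true minimum deactivation count over the corners incident to $f$ inside $C$ equals $c_{\min}(C,f)$ plus the sum of the $\Delta$'s contributed by the clusters strictly above $C$ on the root path of $f$. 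This relative encoding is exactly what lets a single range update leave all but $\Oo(\log n)$ of the stored values untouched.

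Next I would pin down the rule that recomputes $c_{\min}(C,f)$ from the children of $C$, and check it costs $\Oo(1)$. Inside $C$ the corners incident to a fixed boundary face $f$ are partitioned among the (constantly many) children, and within $C$ they fall into one or two of $C$'s own segments; since a segment carries a single increment applied uniformly to all of its corners, the minimum composes as
\[
  c_{\min}(C,f) \;=\; \min_{i}\bigl( \delta_i(C,f) + c_{\min}(C_i,f) \bigr),
\]
where the minimum ranges over the children $C_i$ holding a corner incident to $f$ and $\delta_i(C,f)$ is the increment of the segment of $C$ that covers those corners. Because each cluster has at most two boundary faces and at most two segments, this is a constant number of additions and comparisons. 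The same rule, run in reverse, is what a $\clustersplit$ uses to push an offset down, so $c_{\min}$ can be kept consistent through the $\Oo(\log n)$ merges and splits of a temporary $\expose$.

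The cost then falls out of two observations. First, a single segment increment can be realized in $\Oo(\log n)$ time: I expose the two delimiting corners of the segment (Observation~\ref{obs:exposecorner}), which isolates the target $\EET$ segment into $\Oo(\log n)$ affected clusters, add the increment to those clusters' $\Delta$, and recompute $c_{\min}$ on the union of their root paths, which has size $\Oo(\log n)$. Second, by the preceding lemma a change in the marked set $\{u,v\}$ amounts to increasing or decreasing the deactivation counts on only $\Oo(\log n)$ segments of the extended Euler tour. Composing the two gives $\Oo(\log n)\cdot\Oo(\log n)=\Oo(\log^2 n)$ time per marking change, as claimed.

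The main obstacle is the correctness of the relative invariant together with the recompute rule, rather than the time bound. The delicate point is that the deactivation count is defined as a sum over the root path of $f$, i.e.\ it starts only at $C_f$ and ignores the clusters below it in which $f$ is still a boundary vertex; I must therefore argue that no increment is ever charged to those lower boundary segments, which is precisely what Lemma~\ref{lem:bnode-discard-segments} guarantees (the segments on which corners are deactivated never contain the delimiting corners of a boundary face). Granting that, the partition of $f$'s corners across the at most two segments of each cluster and across its children makes the minimum telescope correctly, and the remaining verification — that merges and splits preserve the relative invariant — is routine bookkeeping.
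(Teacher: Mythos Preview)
Your proposal is correct and follows essentially the same approach as the paper: expose $u,v$ in the primal top tree to generate $\Oo(\log n)$ $\EET$ segments, and for each segment expose its delimiting corners in the dual top tree to apply the increment at the root, with $c_{\min}$ maintained through the resulting $\Oo(\log n)$ merges and splits. Your treatment is in fact more explicit than the paper's about the relative encoding of $c_{\min}$ and the role of Lemma~\ref{lem:bnode-discard-segments} in ensuring no increment is charged to a still-boundary face; the paper states the merge rule simply as $c_{\min}(C,f)=\min_i c_{\min}(C_i,f)$ and the split rule as propagating $\Delta$ down, leaving the offset bookkeeping implicit.
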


\begin{proof} To mark a given set of vertices, expose them in the top tree for $T$. For each merge of clusters $\{C_i\}$ into a new cluster $C$ in the top tree over $T$, where $\partial C\neq\bigcup\partial C_i$, consider the at most two segments of $\EET(T)$ associated with $C$. For each such segment, $S$, expose the delimiting corners of $S$ in the top tree over $\cotree{T}$ (see Observation~\ref{obs:exposecorner}), and increase $\Delta(R)$ for the relevant segment $R$ of the root cluster (in the dual top tree). 
	
For each merge in the dual top tree, where $\{C_i\}$ merge to form $C$, and for each boundary vertex $f\in \partial C$, set $c_{\min}(C,f)=\min\{c_{\min}(C_i,f) \,\mid\, f\in \partial C_i\}$.

For each split of a cluster $C$ into clusters $\{C_i\}$, and each segment $S$ in $C$ corresponding to segments $\{S_i\}$ with $S_i$ in $C_i$, propagate
$\Delta(S)$ down by setting $\Delta(S_i) = \Delta(S_i)+\Delta(S)$.

Since expose in the primal top tree can be implemented with $\Oo(\log n)$ calls to $\clustermerge$ and $\clustersplit$, this will yield at most $\Oo(\log n)$ calls to expose in the dual top tree, and thus, in total, the time for marking a set of vertices is $\Oo(\log ^2 n)$. \qed
\end{proof}

\begin{observation}\label{obs:incidence}
This is enough for, given a face $f$ and a vertex $u$, checking whether $f$ is incident to $u$ in $\Oo(\log ^2 n)$ time.
\end{observation}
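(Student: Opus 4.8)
The plan is to treat this as a direct corollary of Lemma~\ref{lem:7}, obtained by marking the single vertex $u$ and then reading off one $c_{\min}$ value at a suitably exposed face. The key observation is that, after marking $u$, the deactivation counts encode incidence to $u$: a corner has deactivation count $0$ precisely when it is incident to $u$, and every other corner has a strictly positive count. Hence checking whether $f$ is incident to $u$ reduces to checking whether $f$ has any corner of deactivation count $0$.

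Concretely, I would first mark the single vertex $u$ using the machinery of Lemma~\ref{lem:7} with the exposed set taken to be $\{u\}$ alone. This costs $\Oo(\log^2 n)$ time and leaves the top tree over $\cotree{T}$ in a state where the active corners (those of deactivation count $0$) are exactly the corners incident to $u$, with each cluster $C$ carrying the value $c_{\min}(C,f)$ for each boundary face $f\in\partial C$.

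Next I would expose the face $f$ as a boundary vertex of the root cluster $C_{\text{root}}$ of the top tree over $\cotree{T}$; by Observation~\ref{obs:exposecorner} and the standard behaviour of $\expose$, this costs $\Oo(\log n)$ calls to $\clustermerge$ and $\clustersplit$. Since the root cluster contains every corner of $G$, the quantity $c_{\min}(C_{\text{root}},f)$ is the minimum deactivation count taken over \emph{all} corners incident to $f$. It therefore equals $0$ if and only if some corner is incident to both $f$ and $u$, which is exactly the condition that $f$ is incident to $u$. Reporting whether $c_{\min}(C_{\text{root}},f)=0$ answers the query, and the total time is $\Oo(\log^2 n)$ for the marking plus $\Oo(\log n)$ for the expose, i.e.\ $\Oo(\log^2 n)$.

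The step I expect to be the main (if minor) obstacle is verifying that the lazy $\Delta$ offsets are handled correctly while exposing $f$: during the $\Oo(\log n)$ splits needed to bring $f$ to the root, the pending $\Delta$ values must be pushed down to children before $c_{\min}$ is recombined by the intervening merges, exactly as in the split/merge rules established in the proof of Lemma~\ref{lem:7}. Because those rules already maintain the invariant that $c_{\min}(C,f)$ is the correct minimum relative to $C$, and because the root cluster has no ancestor with a still-pending $\Delta$, the value read at the root is the true deactivation count, and correctness follows.
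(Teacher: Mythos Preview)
Your proposal is correct and follows essentially the same approach as the paper: expose $u$ in the primal top tree (costing $\Oo(\log^2 n)$), then expose $f$ in the dual top tree (costing $\Oo(\log n)$), and test whether $c_{\min}(R,f)=0$ at the root cluster $R$. Your added discussion of how the lazy $\Delta$ offsets are propagated during the expose of $f$ is more explicit than the paper's version, but it is consistent with the split/merge rules already established in Lemma~\ref{lem:7}.
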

 
\begin{proof}
  Expose $u$ in the primal top tree. This takes $\Oo (\log ^2 n)$ time. Expose $f$ in the dual top tree. This takes $\Oo(\log n)$ time. 
  Then, for the root cluster $R$ which has boundary vertex $f$, $c_{\min}(R,f)=0$ if an only if $f$ and $u$ are incident.
  \qed 
\end{proof}

\subsection{Linkable query}\label{sec:query}

Unfortunately, the $c_{\min}$ and $\Delta$ values discussed in
Section~\ref{sec:marking} are not quite enough to let us find the
corners we are looking for.  We can use them to ask what marked corners
a given face is incident to, but we do not have enough to find
\emph{pairs} of marked corners on opposite sides of the same face on
the co-tree path.

As noted in Lemma~\ref{lem:cotree-cycle-has-result}, for two given vertices $u$ and $v$, there exists a path in the dual tree containing all candidates for a common face. And this path is easily found! Since the dual of a primal tree edge induces a fundamental cycle that separates $u$ and $v$, we may use the path between the dual endpoints $f,g$ of any edge on the primal tree path between $u$ and $v$. Furthermore, once we expose $(f,g)$ in the dual tree, if $f\neq g$, the root will have two EET-segments: the minimum deactivation count of one EET-segment is $0$ if and only if any non-endpoint faces are incident to $v$, the other is $0$ if and only if any are incident to $u$. Checking the endpoint faces can be done (cf. Observation~\ref{obs:incidence}), but to find non-endpoint faces we need more structure.

To just output \emph{one} common face, our solution is for each path cluster in the top tree over the co-tree
to keep track of \emph{a single} internal face $f_{\min}$ on the cluster
path that is incident to minimally deactivated corners on either side
of the cluster path if such a face exists.
For that purpose, we define for any EET-segment $S$ the value $c_{\min}(S)$ to represent the minimal deactivation count of a corner in $S$.

\begin{figure}[H]
	\centering
	\includegraphics[width=0.7\linewidth]{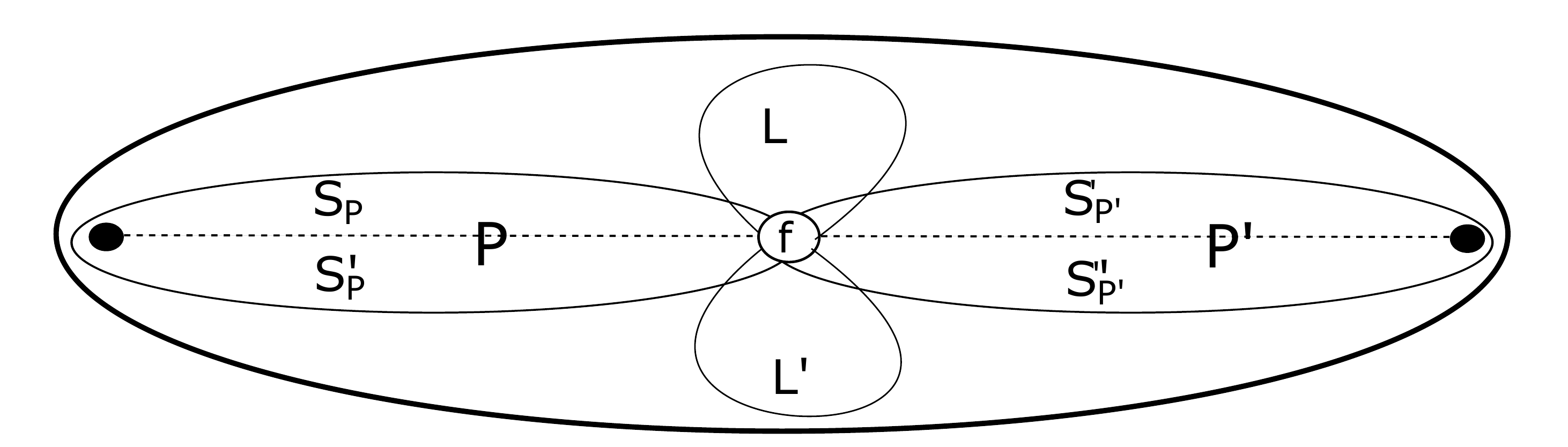}
	\caption{After a four-way cluster merge of $L,L',P,P'$ to $C$, the face $f_{\min}(C)$, if it exists, is found among $f_{\min}(P),f_{\min}(P'),$ and $f$.}
	\label{fig:pathclustermerges}
\end{figure}
\begin{lemma}
  We can maintain a top tree over $\cotree{T}$ that has
  $c_{\min}$ and $\Delta$ values for each EET-segment in each
  cluster and $f_{\min}$ values for each path cluster in
  $\Oo(\log^2 n)$ time per change to the set of (at most two) marked vertices.
\end{lemma}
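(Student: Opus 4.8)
The plan is to build directly on the top tree over $\cotree{T}$ from Lemma~\ref{lem:7}, augmenting each cluster with the two new kinds of data. For the $c_{\min}$ and $\Delta$ values I would use the standard lazy \emph{uniform-add / subtree-minimum} bookkeeping: $\Delta(S)$ is a pending additive offset for all corners of the EET-segment $S$, and $c_{\min}(S)$ stores the minimum deactivation count over the corners of $S$ with the offsets of $C$ and its descendants already folded in. By Observation~\ref{obs:segments}, transported to $\cotree{T}$ via Lemma~\ref{lem:corner-euler-tour}, every corner of a cluster lies in exactly one of its one or two EET-segments, and each segment of a parent decomposes into segments of its children; so a $\clustermerge$ can set $c_{\min}(S)$ to the minimum over the contributing child segments (after folding in their pending $\Delta$) and reset $\Delta(S)=0$, while a $\clustersplit$ pushes $\Delta(S)$ into the corresponding child segments. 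Each such update is $\Oo(1)$ per cluster and per segment. Raising or lowering the deactivation count of an arbitrary segment is then done exactly as in the proof of Lemma~\ref{lem:7}: expose its delimiting corners and adjust $\Delta$ and $c_{\min}$ on the $\Oo(\log n)$ affected clusters.

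The genuinely new ingredient is $f_{\min}$, which lives only on path clusters. Since we use slim-path top trees, a path cluster is produced only by a four-way merge of two path clusters $P$ (boundary faces $f,h$) and $P'$ (boundary faces $h,g$) together with up to two leaf clusters at $h$, and the only face that newly becomes internal to the combined cluster path is the discarded boundary face $h$. I would therefore maintain $f_{\min}$ by the three-candidate rule suggested in Figure~\ref{fig:pathclustermerges}: on such a merge, set $f_{\min}(C)$ to any of $f_{\min}(P)$, $f_{\min}(P')$, or $h$ that is incident to a minimally deactivated corner on \emph{each} of the two EET-segments (sides) of $C$, and leave it undefined if none qualifies. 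To make the test run in $\Oo(1)$ I would additionally record, for every boundary face of a cluster, the minimum deactivation count of its incident corners \emph{per EET-segment}, maintained by the same merge/split rules as the other minima. This lets me decide for the candidate $h$ whether it attains $c_{\min}$ on each side; for the candidate $f_{\min}(P)$ (resp.\ $f_{\min}(P')$) the test reduces to checking whether both of $C$'s segment minima are already attained within $P$ (resp.\ $P'$).

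For correctness of the three-candidate rule I would argue as follows. The internal faces of $C$'s cluster path are exactly the internal faces of $P$, those of $P'$, and $h$, and by the slim-path structure a cluster contains all corners incident to its internal (non-boundary) faces. Hence if some face $x$ internal to $P$ attains both of $C$'s segment minima, all of $x$'s corners lie inside $P$, forcing $c_{\min}$ of each side of $P$ to equal the corresponding side minimum of $C$; then $f_{\min}(P)$, which by definition attains both of $P$'s side minima, also attains both of $C$'s and is a valid witness. The same holds for $P'$, and the remaining possibility $x=h$ is tested directly, so $f_{\min}(C)$ exists iff one of the three candidates qualifies. Here I rely on the consistency of Lemma~\ref{lem:corner-euler-tour}: when $P$, $P'$ and the leaves are glued at $h$, their two sides concatenate side-by-side, so the two EET-segments of $C$ really are the two sides of $C$'s path and restrict to those of $P$ and $P'$ without swapping.

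Finally, the running time follows the accounting of Lemma~\ref{lem:7} verbatim: a change to the (at most two) marked vertices is realized by $\Oo(\log n)$ merges and splits in the primal top tree, each triggering $\Oo(\log n)$ exposes in the dual top tree, each of which touches $\Oo(\log n)$ dual clusters; since every merge and split now updates $c_{\min}$, $\Delta$, the per-segment boundary minima, and $f_{\min}$ in $\Oo(1)$, the total remains $\Oo(\log^2 n)$. The main obstacle is precisely the $f_{\min}$ correctness argument --- pinning down that the three candidates are exhaustive and that each can be tested in constant time from the stored per-segment minima; once that is in place, everything else is the lazy-propagation machinery already used for Lemma~\ref{lem:7}.
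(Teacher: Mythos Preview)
Your proposal is correct and follows essentially the same route as the paper: a slim-path top tree over $\cotree{T}$, lazy $\Delta$/$c_{\min}$ propagation inherited from Lemma~\ref{lem:7}, and the three-candidate rule for $f_{\min}$ at a four-way merge, with the $\Oo(\log^2 n)$ bound coming from the same primal-expose-triggers-dual-exposes accounting. The only difference is cosmetic: to test the middle candidate $h$ the paper compares against $c_{\min}(S_L)$ and $c_{\min}(S_{L'})$ of the two leaf clusters directly, whereas you introduce explicit per-side boundary-face minima; your variant is arguably cleaner and makes the constant-time test for $h$ more transparent, but the underlying idea is identical.
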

\begin{proof}
  We maintain a slim-path top tree for the dual tree (see Lemma~\ref{lem:slimpath}).
	
  Merge and split of leaf-clusters are handled as in the proof of Lemma~\ref{lem:7}. When two leaf clusters $L,L'$ with segments $S_L, S_{L'}$ merge, $c_{\min}$ for the resulting segment is set to be $\min\{c_{\min}(S_L),c_{\min}(S_{L'})\}$.

  For a $4$-way $\clustermerge$ (see Figure~\ref{fig:pathclustermerges}) of two path clusters $P,P^{\prime}$ and two leaf clusters $L,L'$ with common boundary vertex (face) $f$ into a cluster $C$, we need to calculate $c_{\min}(S_C)$ and $c_{\min}(S'_C)$ for its two EET-segments, and the value of $f_{\min}(C)$.
  The EET-segment $S_C$ is the concatenation of three EET-segments, $S_P$ from $P$, $S_{p^{\prime}}$ from $P'$, and $S_L$ from $L$. We may then set $c_{\min}(S_C) = \min\{c_{\min}(S_p), c_{\min}(S_{p^{\prime}}), c_{\min}(S_L)\}$.
  
  Similarly, $c_{\min}(S'_C) = \min\{c_{\min}(S'_p), c_{\min}(S'_{p^{\prime}}), c_{\min}(S_{L'})\}$. 
  
  We may now assume we have calculated the values $c_{\min}(S_C)$ and $c_{\min}(S'_C)$. 
  To determine $f_{\min}(C)$, one only has to check the three values: $f_{\min}(P), f_{\min}(P^{\prime})$, and $f$.
  If $c_{\min}(S_C) = c_{\min}(S_P)$, and $c_{\min}(S'_C) = c_{\min}(S'_P)$, then we may set $f_{\min}(C) = f_{\min}(P)$.
  Otherwise, if $c_{\min}(S_C) = c_{\min}(S_{P'})$, and $c_{\min}(S'_C) = c_{\min}(S'_{P'})$, then we may set $f_{\min}(C) = f_{\min}(P')$. Otherwise, if $c_{\min}(S_C)=c_{\min}(L)$ and $c_{\min}(S'_C)=c_{\min}(L')$, we may set $f_{\min}(C) = f$. Otherwise, finally, we have to set $f_{\min}(C) = \nil $.
%
  \qed 
\end{proof}

\begin{lemma}
  We can support each $\linkablequery(u,v)$ in $\Oo(\log^2 n)$ time per operation.
\end{lemma}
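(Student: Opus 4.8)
The plan is to reduce $\linkablequery(u,v)$ to a single search along the fundamental cycle guaranteed by Lemma~\ref{lem:cotree-cycle-has-result}, driven by the marking machinery and the $c_{\min}$/$f_{\min}$ values just established. First I would compute an edge $e$ on the $T$-path between $u$ and $v$; this is one $\expose(u,v)$ in the primal top tree and costs $\Oo(\log n)$. Then I would mark every corner incident to $u$ and $v$ by exposing $\{u,v\}$ in the primal top tree. By the lemma immediately above (together with Lemma~\ref{lem:7}) this propagates the deactivation counts into the top tree over $\cotree{T}$ and refreshes the $c_{\min}$, $\Delta$, and $f_{\min}$ values in $\Oo(\log^2 n)$ time, after which a corner is active (deactivation count $0$) exactly when it is incident to $u$ or to $v$.

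Next I would expose the two dual endpoints $f_0,g_0$ of $\dual{e}$ in the dual top tree, costing a further $\Oo(\log n)$. Since $\dual{e}$ closes the fundamental cycle $C(e)$, the resulting root is a path cluster whose cluster path is exactly $C(e)$ minus $\dual{e}$, and by Observation~\ref{obs:segments} it carries two $\EET$-segments. The crucial geometric point, which I would argue from Lemma~\ref{lem:cotree-cycle-has-result} together with Lemma~\ref{lem:corner-euler-tour}, is that these two segments realize precisely the two sides of the cut $C(e)$: because $e$ lies on the $T$-path between $u$ and $v$, deleting $e$ splits $T$ into the subtree containing $u$ and the subtree containing $v$, and because corners translate between $T$ and $\cotree{T}$ with reversed orientation, every corner incident to $u$ falls in one segment and every corner incident to $v$ in the other. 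Hence a face on the path is common to $u$ and $v$ exactly when it has an active corner in \emph{each} segment, i.e. when its per-side minima are both $0$ — which is exactly the condition recorded by the root's $f_{\min}$.

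I would then read $f_{\min}$ off the root to exhibit one internal common face (or $\nil$), and separately test the two endpoint faces $f_0,g_0$, which are boundary rather than internal to the path cluster, using the incidence test of Observation~\ref{obs:incidence}. Deciding linkability is therefore $\Oo(1)$ once the structure is built, so the whole construction-and-decision step is $\Oo(\log^2 n)$, matching the claim. The returned object is this exposed dual path cluster: the full list of common faces is obtained by descending from the root, at each step using the children's $c_{\min}$ values on the two segments to decide which subtrees still contain a face that is $0$ on both sides, and for each reported face its corners at $u$ and at $v$ are exactly the active corners lying in its two segments.

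The step I expect to be the main obstacle is the geometric bookkeeping of the two $\EET$-segments. One must verify carefully that the cut $C(e)$ genuinely separates the $u$-corners from the $v$-corners into the two segments (rather than interleaving them), that this separation is consistent across every cluster visited during the enumeration so that ``both segments are $0$'' always means ``incident to $u$ on one side and to $v$ on the other,'' and that the two endpoint faces $f_0,g_0$ are handled correctly, since they lie outside the internal faces captured by $f_{\min}$ and must be checked by hand. The remaining bounds follow from the $\Oo(\log^2 n)$ marking cost and the $\Oo(\log n)$ height of the slim-path top tree.
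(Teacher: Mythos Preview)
Your approach is essentially the paper's own: $\expose(u,v)$ in $T$ to activate the relevant corners and pick an edge $e$ on the $T$-path, then $\expose$ the endpoints of $\dual{e}$ in $\cotree{T}$, read off $f_{\min}$ for the internal faces, and handle the two endpoint faces separately via their $c_{\min}$ values. The one case you omit is when $u$ and $v$ lie in different connected components (so no $T$-path exists); the paper dispatches this trivially by returning any pair of corners $c_u,c_v$, and you should add that check up front.
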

\begin{proof}
  If $u$ and $v$ are not in the same connected component we pick any corners $c_u$
  and $c_v$ adjacent to $u$ and $v$ and return them.  Otherwise, we use
  $\expose(u,v)$ on the top tree over $T$ to activate all corners adjacent
  to $u$ and $v$ and to find an edge $e$ on the $T$-path from $u$ to
  $v$ (e.g. the first edge on the path). Let $g,h$ be the endpoints of $\dual{e}$, and call
  $\expose(g,h)$ on the top tree over $\cotree{T}$. By Lemma~\ref{lem:cotree-cycle-has-result}, a common face lies on the cotree path from $g$ to $h$. Let $f$ be the
  $f_{\min}$ value of the resulting root.  We can now test each
  of $f,g,h$ using the $c_{\min}$ values to find the desired
  corners if they exist. \qed 
\end{proof}
\begin{lemma}
  If there are more valid answers to $\linkablequery(u,v)$ we can find $k$ of 
  them in $\Oo(\log^2 n + k)$ time. 
\end{lemma}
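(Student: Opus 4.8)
The plan is to reuse the entire single-answer machinery and change only the final reporting step, turning "return one witness" into "stream witnesses on demand." First I would run the same $\Oo(\log^2 n)$ setup as in the previous lemma: call $\expose(u,v)$ in the top tree over $T$ to activate exactly the corners incident to $u$ and $v$, extract an edge $e$ on the $T$-path, and call $\expose(g,h)$ for the dual endpoints $g,h$ of $\dual e$, so that the root cluster of the top tree over $\cotree{T}$ has the whole cotree path from $g$ to $h$ as its cluster path. By Lemma~\ref{lem:cotree-cycle-has-result} every common face lies on this path, so it suffices to enumerate the internal faces of the root cluster that carry an active corner on both sides of the path, and to handle the two endpoint faces $g,h$ separately exactly as in Observation~\ref{obs:incidence}.

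Once the marking is fixed, each common face $f$ produces a whole block of valid answers: all pairs $(c_u,c_v)$ with $c_u$ ranging over the active corners of $f$ on one side of the path and $c_v$ over the active corners on the other side. Emitting their product costs $\Oo(1)$ per pair, provided we can walk the active corners of $f$. Thus the enumeration splits into (i) walking the common faces along the cotree path, and (ii) for each such face, walking its active corners in the two relevant $\EET(T)$-segments. For both walks I would prune with the stored values: descend into a path cluster only when its $f_{\min}\neq\nil$, descend into a sub-segment only when its $c_{\min}=0$, and at every four-way merge test whether the junction face is itself common. Correctness of the pruning is immediate from the definitions of $c_{\min}$ and $f_{\min}$ maintained in the preceding lemmas.

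The main obstacle is the time bound. A plain pruned descent spends $\Oo(\log n)$ reaching each reported face or corner, since the top tree has height $\Oo(\log n)$ and a single witness may sit at the bottom of a long one-sided chain of clusters; this would only yield $\Oo(\log^2 n + k\log n)$. To get the sharp $\Oo(\log^2 n + k)$ I would augment every cluster with a doubly-linked, $\Oo(1)$-concatenable list of its internal common faces (and, per $\EET(T)$-segment, of its active corners), so that the root cluster presents all the answers as linked lists traversable in $\Oo(1)$ per element; maintaining such a list costs only $\Oo(1)$ extra work at each $\clustermerge$ (concatenate the children's lists, inserting the junction face when it is common) and each $\clustersplit$ (cut at the remembered boundary). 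The delicate point, where I expect the real difficulty to lie, is that "common" and "active" are \emph{not} cluster-local predicates: a corner's deactivation count accumulates $\Delta$-values from ancestor clusters, so these lists cannot be kept globally correct under the lazy marking scheme. I would therefore assemble the result lists on the fly during the query's own expose, pushing the $\Delta$-values down along the $\Oo(\log n)$ clusters that expose already visits and rebuilding their lists bottom-up from freshly resolved children. The crux is to argue that this bottom-up assembly along the exposed root paths suffices to produce a correct linked list at the root without ever touching a cluster off those paths, keeping the whole construction inside the $\Oo(\log^2 n)$ budget while leaving every subsequent answer at $\Oo(1)$.
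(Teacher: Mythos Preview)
Your overall architecture matches the paper: keep, in every cluster, an $\Oo(1)$-concatenable doubly linked list of the relevant faces (and per segment, of the relevant corners), so that after the two exposes the root simply presents the answers as a linked list. The paper does exactly this.

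The place where you diverge is the ``delicate point.'' You store lists of \emph{active} corners (deactivation count $=0$) and then, because activeness is not cluster-local under the lazy $\Delta$-scheme, try to repair the lists on the fly during the query's expose. That repair does not obviously go through: when you rebuild a cluster $C$ on the root path, its children off the root path still carry lists that were assembled under a different accumulated $\Delta$, so they are stale in precisely the sense you worried about; pushing $\Delta$ one level down fixes the children's $c_{\min}$ numbers but not the \emph{contents} of their lists. You flag this as ``the crux,'' and indeed it is a genuine gap as stated.

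The paper sidesteps the whole issue with one small change of invariant: store, for each segment, the list of \emph{minimally deactivated} corners---those attaining $c_{\min}$ for that segment---rather than the list of active corners. This property \emph{is} cluster-local, because adding a uniform $\Delta$ to a segment shifts $c_{\min}$ by the same amount and leaves the set of minimisers unchanged. Consequently the lists are always correct, can be maintained persistently, and a merge is just: compare the children's $c_{\min}$, concatenate if equal, otherwise keep the smaller side's list (remembering the join point for a later split). At the root after exposing $u,v$ in $T$, the minimum is $0$, so ``minimally deactivated'' coincides with ``active'' there, and you read off the answers in $\Oo(1)$ each. Adopting this invariant removes your delicate point entirely and makes the $\Oo(\log^2 n + k)$ bound immediate.
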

\begin{proof}
   For each leaf cluster and for each side of each path cluster we can
   maintain the list of minimally deactivated corners adjacent to each
   boundary vertex.  Then, instead of maintaining a single face
   $f_{\min}$ for each path cluster, we can maintain a linked list
   of all relevant faces in the same time. And for each side of each
   face in the list we can point to a list of minimally deactivated
   corners that are adjacent to that side. For leaf-clusters, we point to a linked list of minimally deactivated corners incident to the boundary vertex.
   Upon the merge of clusters, face-lists and corner-lists may be linked together, and the point of concatenation is stored in the resulting merged cluster in case of a future split. Note that each face occurs in exactly one face-list.

   As before, to perform $\linkablequery(u,v)$, expose $u,v$ in the primal tree. Let $e_0$ be an edge on the tree-path between $u$ and $v$, and expose the endpoints of $e_0^{\ast}$ in the dual top tree. Now, the maintained face-list in the root of the dual top tree contains all faces incident to $u,v$, except maybe the endpoints of $e_0^{\ast}$, which can be handled separately, as before.
   The total time is therefore $\Oo(\log^2 n)$ for the
   necessary expose operations, and then $\Oo(1)$ for each reply.
   \qed
\end{proof}

\begin{observation}\label{obs:dualstruct}
If we separately maintain a version of this data structure for the dual graph, then for faces $f,g$, $\linkablequery(f,g)$ in that structure lets us find vertices that are incident to both $f$ and $g$.
\end{observation}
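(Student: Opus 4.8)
The plan is to prove this purely by invoking the duality symmetry on which the whole construction rests, rather than re-deriving any machinery. The central fact I would use is that for a connected graph embedded on the sphere the dual of the dual is the original graph, $\dual{(\dual{G})} = G$, with faces of $\dual{G}$ corresponding to vertices of $G$ and vertices of $\dual{G}$ corresponding to faces of $G$. Combined with the one-to-one correspondence between the corners of $G$ and the corners of $\dual{G}$ recorded just after Definition~\ref{def:corner}, this shows that the entire data structure --- spanning tree, co-tree, extended Euler tour, top trees, marking scheme, and the $c_{\min}/f_{\min}$ bookkeeping --- is defined symmetrically in its primal and dual roles.

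First I would instantiate the construction with $\dual{G}$ in the role of the primary graph. Its spanning tree is some tree $S$ in $\dual{G}$, and by Observation~\ref{obs:tree-cotree} the complementary edges form a spanning co-tree $\cotree{S}$ in $\dual{(\dual{G})} = G$. Every earlier result applies verbatim, since none of them uses any property of $G$ not equally enjoyed by $\dual{G}$: in particular the two extended Euler tours are opposite cyclic arrangements by Lemma~\ref{lem:corner-euler-tour}, and Lemma~\ref{lem:cotree-cycle-has-result} together with the linkable-query lemmas above all carry over with the words ``vertex'' and ``face'' interchanged.

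Next I would translate the query semantics through the duality dictionary. In the structure over a graph $H$, $\linkablequery(x,y)$ takes two vertices of $H$ and returns the faces of $H$ incident to both. Applying this with $H = \dual{G}$, the arguments $f,g$ are vertices of $\dual{G}$, i.e. faces of $G$, and the returned objects are faces of $\dual{G}$, i.e. vertices of $G$. Incidence is preserved under the correspondence: a face $F$ of $\dual{G}$ (a vertex $v$ of $G$) is incident to a vertex $f$ of $\dual{G}$ (a face $f$ of $G$) exactly when $v$ is incident to $f$ in $G$. Hence the list returned by $\linkablequery(f,g)$ in the dual structure is precisely the set of vertices of $G$ incident to both faces $f$ and $g$, which is the claim.

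The only point needing care --- and the main obstacle, such as it is --- is to confirm that this incidence-preservation lines up with what the query machinery actually tests, since the query does not enumerate faces but locates them via marked corners on a co-tree path. I would therefore check that under the corner bijection a corner incident to a vertex $v$ and a face $F$ in $G$ maps to a corner incident to the vertex $F$ and the face $v$ in $\dual{G}$, so that the marking phase, the co-tree search, and the $c_{\min}/f_{\min}$ tests all transport consistently. As this is exactly the symmetry asserted in Definition~\ref{def:corner} and exploited throughout, the verification is routine.
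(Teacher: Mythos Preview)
Your proposal is correct and matches the paper's treatment: the paper states this as an observation without proof, relying implicitly on exactly the duality symmetry you spell out (vertices and faces swap under $G\leftrightarrow\dual{G}$, corners correspond bijectively, and the entire tree/co-tree machinery is symmetric). Your write-up is simply a careful unpacking of what the paper leaves tacit.
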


\subsection{Updates}\label{sec:updates}

In addition to the query, our data structure supports the following
set of update operations:
\begin{itemize}[noitemsep,topsep=0pt,parsep=0pt,partopsep=0pt]
\item $\insertrm(c_u,c_v)$ where $c_u$ and $c_v$ are corners that are
  either in different connected components, or incident to the same face.
  Adds a new edge to the graph, inserting it between the edges of
  $c_u$ at one end and between the edges of $c_v$ at the
  other. Returns the new edge.

\item $\remove(e)$.  Removes the edge $e$ from the graph.  Returns 
  the two corners that could be used to insert the edge again.

\item $\vertexjoin(c_u,c_v)$ 
  where $c_u$ and $c_v$ are corners that
  are either in separate components of the graph or in the same face.
  Combines the vertices $u$ and $v$ into a single new vertex $w$ and
  returns the two new corners $c_w$ and $c_w'$ that may be used to
  split it again using $\vertexsplit(c_w,c_w')$.

\item $\vertexsplit(c_w,c_w^\prime)$ where $c_w$ and $c_w^\prime$ are corners
  sharing a vertex $w$.  Splits the vertex into two new vertices $u$
  and $v$ and returns corners $c_u$ and $c_w$ that might be used to
  join them again using $\vertexjoin(c_u,c_v)$.
%
\end{itemize}

When calling $\remove(e)$ on a non-bridge tree-edge $e$, we need to search for a replacement edge. Luckily, $e^{\ast}$ induces a cycle in the dual tree, and any other edge on that cycle is a candidate for a replacement edge. 
If we like, we can augment the dual top tree so we can find
the minimal-weight replacement edge, simply let each path cluster remember the cheapest edge on the tree-path, and expose the endpoints of $e^{\ast}$. 
If we want to keep $T$ as a minimum spanning tree, we also need to
check at each $\insertrm$ and $\vertexjoin$ that we remove the
maximum-weight edge on the induced cycle from the spanning tree.

In general, when we need to update both the top trees over $T$ and
$\cotree{T}$ we must be careful that we first do the $\clustersplit$s
needed in the top tree over $T$ to make each unchanged sub-tree 
 a
(partial) top tree by itself, then update the top tree over
$\cotree{T}$ and finally do the remaining $\clustersplit$s and
$\clustermerge$s to rebuild the top tree over $T$.  This is necessary
because the $\clustermerge$ and $\clustersplit$ we use for $T$ depend
on $T$ and $\cotree{T}$ having related extended Euler tours.

\begin{wrapfigure}[9]{r}{0.42\textwidth}
\vspace{.5em}
\includegraphics[width=1.0\linewidth]{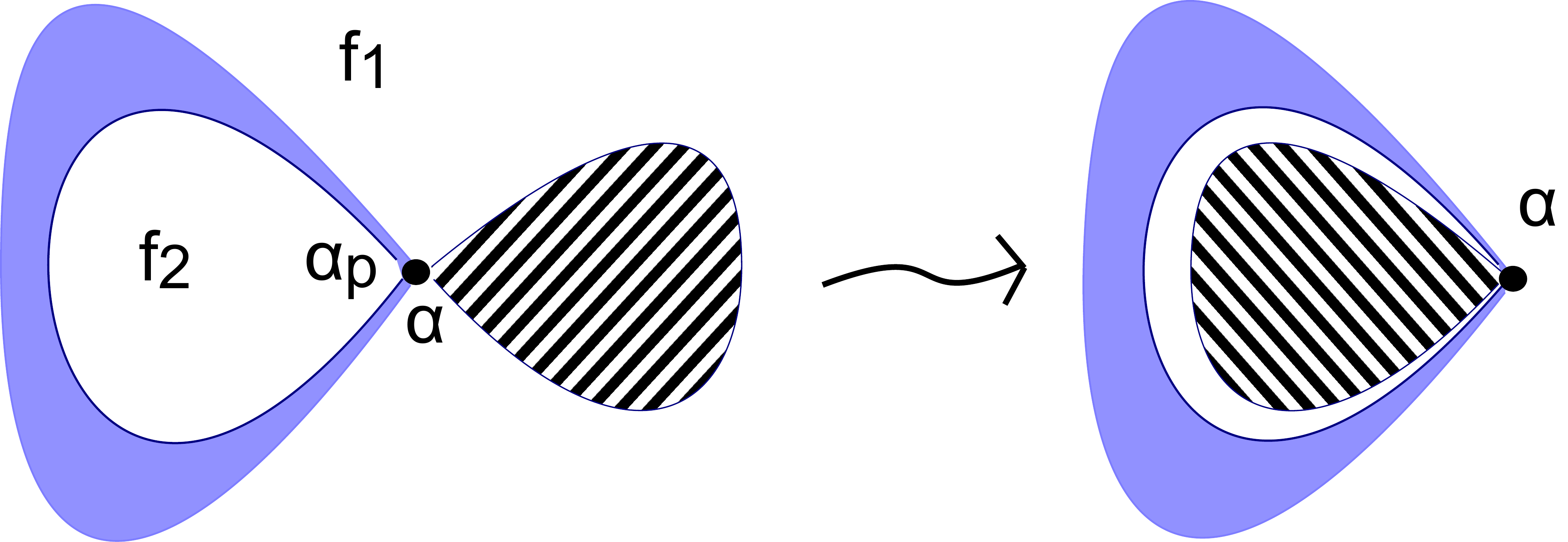}
\caption{An articulation flip at the vertex $\alpha$.}
\label{fig:articulation_flip}
\end{wrapfigure}
Any change to the graph, especially to the spanning tree, implies a change to the extended Euler tour. Furthermore, any deletion or insertion of an edge implies a merge or split in the dual tree. E.g. if an edge is inserted across a face, that face is split in two. As a more complex example, if the non-bridge tree-edge $e=(u,v)$ is deleted, the replacement edge is removed from the dual tree, and the endpoints of $e^{\ast}$ are merged.
%
\subsubsection{Cut and join of faces.}

Given a face $f$ and two corners incident to the face, say, $c_1$ and $c_2$, we may cut the face in those corners, producing new faces $f$ and $f^\prime$. We go about this by exposing $c_1,c_2$ in the dual top tree.
We then have two chains for $f$, which we rename as $f$ and $f^\prime$. 


Similarly, faces may be joined. Given a face $f$ and a face $g$, a corner $c_f$ incident to $f$ and a corner $c_g$ incident to $g$, we may join the two faces to a new face $h$, with two new corners $c_1$ and $c_2$, such that all edges incident to $h$ between $c_1$ and $c_2$ are exactly those from $f$, in the same order, and those between $c_2$ and $c_1$ are those from $g$. We do this by exposing $f$ and $c_f$ in one top tree, and $g$ and $c_g$ in the other top tree, and then by the link and join operation.

\subsubsection{Inserting an edge}

We now show how to perform $\operatorname{insert}(c_u,c_v)$, when compatible with the embedding.

To insert an edge between two components, we must add it to the primal tree. Given a vertex $v$ with an incident corner $c_v$ to the face $g$, and a vertex $u$ with an incident corner $c_u$ to the face $f$, we may  insert an edge $a$ between $u$ and $v$.

In each primal top tree, we expose $v,c_v$ and $u,c_u$, respectively. In the dual top tree, we expose $g,c_v$ and $f,c_u$, respectively. We then link $u$ and $v$, and update the EET correspondingly: New corners $c_1,\ldots ,c_4$ are formed, such that the new edge $a=(u,v)$ appears as the successor of $c_4$ and the predecessor of $c_1$, and the successor of $c_2$ and the predecessor of $c_3$ (see figure \ref{fig:connect-components}). That is, we may view the top tree in its current form as the top tree with an exposed path $(u,v)$ which is trivial (consisting of one edge) and two leaves, one at each end.

In the dual tree, the faces $f$ and $g$ are simply joined at the incident corners $c_u$ and $c_v$, respectively.

\begin{figure}[H]
	\centering
	\includegraphics[width=1.0\linewidth]{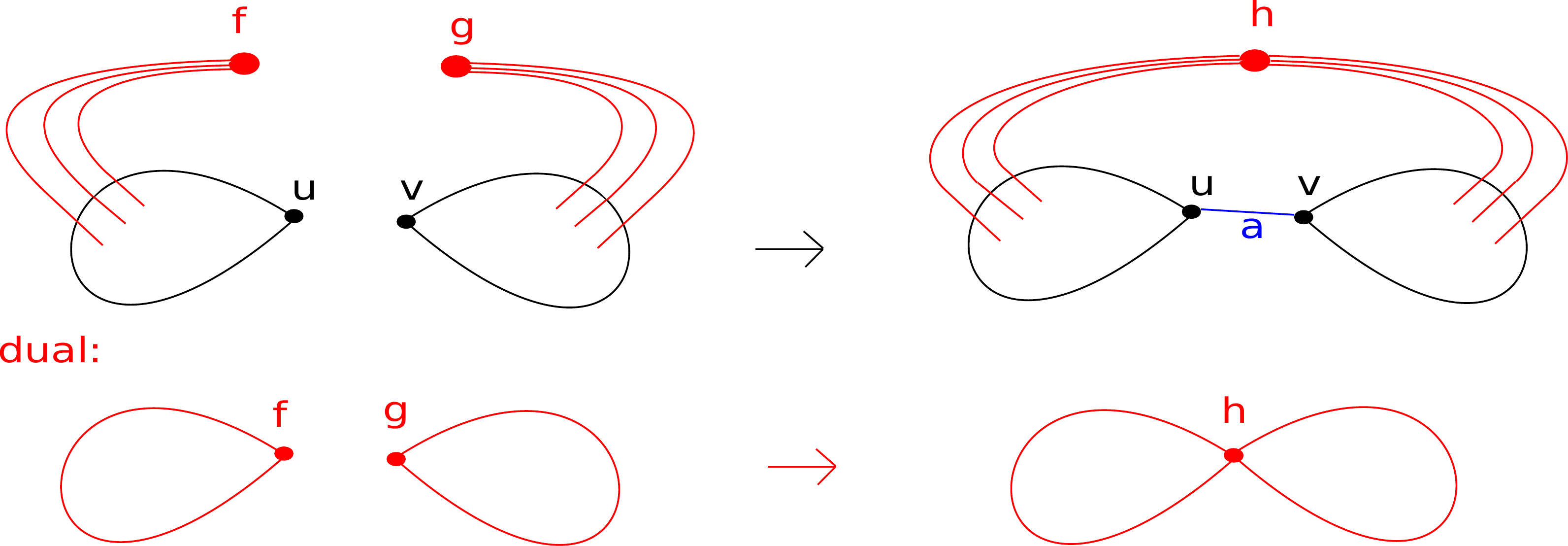}
	\caption{The component of $u$ is joined with the component of $v$ by the edge $a$, and new corners are formed.}
	\label{fig:connect-components}
\end{figure}

To insert an edge inside a component, the two vertices must belong to the same face. Let $u$ and $v$ be vertices with corners $c_u$ and $c_v$ incident to a face $f$. We can then expose the corners $c_u$ and $c_v$, which gives us a path cluster in the primal top tree (and some leaves, because we use slim-path top trees), and in the dual tree, exposing those corners gives us two leaves, both incident to $f$.

We now create new corners $c_1,\ldots,c_4$, and update the EET to list the new edge, $a=(u,v)$, as the successor of $c_1$ and predecessor of $c_2$, replacing $c_v$, and as the successor of $c_3$ and predecessor of $c_4$, replacing $c_u$ (see figure \ref{fig:connect-inside-face}). In the dual tree, the face $f$ is split in the corners $c_u$ and $c_v$, creating new faces $g$ and $h$, and a new edge $(g,h)$ is added to the dual tree. 

\begin{figure}[H]
	\centering
	\includegraphics[width=1.0\linewidth]{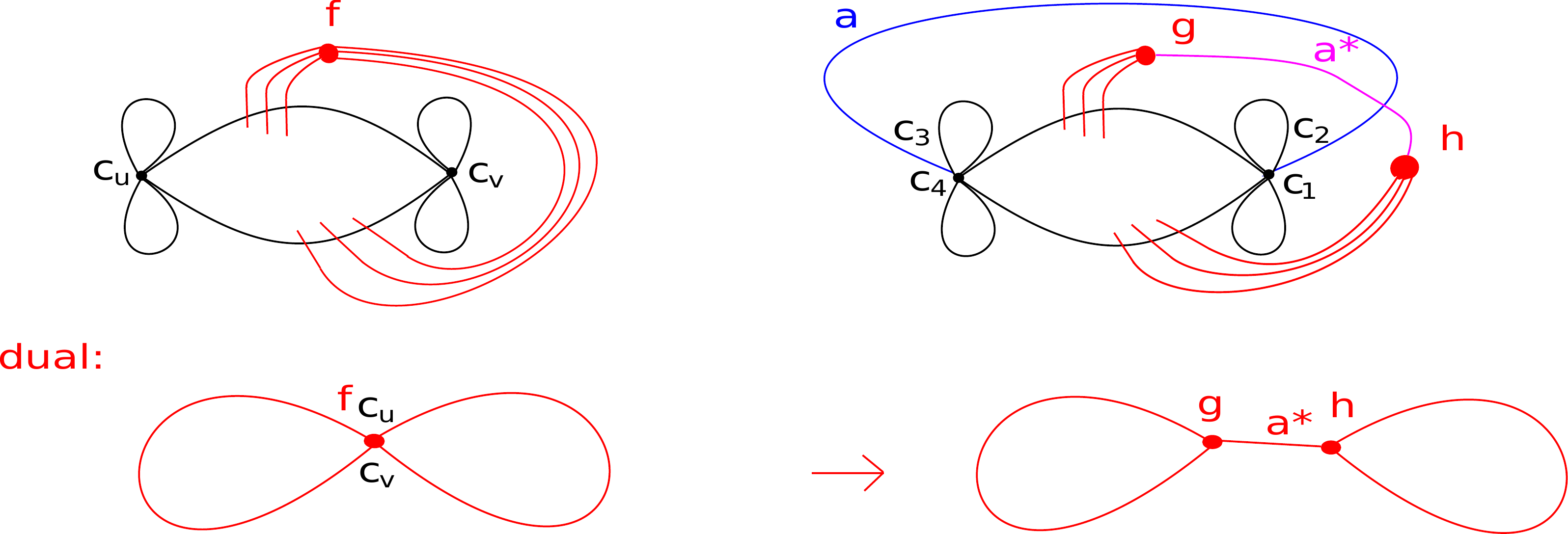}
	\caption{The corners $c_u$ and $c_v$ are exposed, and an edge $a=(u,v)$ (blue) is inserted in those corners. The dual of $a$, namely $a^\ast = (f,g)$ (pink), is inserted in the dual tree to connect the two new faces, that appear when the face $f$ is cut in two by the insertion of the edge $a$.}
	\label{fig:connect-inside-face}
\end{figure}

\subsubsection{Deleting an edge}

We show now how to perform $\operatorname{delete}(e)$.

There are three cases, depending on whether the edge to be deleted is a bridge, a non-tree edge, or a replacement edge exists.

\begin{figure}[H]
	\centering
	\includegraphics[width=1.0\linewidth]{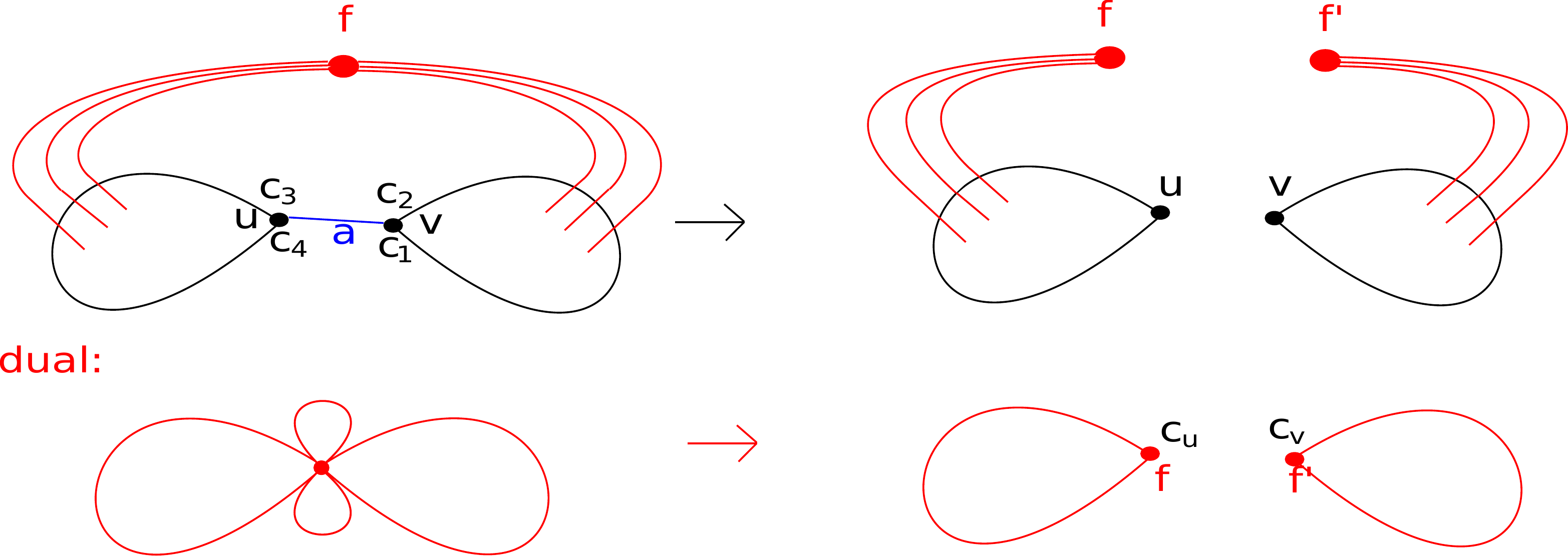}
	\caption{The bridge $(u,v)$ is deleted.}
	\label{fig:delete-bridge}
\end{figure}

An edge, $e$, is a bridge if its dual $e^\ast$ is a self-loop of some face.
To delete a bridge $e=(u,v)$ incident to the face $f$ in corners $c_1,\ldots c_4$ (see figure \ref{fig:delete-bridge}), do the following. Expose $u$ and $v$ in the primal tree. Expose the corners $c_1,\ldots c_4$ in the dual tree, meaning we have at the top a four-leaf clover of clusters incident to $f$. Cut $f$ into those four parts, and delete the leaves corresponding to $c_4 e c_1$ and $c_2 e c_3$. Let $c_u$ denote the newly formed corner incident to $u$ and $c_v$ that incident to $v$. We have now split the face $f$ in two, and have two corresponding EETs and two dual top trees. In the primal top tree, delete the edge $u$; this leaves us with two top trees, one with a leaf with boundary vertex $v$ exposed and one with boundary vertex $u$ exposed -- update the corresponding corners to be exactly the $c_v$ and $c_u$ created in the dual top tree, respectively.

If a non-bridge tree-edge is to be deleted, its dual induces a fundamental cycle in the dual tree, $C(e^\ast)$. Any edge on this cycle would reconnect the components and can be added to the primal tree as a replacement edge. For simplicity, we can just choose the edge after $e^\ast$ on that cycle.

\begin{figure}[H]
	\centering
	\includegraphics[width=0.95\linewidth]{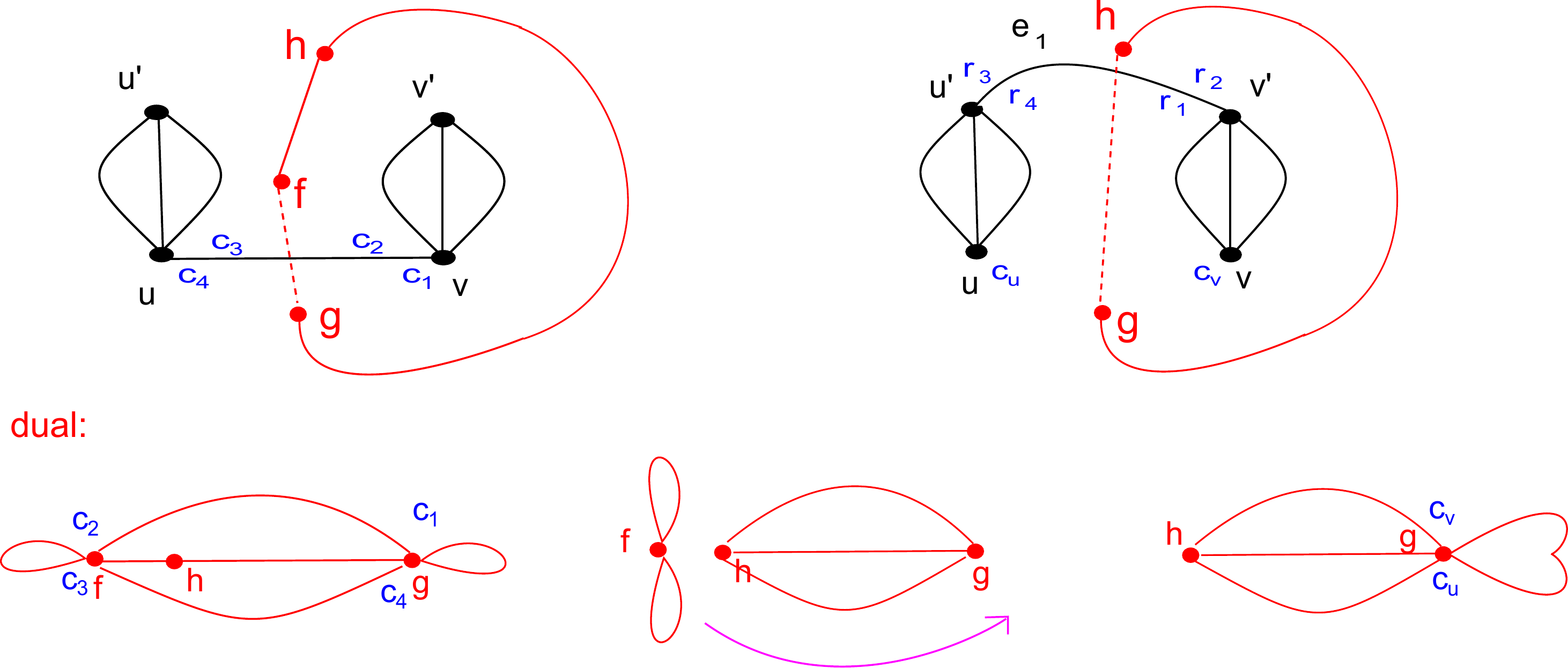}
	\caption{We delete the tree-edge $e=(u,v)$ from the graph. The non-tree edge $(u^\prime,v^\prime)$ is added to the spanning tree, and the faces on either side of $e$ are joined.}
	\label{fig:replacement-edge}
\end{figure}

Let $e=(u,v)$ be the edge we want to delete. Let $f$ and $g$ be the faces incident to $e$. First, expose $u$ and $v$ in the primal tree. The top of the primal top tree is now of the form leaf-path-leaf, where the path-cluster consists of one edge, $(u,v)$.
Since $f\neq g$, exposing corners $c_1,c_2$ incident to $e$ (and to $v$) in the dual tree returns a top tree with a path-cluster as root.  Let $a_1 = (f,h) = e_1^\ast$ be the first edge on the cluster path from $g$ to $f$. We want to replace $e$ with $e_1 = (u^\prime,v^\prime)$. (See figure \ref{fig:replacement-edge}.)

Let $c_h$ and $c_h^\prime$ be the corners of $a_1$ incident to $h$. Expose the vertices $u,v,u^\prime, v^\prime$ and the corners $c_h$ and $c_h^\prime$ in the primal top tree. 

Expose the faces $f$ and $g$ and the corners $c_1\ldots c_4$ in the dual top tree. 
In the dual tree, delete the edge $a_1=(f,h)$ from the tree. Cut $f$ in $c_2,c_3$ and $g$ in $c_1,c_4$, delete the leaves corresponding to $c_2 e c_3$ and $c_4 e c_1$, and name the replacing corners $c_f$ and $c_g$ (updating the EET). Then, join the faces $f$ and $g$ in the corners $c_f$ and $c_g$, and name the newly constructed corners $c_u$ and $c_v$. The dual top tree now has the path cluster with boundary vertices $h$ and $g$ as root.

In the primal tree, delete the edge $(u,v)$. The newly formed corners are exactly the $c_u$ and $c_v$ formed in the dual top tree. 
Link the vertices $u^\prime$ and $v^\prime$, and update the EET correspondingly. That is, if $r_1,r_2,r_3,r_4$ are corners incident to $e_1 = (u^\prime,v^\prime)$, then exchange the segments $r_2 e_1 r_1$ and $r_4 e_1 r_3$ to $r_2 e_1 r_3$ and $r_4 e_1 r_1$.


\begin{figure}[H]
	\centering
	\includegraphics[width=0.7\linewidth]{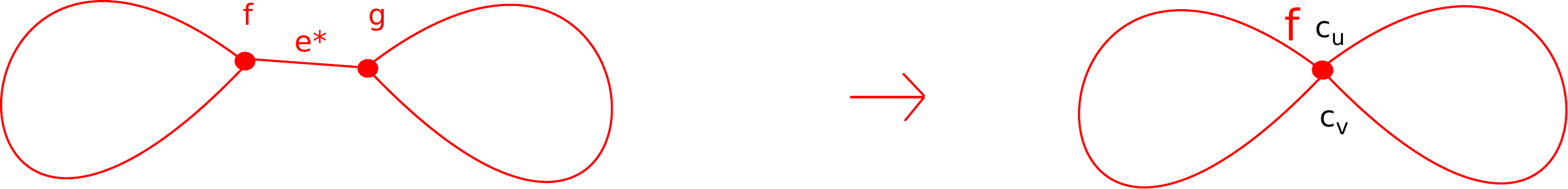}
	\caption{A non-tree edge is exposed in the dual top tree, and removed.}
	\label{fig:delete-nontree}
\end{figure}

Deleting a non-tree edge is the opposite of inserting an edge inside a face (see figure \ref{fig:connect-inside-face}).
If a non-tree edge $e=(u,v)$ is deleted, its incident faces $f$ and $g$ must be joined. In the primal tree, expose $u,v$ and the corners $c_1,\ldots ,c_4$ incident to $e$. In the dual tree, expose the edge $e^\ast = (f,g)$, which must belong to the cotree, and all four corners incident to $e$. Cut $g$ in the corners $c_1,c_4$, cut $f$ in the corners $c_2,c_3$. Delete the cluster containing $e^\ast$, and join $f$ and $g$ in the corners $c_f$ and $c_g$ which were created from the cut. The join produces new corners $c_u$ and $c_v$. In the primal tree, the leaves with EET $c_2 e c_1$ and $c_4 e c_3$ are deleted, and the delimiting corners for the cluster path are updated to be the newly formed $c_u$ and $c_v$. During this procedure, the EET is updated to contain $c_v$ in place of $c_2 e c_1$ and $c_u$ in place of $c_4 e c_3$.

\subsection{Flip}\label{sec:flip}

Finally, for $\flip$ to work we have to use a version of top trees that
is not tied to a specific clockwise orientation of the vertices.  The
version in~\cite{DBLP:journals/talg/AlstrupHLT05} that is based on a reduction to
Frederickson's topology trees~\cite{DBLP:journals/siamcomp/Frederickson85} works fine for this purpose.

\begin{definition}[Articulation flip]\label{articulationflip}
Having $\vertexsplit$ and $\vertexjoin$ functions, we may perform
an \emph{articulation-flip} (see Figure~\ref{fig:separation_flip}) --- a flip in an articulation point: 
Given a vertex $\alpha$ incident to the face $f_1$
in two corners, $c_1$ and $c_2$, we may cut through
$c_1,c_2$, obtaining two graphs $G_1,G_2$, having split $\alpha$ in
vertices $\alpha_1\in G_1, \alpha_2\in G_2$, and having introduced new
corners $c,c^{\prime}$ where we cut.  Now, given a corner $\alpha_p$
incident to $\alpha_1$ and incident to some face $f_2$, we may join $\alpha_1$ with $\alpha_2$ by the
corners $\alpha_2,\alpha_p$, 
with or without having flipped the
orientation of $G_2$.
\end{definition}

\begin{definition}[Separation flip]\label{def:separationflip}
Similarly, given a separation pair $\alpha,\beta$, incident to
the faces $f,g$ with corners $c_1,\ldots c_4$, we may cut
through those corners, obtaining two graphs.  We may then flip the
orientation of one of them, and rejoin.  We call this a
\emph{separation-flip}.
\end{definition}

\begin{SCfigure}[50]
\includegraphics[width=0.42\linewidth]{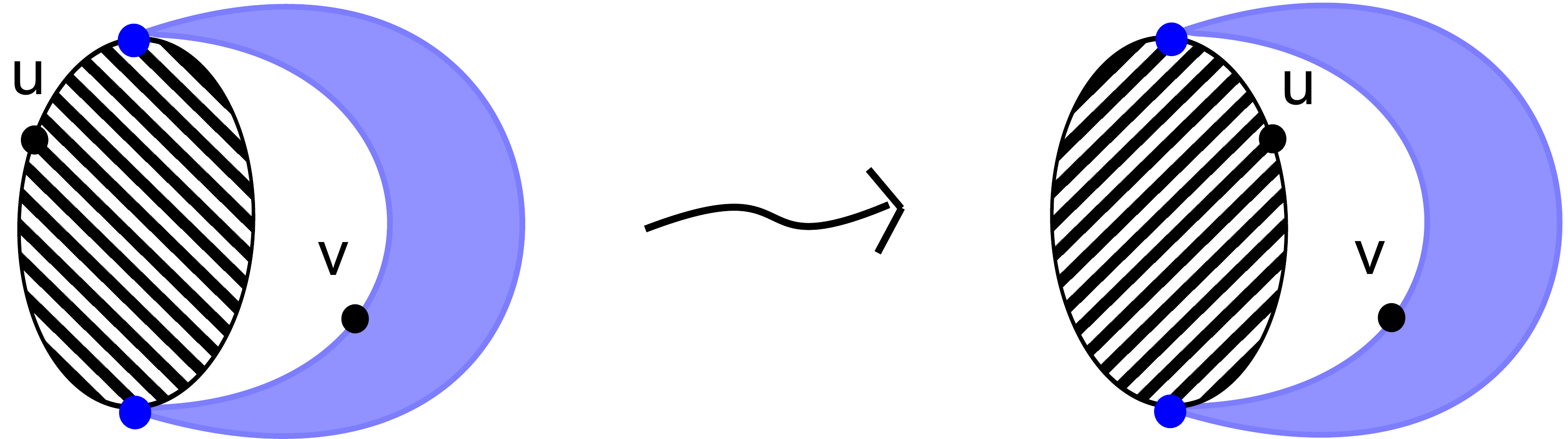}
\caption{A separation flip at a separation pair (blue). The flip makes vertex $u$ linkable with vertex~$v$.}
\label{fig:separation_flip}
\end{SCfigure}

Internally, both flip operations are done by cutting out a subgraph, altering its orientation, and joining the subgraph back in.
%
%
In the dual graph, a flip corresponds to two splits, 
two cuts,
two links, 
and two merges.





First, the articulation flip will be described. To flip in an articulation point, we only need to be able to cut a vertex, 
and join two vertices to one. (See figure \ref{fig:vertex_split}.)

\begin{figure}
	\centering
	\includegraphics[width=0.4\linewidth]{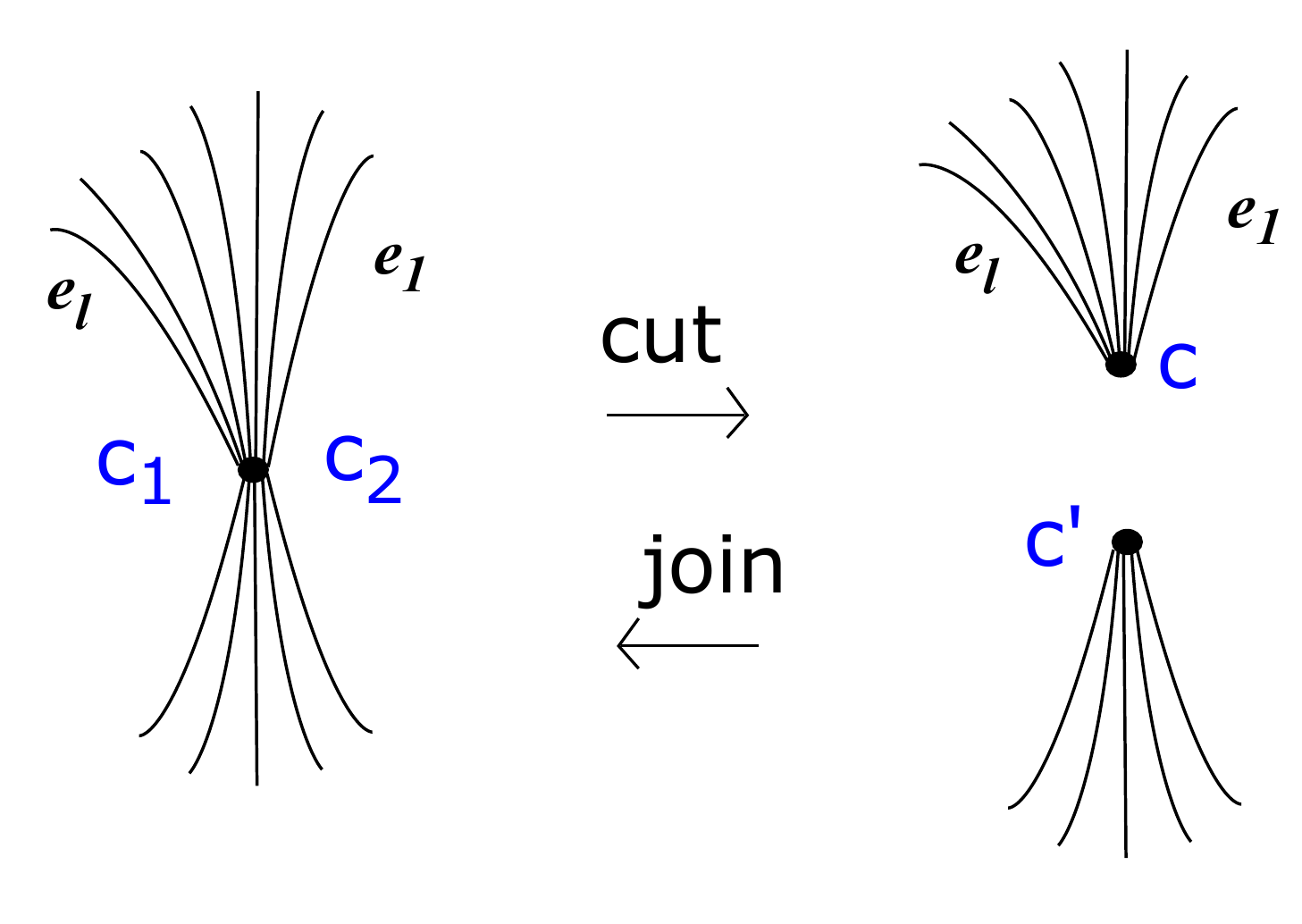}
	\caption{Left to right: The vertex is cut in two specified corners, and two new vertices are formed. Right to left: Two vertices are joined to one.}
	\label{fig:vertex_split}
\end{figure}


\subsubsection{Articulation flip}

Given a vertex $v$ and two corners $c_1,c_2$ incident to $v$, we may cut the vertex in two such that all edges between $c_1$ and $c_2$ belong to the new vertex $v^\prime$, and all edges between $c_2$ and $c_1$ belong to $v$. (See Figure \ref{fig:vertex_split}.) The corners $c_1$ and $c_2$ are deleted, and new corners $c$ and $c^\prime$ are created, and the extended Euler tours are updated correspondingly. (See figure \ref{fig:euler_tour_cut}.)


\begin{figure}[h]
	\centering
	\includegraphics[width=0.7\linewidth]{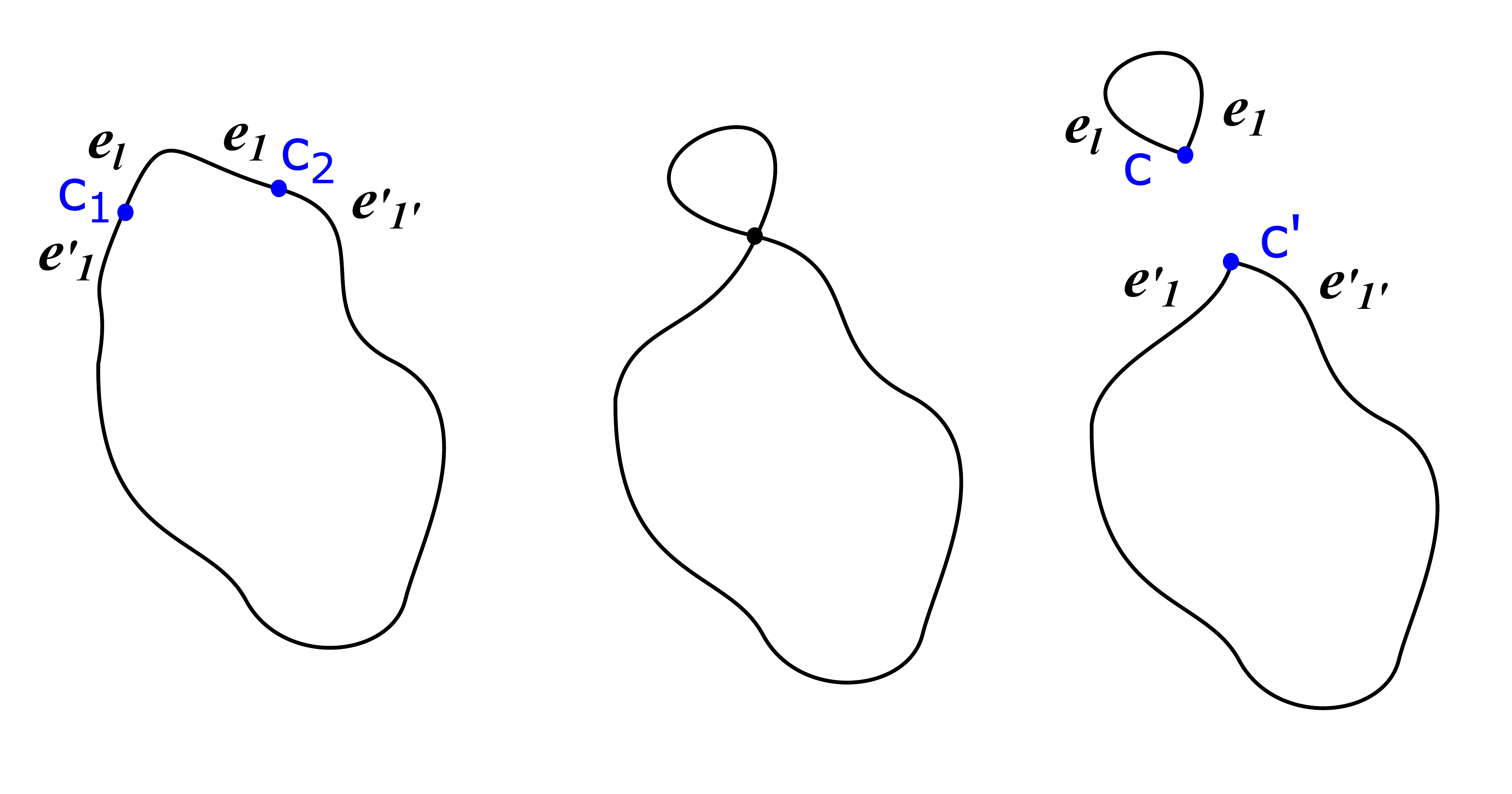}
	\caption{The extended Euler tour is cut in $c_1$ and $c_2$, and glued back together to form two tours. New corners, $c$ and $c^\prime$ are formed.}
	\label{fig:euler_tour_cut}
\end{figure}

The EET cycle is split into two cycles; let $e_1$ and $e_l$ denote the successor of $c_2$ and the predecessor of $c_1$, respectively, and let $e_1^\prime$ and $e_{l^\prime}^\prime$ denote the successor of $c_1$ and the predecessor of $c_2$, respectively. The corner $c$ will take place between $e_l$ and $e_1$ in the EET containing $v$, and the corner $c^\prime$ will take place between $e_{l^\prime}^\prime$ and $e_1^\prime$ in the EET containing $v^\prime$. (See figure \ref{fig:euler_tour_cut}.)


We need to be able to cut both primal and dual vertices. By convention, the extended Euler tour is updated when primal vertices are cut, and never when dual vertices are cut.

Similarly, given two vertices, and given a corner incident to either vertex, we may join the two vertices, as an inverse operation to the vertex cut.

If $c$ is a corner incident to the vertex $v$ and $c^\prime$ is incident to $v^\prime$, we may perform the operation join($c,c^\prime$).
%
%
Now, the extended Euler tour is updated correspondingly. We must create new corners, $c_1$ and $c_2$. In the Euler tour, $c_1$ must point to the predecessor of $c$ as its predecessor, and the successor of $c^\prime$ as its successor, and similarly for $c_2$. And vice versa, the object of those edges must point to $c_1$ and $c_2$. 

\label{sec:articulationflip}

We now have the necessary tools for flipping in an articulation point, that is, if a vertex has two corners incident to the same face, we may flip the subgraph spanned by those corners into another face incident to the vertex. This is done by the two-step scheme of cut and join. Namely:
\begin{theorem}
	We can support $\operatorname{articulation-flip(c_1,c_2,c_3)}$ in time $\Oo(\log^2 n)$.
\end{theorem}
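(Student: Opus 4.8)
The plan is to realise the articulation-flip through the three-step scheme that this section has been building toward: a vertex cut, an optional orientation reversal, and a vertex join. Concretely, I would first call \vertexsplit on $v$ through the corners $c_1$ and $c_2$, splitting $v$ into $\alpha_1$ and $\alpha_2$ and the graph into $G_1\ni\alpha_1$ and $G_2\ni\alpha_2$, and creating the two new corners $c$ (on $\alpha_1$) and $c'$ (on $\alpha_2$). By the discussion of vertex cuts this simultaneously updates the primal top tree, the dual top tree, and the extended Euler tour, at cost $\Oo(\log^2 n)$; afterwards $G_2$ is an independent component carrying its own pair of top trees. I would then reverse the orientation of $G_2$ (this reversal is the flip itself), and finally call \vertexjoin on the corners $c'$ and $c_3$, which re-attaches $G_2$ into the face indicated by $c_3$ at cost $\Oo(\log^2 n)$. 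Summing the three phases gives the claimed $\Oo(\log^2 n)$ bound, so the whole argument reduces to showing that the reversal is cheap and consistent.

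The only genuinely new ingredient is that reversal, and this is where the care goes. Because we use the topology-tree-based top trees of Section~\ref{sec:flip}, which are not tied to a fixed clockwise sense, I would maintain a lazy \emph{reverse} bit on the root cluster of $G_2$: setting it reverses the cyclic order read off around every vertex of the cluster, and it is pushed to the children on each \clustersplit, exactly as in the reversal bit of path-based dynamic-tree structures. By Observation~\ref{obs:segments} each cluster is composed of one or two $\EET(T)$ segments, so toggling the bit is precisely reversing the relevant segments of the extended Euler tour; and by Lemma~\ref{lem:corner-euler-tour} the dual tour is the opposite arrangement, so setting the same bit on the root of the dual top tree of $G_2$ keeps both structures consistent without any extra work. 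The reversal is therefore $\Oo(1)$ lazily, and the push-downs it incurs are absorbed into the $\Oo(\log n)$ splits that the subsequent \expose of $c'$ performs during the join.

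The main obstacle, and the reason the ordering of operations matters, is that \vertexsplit and \vertexjoin touch both the primal tree and its co-tree while those share a common extended Euler tour. As flagged in Section~\ref{sec:updates}, I would be careful to first perform the primal \clustersplit{}s that isolate $G_2$ as a partial top tree, then do the dual cut, set the reverse bits on both sides, perform the dual join, and only afterwards rebuild the primal top tree with the remaining \clustermerge{}s; rebuilding the primal side too early would merge clusters against a stale dual arrangement. A secondary bookkeeping point to verify is that reversal leaves the aggregates of the marking and query structure correct: a reversed segment contains the same corners with the same deactivation counts, so the per-segment $c_{\min}$ and $\Delta$ values are invariant, while the only effect on a path cluster is that its two sides are interchanged, which the reverse bit already encodes. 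Once these consistency points are checked, the three $\Oo(\log^2 n)$ phases compose to give the theorem.
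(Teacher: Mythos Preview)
Your approach is essentially the paper's: cut the vertex (and the incident face) through $c_1,c_2$, then re-join at $c_3$, with the $\Oo(\log^2 n)$ bound coming from the primal-side exposes. The one noteworthy difference is that you spend the bulk of your argument on the orientation-reversal step, whereas the paper's proof of \emph{this} theorem is literally just ``cut, expose $c_3$, join'' with no reversal at all; the lazy sign/reverse bit you describe is introduced only in the \emph{following} subsection, for the separation-flip. This is not a mistake on your part --- Definition~\ref{articulationflip} explicitly allows the ``with or without having flipped the orientation'' variant --- but it does mean most of your proof is justifying machinery the paper postpones, while the actual articulation-flip argument in the paper is a couple of lines. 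Your ordering of primal splits, dual updates, then primal merges matches the care the paper prescribes in Section~\ref{sec:updates}, and your analysis of why the reversal bit leaves $c_{\min}$ and $\Delta$ intact is correct.
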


\begin{figure}[h]
	\centering
	\includegraphics[width=0.7\linewidth]{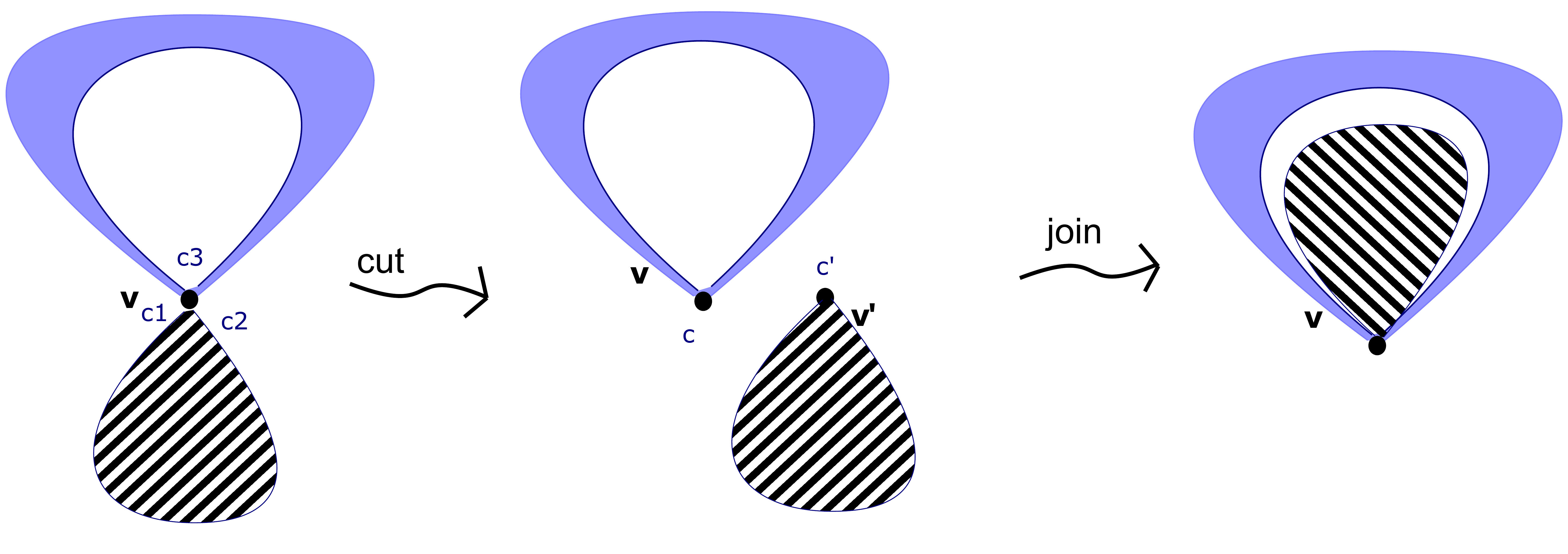}
	\caption{To flip in an articulation point: cut and join.}
\end{figure}

\begin{proof}
	Given a vertex $v$ and two corners $c_1$ and $c_2$ incident both to $v$ and some common face $f_1$, and given a third corner, $c_3$ incident to $v$ and some face $f_2$, we first perform the vertex cut, cut($v,c_1,c_2$) in the primal top tree. Then, we perform the operation of cut($f_1,c_1,c_2$) in the dual top tree. 
Together, these two operations return two graphs, $G$ and $G^\prime$, with respectively $v,c$ and $v^\prime,c^\prime$ exposed in their primal top trees, and with $f_1,c$ and $f_1^\prime,c^\prime$ exposed in their dual top trees. 
Then, we expose $c_3$ (incident to $v$ and $f_2$) in the top trees of $G$.
Finally, we perform the operation join($c_3,c^\prime$), first in the dual, and then in the primal top tree, gluing $G^\prime$ back to $G$.

This operation takes $\Oo(\log^2 n)$ time, dominated by the expose-operation in the primal top tree.\qed \end{proof}

\subsubsection{Altering the orientation}

To perform a flip in a separation pair, we introduce the following three operations: seclude, alter the orientation, include. Seclude and include are similar to those of the previous section, while altering the orientation means that the ordering of edges around any vertex becomes the opposite.
%
In the EET, the predecessor becomes a successor, and vice versa.

To implement this, we let the clusters of the primal and dual top tree contain one more piece of information, namely the orientation of the cluster, which is plus $+$, or minus $-$. 

When a cluster is ``negative'', some of its information changes character: 
\begin{itemize}
	\item its left-hand child becomes a right-hand child and vice versa, 
	\item if it is a path cluster, the two EET-segments switch places, 
	\item and, finally, in the extended Euler tour, predecessor is interpreted as successor and vice versa.
\end{itemize}

When a cluster is split, its sign is multiplied with the signs of its children. That is, if a cluster with a minus is split, the minus is propagated down to the children. When clusters are merged, the new union-cluster is simply equipped with a plus, while its children clusters keep their sign. 


In the 3-step program of ``seclude, alter, include,'' the alter-step simply consists of changing the sign of the top cluster of the dual top tree for the secluded graph.

\subsubsection{Separation flip}

Given four corners $c_1,\ldots ,c_4$ such that $c_1,c_2$ are incident to the vertex $v$, $c_3,c_4$ are incident to the vertex $u$, $c_2,c_3$ are incident to the face $f$ and $c_4,c_1$ are incident to the face $g$. See figure \ref{fig:seclude}.

\begin{figure}[h]
	\centering
	\includegraphics[width=0.3\linewidth]{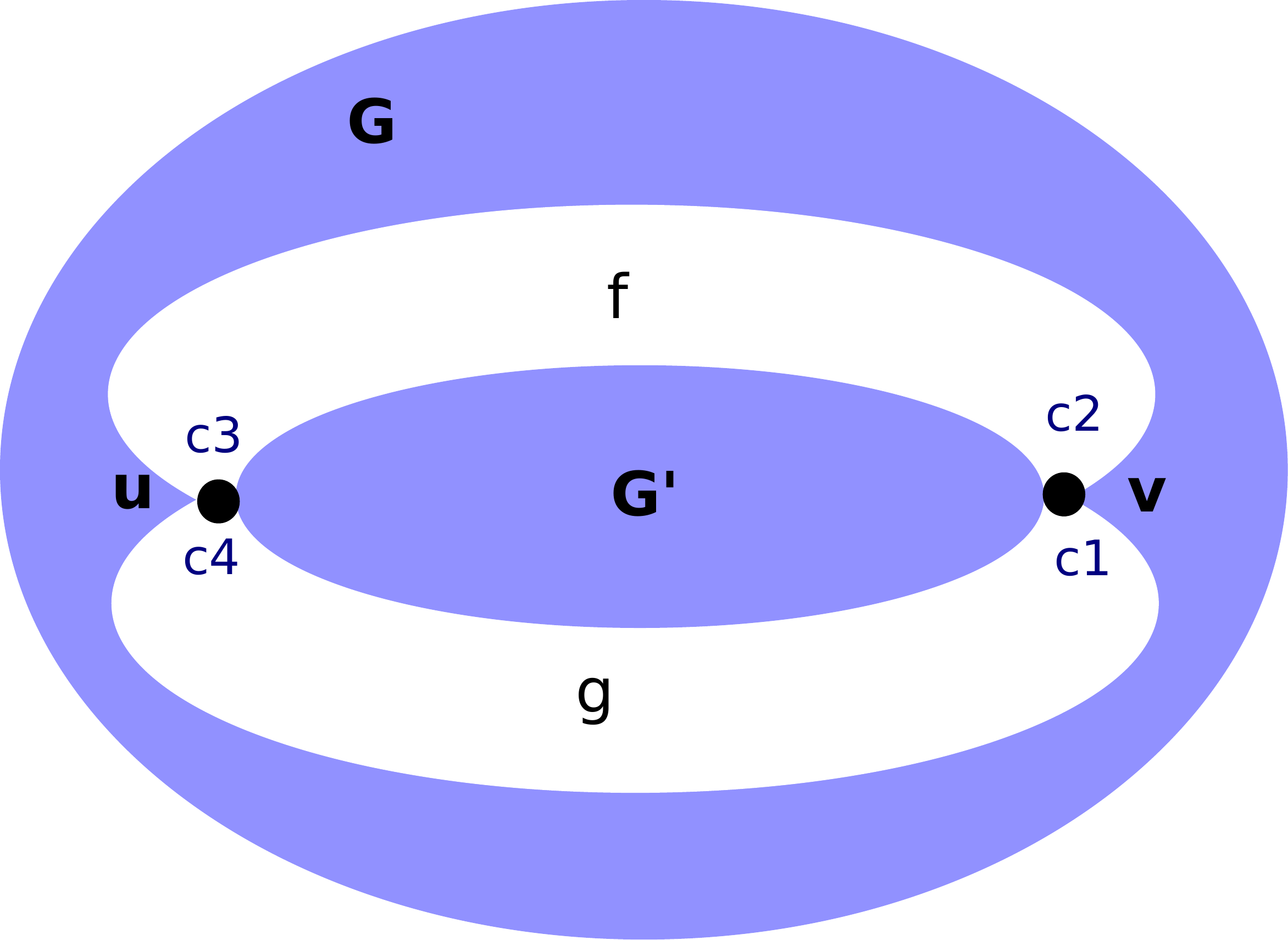}\includegraphics[width=0.1\linewidth]{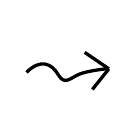}\includegraphics[width=0.6\linewidth]{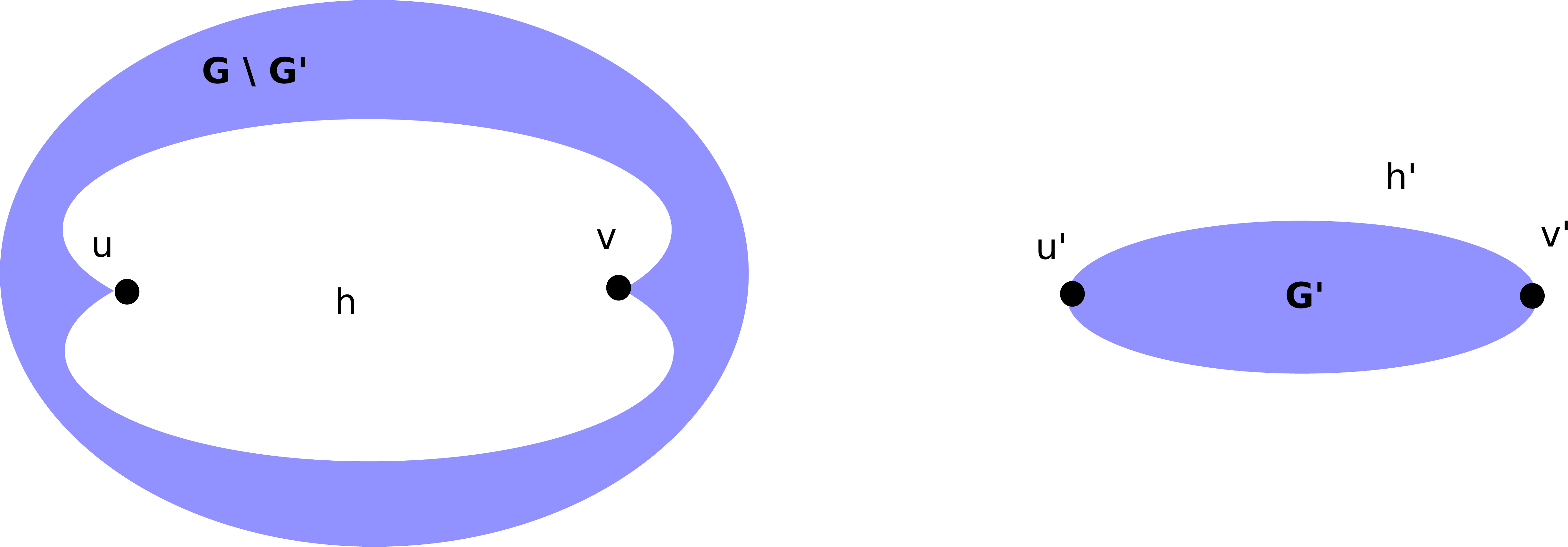}
	\caption{The subgraph $G^\prime$ is delimited by two vertices and two faces. The picture shows how $G^\prime$ is cut out of the graph $G$.}
	\label{fig:seclude}
\end{figure}


To seclude($c_1,c_2,c_3,c_4$), cut the vertex $v$ through the corners $c_1$ and $c_2$, cut $u$ through $c_3$ and $c_4$. In the dual top tree, cut the face $f$ in $c_2$ and $c_3$ (obtain $f$ and $f^\prime$), and cut the face $g$ in $c_4$ and $c_1$ (which returns $g$ and $g^\prime$). Then, join $f^\prime$ with $g^\prime$, and $f$ with $g$ in the dual top tree.
We now have two graph-stumps, denote them $G$ and $G^\prime$, where $G$ contains $u$ and $v$ which are both incident to one face $h$ which was formed by joining $f$ with $g$, and where $G^\prime$ contains $u^\prime$ and $v^\prime$, both incident to the new face $h^\prime$.

This procedure consists of two vertex cuts in the primal tree, this
takes $\Oo(\log^2 n)$ time, two vertex cuts in the dual tree, which
takes the time $\Oo(\log n)$.  Note that between the seclude and the
subsequent include we may not have a spanning tree for one of the
components.  This is not a problem, since we are not doing any other
operations in between.


Include is the inverse operation of seclude above.

Given two graphs, $G$ and $G^\prime$, and given two vertices $u,v\in G$ and two vertices $u^\prime,v^\prime\in G^\prime$, and given a designated face $h\in G$ incident to both $u$ and $v$, and similarly a face $h^\prime\in G^\prime$ incident to $u^\prime$ and $v^\prime$. Let four corners be given, such that $c_{u,h}$ is incident to $u$ and $h$, and so on, $c_{v,h}, c_{u^\prime,h^\prime}, c_{v^\prime,h^\prime}$.

To perform $\operatorname{include}(c_{u,h}, c_{v,h}, c_{u^\prime,h^\prime}, c_{v^\prime,h^\prime})$:
Cut the face $h$ through the corners $c_{v,h}$ and $c_{v,h}$, and cut the face $h^\prime$ similarly. Let the resulting faces be denoted $f,g$, and $f^\prime,g^\prime$, respectively. Now, join the faces $f$ with $f^\prime$ and $g$ with $g^\prime$. Finally, join the vertices $u$ with $u^\prime$ and $v$ with $v^\prime$. 



\begin{theorem}
	We can support $\operatorname{separation-flip}(c_1,c_2,c_3,c_4)$ in time $\Oo(\log^2 n)$.
\end{theorem}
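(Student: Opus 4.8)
The plan is to realize $\operatorname{separation-flip}(c_1,c_2,c_3,c_4)$ through the three-step program \emph{seclude}, alter the orientation, \emph{include} developed above, and to charge each step against the bounds already established for vertex cuts, vertex joins, and $\expose$. First I would call $\operatorname{seclude}(c_1,c_2,c_3,c_4)$ to detach the subgraph $G'$ delimited by the separation pair $u,v$ and the two faces $f,g$. As analysed for seclude, this is two vertex cuts in the primal top tree, each dominated by an $\expose$ costing $\Oo(\log^2 n)$, together with two face cuts and two face joins in the dual top tree, each costing $\Oo(\log n)$. By the convention that the $\EET$ is updated only when primal vertices are cut, the two resulting stumps $G$ and $G'$ carry consistent extended Euler tours, with $u,v$ both incident to the joined face $h$ in $G$ and $u',v'$ both incident to $h'$ in $G'$.

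Next I would perform the alter step, which by the discussion of altering the orientation is simply flipping the sign of the top cluster of the dual top tree for the secluded graph $G'$ from $+$ to $-$. This lazily reinterprets predecessor as successor throughout $G'$ and swaps the two EET-segments of each path cluster, and costs $\Oo(1)$; correctness of the lazy sign relies on the stated rule that a minus is multiplied into the children on each $\clustersplit$ while a $\clustermerge$ resets the new cluster's sign to $+$.

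Finally I would call $\operatorname{include}$, the inverse of $\operatorname{seclude}$, to glue $G'$ back into $G$ in its new orientation: cut the faces $h$ and $h'$ at the corners produced by seclude, join $f$ with $f'$ and $g$ with $g'$ in the dual top tree, and then join $u$ with $u'$ and $v$ with $v'$ in the primal top tree, updating the $\EET$ accordingly. Mirroring the seclude analysis, this again costs $\Oo(\log^2 n)$, dominated by the primal vertex joins and their $\expose$ calls, so summing the three steps gives the claimed $\Oo(\log^2 n)$ bound. Note that although one component may temporarily lack a spanning tree between seclude and include, this is harmless since no other operation intervenes.

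The main obstacle I expect is correctness rather than running time: one must verify that the single lazy orientation sign placed on the secluded dual top tree interacts correctly with \emph{both} the primal and dual structures, so that after $\operatorname{include}$ every cyclic edge order around a vertex of $G'$ is exactly reversed while the remainder of $G$ is untouched, and that the corners matched across the rejoin respect this reversal. As with the other updates, keeping the primal and dual $\EET$s related also forces a careful ordering of the internal $\clustersplit$s and $\clustermerge$s, interleaving the dual operations with the primal ones as described in Section~\ref{sec:updates}.
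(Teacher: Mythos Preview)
Your three-step seclude/alter/include outline and the $\Oo(\log^2 n)$ accounting match the paper's proof exactly. There is, however, one concrete slip in the include step: after reversing the orientation of $G'$, the face pieces must be cross-joined, i.e.\ $f$ with $g'$ and $g$ with $f'$, not $f$ with $f'$ and $g$ with $g'$ as you wrote. Flipping $G'$ reverses the cyclic order along its boundary, so if you keep the vertex pairing $u\leftrightarrow u'$ and $v\leftrightarrow v'$, the two boundary arcs between them necessarily swap sides; your stated face pairing is inconsistent with your stated vertex pairing and would not yield a valid embedding (indeed, this is precisely the ``corners matched across the rejoin'' concern you flag at the end). With that one correction the argument is the paper's.
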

\begin{proof}
	Given two vertices which are incident to two faces, and given four delimiting corners $c_1\ldots c_4$ pairwise incident to both vertices and both faces (as in Definition~\ref{def:separationflip}), we perform $\operatorname{separation-flip}$($c_1,c_2,c_3,c_4$) with a three-step procedure: Seclude, alter, include. 
	
	Let $g,f,c_1,\ldots,c_4$ be as in Definition~\ref{def:separationflip} (see Figure~\ref{fig:seclude}). 
	First, seclude the subgraph $G'$ delimited by the corners $c_1,\ldots,c_4$, splitting the faces $f$ and $g$ to $f,f'$ and $g,g'$, respectively. Then, alter the orientation of $G'$ by changing its sign. Finally, perform include, such that $f$ joins with $g^\prime$, and $g$ with $f^\prime$.
	
	This takes $\Oo(\log^2 n)$ time, dominated by the seclude and include operations.\qed
\end{proof}

\section{One-flip linkable query}\label{sec:flip-find}

Given vertices $u,v$, we have already presented a data structure to find a common face for $u,v$. Given they do not share a common face, we will determine if an articulation flip exists such that an edge between them can be inserted, and given no such articulation-flip exists, we will determine if a separation-flip that makes the edge insertion $(u,v)$ possible exists.

Let $f_1$ and $f_2$ be faces in $G$, and let $S$ be a subgraph of $G$.
We say that $S$ \emph{separates} $f_1$ and $f_2$ if $f_1$ and $f_2$
are not connected in $\dual{G}\setminus{}\dual{(E[S])}$. Here, $E[X]$ denotes the set of edges of the subgraph $X$, $E[f]$ the edges incident to the face $f$, and $V[f]$ the incident vertices.

\begin{wrapfigure}[14]{r}{0.35\textwidth}
\vspace{-2ex}
\includegraphics[width=1.0\linewidth]{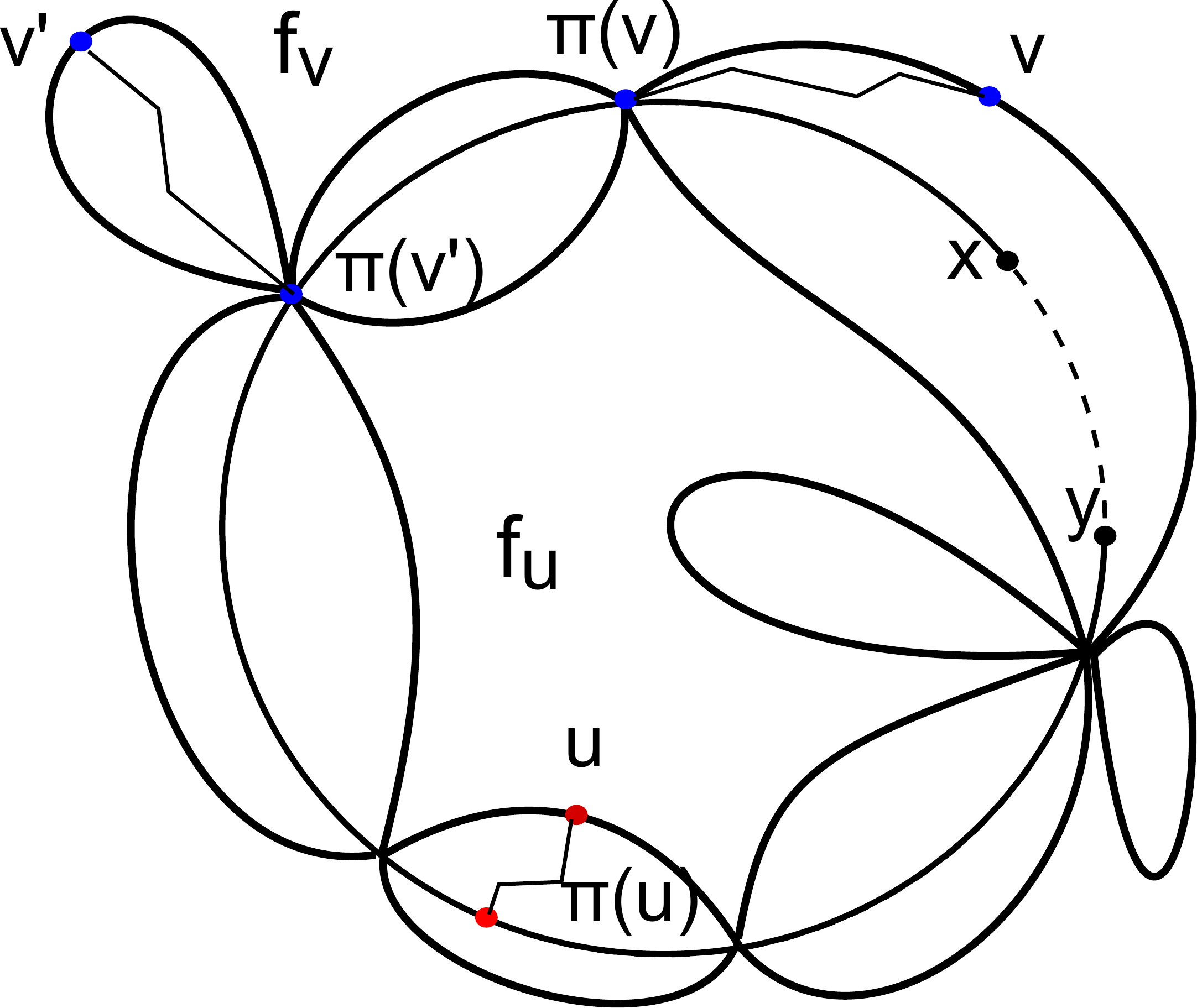}
\caption{The faces $f_u$ and $f_v$ have five common vertices, and there are eight flip-components with respect to them.} 
\label{fig:sausages}
\end{wrapfigure}

\begin{observation}\label{obs:face-separating-face-path}
  Given a fundamental cycle $C$ that is induced in $T\cup\set{e}$ by some edge $e$
  and given any two faces $f_1$, $f_2$ not separated by $C$, any
  face $f$ such that $C\cup{}E[f]$ separates $f_1$ and $f_2$ lies on the
  path $\simplepath{f_1}{f_2}$ in $\cotree{T}$.
\end{observation}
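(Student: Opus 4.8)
The plan is to translate the whole statement into the dual graph $\dual G$ and reduce it to an elementary fact about cut vertices of a spanning tree. First I would analyze the effect of deleting $\dual{(E[C])}$ from $\dual G$. Since $C$ is a cycle in $G$, the set $\dual{(E[C])}$ is a bond (a minimal edge cut) of $\dual G$, so $\dual G\setminus\dual{(E[C])}$ has exactly two connected components, say $A$ and $B$. The key observation is that $C=\set{e}\cup P$ with $P\subseteq E_T$ the tree path, so of the edges of $C$ only $e$ is a non-tree edge; dually $\dual{(E[C])}\cap\cotree T=\set{\dual e}$, i.e.\ the bond meets the cotree in the single edge $\dual e$. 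Hence deleting $\dual{(E[C])}$ removes from the spanning tree $\cotree T$ only the edge $\dual e$, and the two subtrees of $\cotree T\setminus\set{\dual e}$ are connected subgraphs of $\dual G\setminus\dual{(E[C])}$ spanning $A$ and $B$ respectively; in particular $\cotree T$ restricted to $A$ is a spanning tree of $A$. Because $f_1,f_2$ are \emph{not} separated by $C$, they lie in the same component, say $f_1,f_2\in A$, and so the cotree path $\simplepath{f_1}{f_2}$ avoids $\dual e$ and lies entirely inside the subtree spanning $A$.

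Next I would account for the extra edges $E[f]$. Dually, $\dual{(E[f])}$ is exactly the set of edges of $\dual G$ incident to the vertex $f$, so deleting it isolates $f$. If $f$ were not in $A$, then no edge of $A$ would be incident to $f$, so $A$ would remain connected and $f_1,f_2$ would stay connected, contradicting the hypothesis that $C\cup E[f]$ separates them; hence $f\in A$. Assuming $f\notin\set{f_1,f_2}$ (the cases $f=f_1$ or $f=f_2$ are trivial, as the endpoints lie on the path), deleting the edges of $A$ incident to $f$ has the same effect on $f_1$–$f_2$ connectivity as deleting the vertex $f$ from $A$. Thus the hypothesis says precisely that $f$ is a cut vertex of $A$ separating $f_1$ and $f_2$.

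Finally I would invoke the elementary fact that a cut vertex separating two vertices lies on every path between them: since $\cotree T$ restricted to $A$ is a spanning tree of $A$, the cotree path $\simplepath{f_1}{f_2}$ is one such $f_1$–$f_2$ path, so it must pass through $f$, which is exactly the claim. The step I expect to be the main obstacle is the first one, namely justifying cleanly that $\dual{(E[C])}$ is a bond whose two sides coincide with the two subtrees of $\cotree T\setminus\set{\dual e}$; this is the primal-cycle/dual-cut correspondence specialized to a fundamental cycle, and it is what lets the purely topological separation condition be read off from the combinatorial cotree path. Everything after it is routine graph-theoretic bookkeeping.
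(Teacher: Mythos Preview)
The paper states this as an \emph{Observation} and offers no proof at all, so there is nothing to compare your argument against. Your proof is correct and is exactly the natural justification one would give: the cycle--cut duality shows that $\dual{(E[C])}$ is the fundamental cut of $\cotree{T}$ determined by $\dual e$, so the two sides $A,B$ of this bond are precisely the vertex sets of the two subtrees of $\cotree{T}\setminus\{\dual e\}$; then the hypothesis forces $f$ to be a cut vertex of $A$ separating $f_1$ and $f_2$, hence $f$ lies on every $f_1$--$f_2$ path inside $A$, in particular on the cotree path.

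One small remark: the step you flagged as the ``main obstacle'' can be said in one line once you note that the dual of a fundamental cycle with respect to $T$ is by definition the fundamental cut with respect to $\cotree{T}$ (this is essentially the content of Observation~\ref{obs:tree-cotree} applied to matroid duality). With that in hand the identification of $A,B$ with the two subtrees of $\cotree{T}\setminus\{\dual e\}$ is immediate rather than something that needs to be argued separately.
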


Let $f_1$ and $f_2$ be faces of $G$, and let $S=V[f_1]\cap{}V[f_2]$ be
the set of vertices they have in common. Let $C$ denote the set of corners between vertices in $S$ and faces in $\{f_1,f_2\}$. The sub-graphs obtained by cutting $G$ through all the corners of $C$ 
are called \emph{flip-components} of $G$
w.r.t. $f_1$ and $f_2$. Flip-components which are only incident to one vertex of $S$ can be flipped with an articulation-flip, and flip-components incident to two vertices can be flipped with a separation-flip. (See Figure~\ref{fig:sausages}.)

\begin{observation}\label{obs:perimeter}
Note that the perimeter of a flip component always consists of the union of a path along the face of $f_u$ with a path along the face of $f_v$. One of these paths is trivial (equal to a point) exactly when $u,v$ are linkable via an articulation-flip. 
\end{observation}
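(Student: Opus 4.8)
The plan is to work directly with the two boundary walks $W_u = \partial f_u$ and $W_v = \partial f_v$ and to show that the shared vertices $S = V[f_u]\cap V[f_v]$ chop these walks into arcs that pair up one-to-one, each pair bounding a single flip-component. First I would record the elementary half of the claim: every corner we cut lies in $C$, hence is incident to $f_u$ or to $f_v$, so the newly exposed perimeter of any flip-component $K$ is composed solely of sub-walks of $W_u$ and of $W_v$; equivalently, the perimeter of $K$ is $(W_u\cap K)\cup(W_v\cap K)$. The entire content of the first sentence of the observation is therefore that $W_u\cap K$ and $W_v\cap K$ are each a single (possibly trivial) path, rather than a union of several arcs.

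The heart of the argument is a planarity statement: the cyclic order in which the vertices of $S$ occur along $W_u$ is, up to orientation, the reverse of the order in which they occur along $W_v$. I would prove this by the standard interleaving argument --- if four shared vertices occurred in incompatible cyclic orders on the two walks, one could route two arcs, one through the disk $f_u$ and one through the disk $f_v$, joining interleaved pairs, and these arcs would be forced to cross, contradicting the embedding. This is exactly the Jordan-curve style reasoning already used for the Euler cut in the proof of Lemma~\ref{lem:corner-euler-tour}. Given this compatibility I would pass to the closures $\overline{f_u}$ and $\overline{f_v}$, each a closed disk meeting the other along $S$ (and any edges the two faces share), and observe that $S^2\setminus(\overline{f_u}\cup\overline{f_v})$ falls apart into regions cyclically indexed by consecutive pairs of shared vertices. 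The region between consecutive shared vertices $s_i$ and $s_{i+1}$ is bounded on one side by the arc of $W_u$ from $s_i$ to $s_{i+1}$ and on the other by the arc of $W_v$ running back from $s_{i+1}$ to $s_i$; the flip-component drawn in that region meets each walk in exactly one arc, which is the first claim. Here I also use that cutting the $f_u$- and $f_v$-corners at each $s\in S$ genuinely separates the two incident boundary edges, so distinct arcs land in distinct components.

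For the second sentence, note that the two arcs delimiting a flip-component $K$ share their endpoints, namely the consecutive shared vertices $s_i,s_{i+1}$ bounding $K$'s region. One of the two bounding paths is trivial (a single point) exactly when $s_i=s_{i+1}$, i.e. when $K$ is attached to the rest of $G$ at a single vertex of $S$; by the classification stated just before the observation this is precisely the case handled by an articulation-flip rather than a separation-flip, so the flip rendering $u$ and $v$ linkable is an articulation-flip iff one of the two paths degenerates. The step I expect to be the real obstacle is the cyclic-order compatibility lemma, together with making the separation argument rigorous when a vertex of $S$ occurs several times on a single boundary walk, so that $W_u$ and $W_v$ are genuinely closed \emph{walks} rather than simple cycles; it is exactly this control that forces each flip-component to meet a boundary in a single arc instead of several.
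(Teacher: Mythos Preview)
The paper does not prove this statement at all: it is labeled an \emph{Observation} and is simply asserted, with Figure~\ref{fig:sausages} serving as the only justification. So there is no ``paper's own proof'' to compare against; your proposal is supplying an argument where the authors chose not to.

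Your line of attack---showing that the cyclic order of $S$ along $\partial f_u$ agrees (up to reversal) with that along $\partial f_v$ via a non-interleaving argument, and then reading off that the complementary regions of $\overline{f_u}\cup\overline{f_v}$ are each bounded by one arc from each face---is the natural way to make the observation rigorous, and it is correct in outline. Two remarks: first, your identification ``one path trivial $\Leftrightarrow s_i=s_{i+1}$'' is fine once stated carefully, but be explicit that you mean equality \emph{as vertices} while the two occurrences are distinct positions on one of the boundary walks (so that walk contributes a nontrivial loop while the other contributes only the point). Second, the second sentence of the observation is phrased in terms of $u$ and $v$ specifically, not an arbitrary flip-component $K$; to match it, you should apply your conclusion to the flip-component containing $u$ (or $v$) and note that $u,v$ become linkable via an articulation-flip exactly when that particular component is attached at a single vertex of $S$. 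You already flag the repeated-vertex issue as the delicate point, which is accurate.
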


Given vertices $u$, $v$ in $G$, that are connected and not incident to a common face, we wish to find faces $f_u$ and $f_v$ such that $u$ and $v$ are in different flip-components w.r.t. $f_u$ and $f_v$.

\subsection{Finding one face}
Let $u$ and $v$ be given vertices, and assume there exist faces $f_u$
and $f_v$ such that $u\in{}V[f_u]\setminus{}V[f_v]$,
$v\in{}V[f_v]\setminus{}V[f_u]$, and $u$ and $v$ are in different
flip-components w.r.t. $f_u$ and $f_v$.

\begin{wrapfigure}[10]{r}{0.42\textwidth}
\vspace{-2em}
\includegraphics[width=1.0\linewidth]{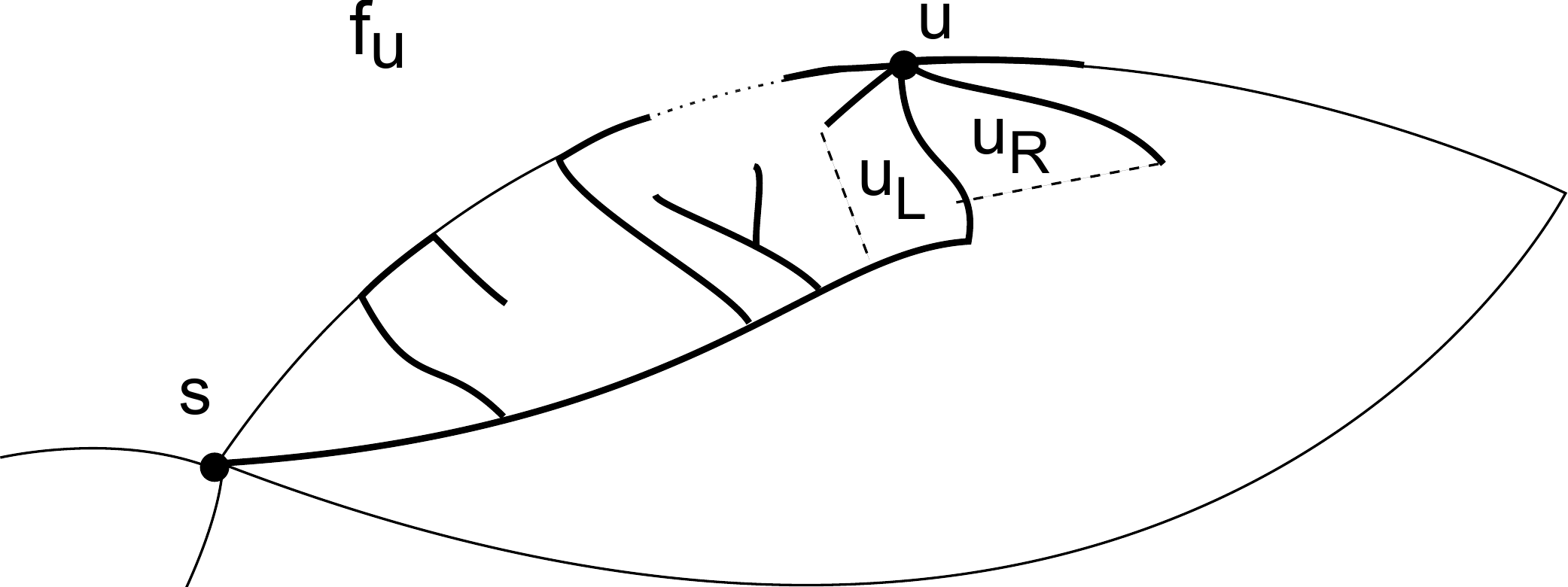}
\caption{The co-tree path from $u_L$ to $u_R$ goes through $f_u$. The proof uses that the tree-path from $u$ to $v$ goes through some $s\in S$ on the boundary of $u$'s flip-component.}
\label{fig:firstedge-cycle}
\end{wrapfigure}
Let $u_L,u_R$ be the left and right faces adjacent to the first edge
on the path from $u$ to $v$.  Similarly let $v_L,v_R$ be the left and
right faces adjacent to the first edge on the path from $v$ to $u$.

\begin{lemma}\label{lem:firstedge-cycle}
  Face $f_u$ is on the $\cotree{T}$-path $\simplepath{u_L}{u_R}$ and face $f_v$
  is on the $\cotree{T}$-path $\simplepath{v_L}{v_R}$.
\end{lemma}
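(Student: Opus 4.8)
The plan is to recognize the cotree path $\simplepath{u_L}{u_R}$ as all but one edge of a fundamental cycle, and then to force $f_u$ onto it because the boundary of $f_u$ straddles the associated cut. Write $e_u$ for the first edge on the $T$-path from $u$ to $v$, so that $u_L,u_R$ are precisely the two endpoints of $\dual{e_u}$ in $\dual{G}$. Since $e_u\in T$, its dual $\dual{e_u}$ is not an edge of $\cotree{T}$, so it induces a fundamental cycle $C(\dual{e_u})$ in $\cotree{T}$ consisting of $\dual{e_u}$ together with the path $\simplepath{u_L}{u_R}$; deleting the single edge $\dual{e_u}$ from this cycle leaves $\simplepath{u_L}{u_R}$ with exactly the same vertex set, so a face lies on $\simplepath{u_L}{u_R}$ if and only if it lies on $C(\dual{e_u})$. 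By the tree/cotree duality of Observation~\ref{obs:tree-cotree}, this cycle is dual to the fundamental cut obtained by deleting $e_u$ from $T$: letting $U$ be the component of $T-e_u$ containing $u$ and $\overline{U}$ the other (which contains $v$, as $e_u$ points toward $v$), the faces on $C(\dual{e_u})$ are exactly the faces whose boundary walk meets both $U$ and $\overline{U}$, i.e.\ those incident to a cut edge.

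It therefore suffices to exhibit a vertex of $f_u$ in $U$ and a vertex of $f_u$ in $\overline{U}$. The vertex $u$ itself is incident to $f_u$ and lies in $U$. For the far side I would invoke the flip-component hypothesis: since $u$ and $v$ lie in different flip-components with respect to $f_u,f_v$, and these components arise by cutting $G$ only at corners of vertices in $S=V[f_u]\cap V[f_v]$, the $T$-path from $u$ to $v$ must leave $u$'s flip-component through some common vertex $s\in S$. Because $u\in V[f_u]\setminus V[f_v]$ gives $u\notin S$, this $s$ is distinct from $u$, hence lies strictly beyond $e_u$ on the path and so $s\in\overline{U}$; and $s\in S\subseteq V[f_u]$ makes $s$ incident to $f_u$. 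Thus the boundary of $f_u$ meets both $U$ (at $u$) and $\overline{U}$ (at $s$), so it contains a cut edge and $f_u$ lies on $\simplepath{u_L}{u_R}$. The claim for $f_v$ follows verbatim after interchanging the roles of $u$ and $v$, using the first edge on the path from $v$ to $u$ and the fact that $v\notin S$.

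The step I expect to be the main obstacle is the middle one: arguing cleanly that the $T$-path from $u$ to $v$ must pass through a vertex of $S$ strictly beyond $e_u$. This needs the careful observation that cutting at the corners of $C$ disconnects $G$ only along vertices of $S$, so any $u$--$v$ walk avoiding $S$ would stay inside a single flip-component, contradicting that $u$ and $v$ lie in different ones; one also has to check that the first such crossing occurs past $e_u$ rather than at $u$, which is exactly where $u\notin S$ is used. Everything else is routine planar duality, and in fact the whole argument mirrors Lemma~\ref{lem:cotree-cycle-has-result}, with the common vertex $s\in S$ playing the role that $v$ plays there as the witness that a face straddles the fundamental cut.
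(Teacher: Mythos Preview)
Your proposal is correct and follows essentially the same approach as the paper's proof: both identify a vertex $s\in S$ on the $T$-path from $u$ to $v$ (using that distinct flip-components are separated only at vertices of $S$ and that $u\notin S$), and then conclude that the boundary of $f_u$ crosses the fundamental cut of $e_u$ since it contains both $u\in U$ and $s\in\overline{U}$. Your presentation is in fact slightly more streamlined, invoking the cut/cycle duality and the Lemma~\ref{lem:cotree-cycle-has-result} template directly rather than explicitly constructing the face-boundary path $p$ and the specific crossing edge $e_i$ as the paper does.
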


\begin{proof}
For symmetry reasons, we need only be concerned with the case $f_u$.
The perimeter of a flip-component consists of edges incident to $f_u$ and edges incident to $f_v$ (see Observation~\ref{obs:perimeter}). Furthermore, in order for $u,v$ to be linkable via a flip, $u$ needs to lie on the perimeter of its flip-component. We also know that the tree-path from $v$ to $u$ must go through a point $s$ in $S$ which lies on the boundary of $u$'s flip-component. 
Thus, there must exist a path $p$ in $G$ from $\pi\in S$ to $u$, consisting only of edges incident to $f_u$. Note that $u\notin S$ since $u,v$ were not already linkable. If the first edge $e_u$ on the tree path from $u$ to $v$ is not already incident to $f_u$, then the union of $p$ and the tree must contain a fundamental cycle containing $e_u$, separating $u_L$ from $u_R$, induced by an edge $e_i$ incident to $f_u$. (See  Figure~\ref{fig:firstedge-cycle}.) But then, the co-tree path from $u_L$ to $u_R$ goes through $e_i^\ast$, which means it goes through $f_u$.
\qed
\end{proof}

\begin{lemma}\label{lem:meet-on-and-off}
  If there exists a fundamental cycle $C$ separating $f_u$ from $f_v$
  such that $u\notin{}V[C]$ and $v\in{}V[C]$, then
  $f_u=\meet(u_L,u_R,f)$ where $f\in\set{v_L,v_R}$ is the face that is
  on the same side of $C$ as $u$. Here, $\meet(a,b,c)$ denotes $a$'s projection to the path $\simplepath{b}{c}$.
  
\end{lemma}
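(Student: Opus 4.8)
The plan is to recognize $\meet(u_L,u_R,f)$ as the median of the three faces in $\cotree{T}$ and to pin it down as $f_u$ by a path-membership argument. Recall that $\meet(u_L,u_R,f)$ is the unique face lying simultaneously on all three $\cotree{T}$-paths $\simplepath{u_L}{u_R}$, $\simplepath{u_L}{f}$ and $\simplepath{u_R}{f}$. Lemma~\ref{lem:firstedge-cycle} already places $f_u$ on $\simplepath{u_L}{u_R}$, so it suffices to show that $f_u$ also lies on $\simplepath{u_L}{f}$ and on $\simplepath{u_R}{f}$; the three memberships together force $f_u=\meet(u_L,u_R,f)$.

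First I would record how $C$ partitions the faces. Viewing $C$ as a simple closed curve, it bounds two regions; let $A$ be the one containing $u$ (well defined since $u\notin V[C]$) and $B$ the other. A face incident to a vertex lying strictly inside $A$ cannot cross an edge of $C$, hence lies wholly in $A$; thus $u_L,u_R,f_u$, all incident to $u$, lie in $A$, while by hypothesis $f$ lies in $A$ and $f_v$ lies in $B$. In dual-tree terms, the only co-tree edge among the edges of $C$ is the non-tree edge $\dual{e}$ inducing it, so $A$ and $B$ are exactly the two components of $\cotree{T}\setminus\set{\dual{e}}$, and $u_L,u_R,f_u,f$ all sit in one of them. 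I would also note that every common vertex $s\in S=V[f_u]\cap V[f_v]$ is incident to a face in $A$ (namely $f_u$) and to one in $B$ (namely $f_v$), and therefore must lie on $C$; hence $S\subseteq V[C]$, and $C$ and the boundary $E[f_u]$ meet exactly in the vertices of $S$.

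The heart of the argument is to show that $C\cup E[f_u]$ separates $u_L$ from $f$, and symmetrically $u_R$ from $f$. Because $S\subseteq V[C]$, the planar arrangement $C\cup E[f_u]$ cuts region $A$ into cells, one over each flip-component, each bounded above by the arc of $E[f_u]$ between consecutive vertices of $S$ and below by the corresponding arc of $C$. By Observation~\ref{obs:perimeter} the vertex $u$ lies on the $f_u$-arc of its own flip-component, so the faces $u_L,u_R$ adjacent to the first edge from $u$ toward $v$ lie in the cell over $u$'s flip-component; whereas $f$, being incident to $v$ on the $A$-side, lies in the cell over $v$'s flip-component, a distinct cell since $u$ and $v$ are in different flip-components. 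Thus $u_L,u_R$ and $f$ lie in different cells, i.e. $C\cup E[f_u]$ separates each of $u_L,u_R$ from $f$. Since $C$ is a fundamental cycle that does \emph{not} separate $u_L$ (resp.\ $u_R$) from $f$ (they share region $A$), Observation~\ref{obs:face-separating-face-path} applied with separating face $f_u$ yields that $f_u$ lies on $\simplepath{u_L}{f}$ and on $\simplepath{u_R}{f}$, completing the three memberships.

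The main obstacle is precisely the separation claim of the previous paragraph: converting the intuitive ``sausage'' picture into a rigorous statement that $u_L,u_R$ and $f$ land in distinct cells of the arrangement $C\cup E[f_u]$. This is where each hypothesis is consumed --- $u\notin V[C]$ keeps $u$ and its incident faces strictly inside $A$; $v\in V[C]$ makes the face $f\in\set{v_L,v_R}$ on $u$'s side well defined and places $v$ on the lower ($C$-)boundary of its cell; and ``different flip-components'' guarantees the two cells are distinct. I would finally dispatch the degenerate case in which the first edge from $u$ toward $v$ is itself incident to $f_u$ (so $f_u\in\set{u_L,u_R}$): there one of the two path-memberships is immediate, and the other is exactly the separation argument above applied to the remaining one of $u_L,u_R$.
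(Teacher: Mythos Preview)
Your approach is exactly the paper's: invoke Lemma~\ref{lem:firstedge-cycle} to place $f_u$ on $\simplepath{u_L}{u_R}$, then argue that $C\cup E[f_u]$ separates $f$ from $u_L$ and $u_R$ and apply Observation~\ref{obs:face-separating-face-path} to get the remaining two path-memberships. The paper compresses the separation claim into a single assertion, whereas you unpack it via $S\subseteq V[C]$ and the flip-component cell structure; this added detail is sound and fills in what the paper leaves implicit.
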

\begin{proof}
  By Lemma~\ref{lem:firstedge-cycle} $f_u$ is on the path
  $\simplepath{u_L}{u_R}$.  And since $C\cup{}E[f_u]$ separates $f$ from
  $u_L$ and $u_R$ it is on the paths $\simplepath{u_L}{f}$ and
  $\simplepath{u_R}{f}$ by Observation~\ref{obs:face-separating-face-path}.
\qed
\end{proof}

\begin{lemma}\label{lem:meet-both-off}
  If there exists a fundamental cycle $C$ separating $f_u$ from $f_v$
  such that $u\not\in{}V[C]$ and $v\not\in{}V[C]$, then either
  $f_u=\meet(u_L,u_R,v_L)=\meet(u_L,u_R,v_R)$ or
  $f_v=\meet(v_L,v_R,u_L)=\meet(v_L,v_R,u_R)$.
\end{lemma}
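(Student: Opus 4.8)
The plan is to move everything into the dual tree $\cotree{T}$ and turn both the hypotheses and the conclusion into statements about projections of nodes onto paths, reserving the real work for a single dichotomy.

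First I would fix the two sides of $C$. Since $C$ is a closed curve, $S^2\setminus C$ has two components $A$ and $B$; because a face is never crossed by an edge of $C$, each face lies entirely in one component, and because $u,v\notin V[C]$, every face incident to $u$ lies on $u$'s side and every face incident to $v$ lies on $v$'s side. As $u_L,u_R,f_u$ are incident to $u$ and $v_L,v_R,f_v$ are incident to $v$, we may name the sides so that $u,u_L,u_R,f_u\in A$ and $v,v_L,v_R,f_v\in B$. Exactly one edge of the dual cut $\set{\dual g \cond g\in C}$ is a tree edge of $\cotree{T}$, namely the dual $\dual e$ of the edge $e$ inducing $C$; removing it splits $\cotree{T}$ into a subtree containing all of $A$ and one containing all of $B$. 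Writing $\dual e=(p_A,p_B)$ with $p_A\in A$, $p_B\in B$, every $\cotree{T}$-path from an $A$-face to a $B$-face passes through $p_A$ and then $p_B$.

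Next I would simplify the four meets, using that $\meet(a,b,c)$ is the unique vertex lying on all three pairwise paths (the median), equivalently $a$'s projection onto $\simplepath{b}{c}$. Since $v_L$ and $v_R$ both lie beyond the single gateway $p_A$ as seen from $\simplepath{u_L}{u_R}\subseteq A$, their projections onto that path coincide, so $\meet(u_L,u_R,v_L)=\meet(u_L,u_R,v_R)=\meet(u_L,u_R,p_A)=:m_u$; symmetrically $\meet(v_L,v_R,u_L)=\meet(v_L,v_R,u_R)=\meet(v_L,v_R,p_B)=:m_v$. By Lemma~\ref{lem:firstedge-cycle}, $f_u$ lies on $\simplepath{u_L}{u_R}$ and $f_v$ lies on $\simplepath{v_L}{v_R}$. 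Thus $f_u=m_u$ is equivalent to ``$f_u$ lies on both $\simplepath{u_L}{p_A}$ and $\simplepath{u_R}{p_A}$'', and similarly for $f_v=m_v$, so it suffices to show that this centring condition can fail for at most one of $u$, $v$.

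For the crux, suppose $f_u\neq m_u$. Then on $\simplepath{u_L}{u_R}$ the face $f_u$ lies strictly to one side of the branch point $m_u$; by the symmetry between the $L$ and $R$ labels we may assume $f_u$ lies on $\simplepath{u_L}{m_u}$, so that $f_u$ lies on $\simplepath{u_L}{p_A}$ but not on $\simplepath{u_R}{p_A}$. My plan is to translate this back to the embedding and force $f_v=m_v$. The ingredients are: the common boundary $S=V[f_u]\cap V[f_v]$ lies on $C$ (any vertex incident to a face of $A$ and a face of $B$ must lie on $C$), so the tree path from $u$ to $v$ crosses $C$ at a vertex of $S$; the perimeter of each flip-component is an $f_u$-arc glued to an $f_v$-arc (Observation~\ref{obs:perimeter}); and Observation~\ref{obs:face-separating-face-path}, applied to the same-side pairs $(u_L,u_R)$ and $(v_L,v_R)$ that are \emph{not} separated by $C$, which certifies exactly which faces $f$ have $C\cup E[f]$ separating them and hence sit at the branch point. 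Combining these, the off-centre displacement of $f_u$ toward $u_L$ forces the single $C$-crossing of the tree path to be absorbed on $u$'s side, which pins $f_v$ to the branch point $m_v$ on $\simplepath{v_L}{v_R}$.

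I expect the main obstacle to be precisely this last step: converting the purely tree-combinatorial statement ``$f_u$ is off-centre on its path'' into the geometric statement that the crossing can be absorbed on only one side, so that $f_v$ must be centred. The cleanest route seems to be to phrase both centring conditions through Observation~\ref{obs:face-separating-face-path} using the correct same-side face pairs, and to use Observation~\ref{obs:perimeter} to guarantee that the two arcs bounding $u$'s and $v$'s flip-components meet $C$ coherently, so that at most one of the two branch points can be displaced away from $f_u$ and $f_v$ respectively.
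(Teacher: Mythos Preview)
Your setup is sound and in fact lines up exactly with the paper's: your gateway faces $p_A,p_B$ are precisely what the paper calls $e_u,e_v$ (the two faces adjacent to the non-tree edge $e$ of $C$), and your reduction to showing either $f_u=\meet(u_L,u_R,p_A)$ or $f_v=\meet(v_L,v_R,p_B)$ is the right target.

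The gap is in your dichotomy step, which you yourself flag as the obstacle. You attempt to argue by contradiction (``if $f_u$ is off-centre then $f_v$ must be centred'') and gesture at Observation~\ref{obs:perimeter} and the crossing point of the tree path, but this never becomes an argument; in particular your claim that the tree path from $u$ to $v$ crosses $C$ at a vertex of $S$ is not justified (it crosses $C$, but why at a point of $V[f_u]\cap V[f_v]$?), and the ``absorption'' language has no precise content. The paper's resolution is a single clean observation you are missing: the non-tree edge $e$ lies in exactly one flip-component with respect to $f_u,f_v$, and since $u$ and $v$ are in different flip-components, at least one of them---say $u$---is in a different flip-component from $e$. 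That immediately gives $C\cup E[f_u]$ separating $u_L,u_R$ from $p_A$, so by Observation~\ref{obs:face-separating-face-path} the face $f_u$ lies on both $\simplepath{u_L}{p_A}$ and $\simplepath{u_R}{p_A}$, and together with Lemma~\ref{lem:firstedge-cycle} this yields $f_u=\meet(u_L,u_R,p_A)$ directly. So the ``which one succeeds'' question is answered by asking which of $u,v$ avoids $e$'s flip-component, rather than by trying to show both cannot fail.
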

\begin{proof}
  Let $e$ be the edge in $C\setminus{}T$, and let $e_u,e_v$ be the
  faces adjacent to $e$ that are on the same side of $C$ as $f_u$ and
  $f_v$ respectively.  Then $e$ is on all 4 paths in $\cotree{T}$ with
  $u_L$ or $u_R$ at one end and $v_L$ or $v_R$ at the other.
  At least one of $u,v$ is in a different flip-component from $e$, so
  we can assume without loss of generality that $u$ is.
  By Lemma~\ref{lem:firstedge-cycle} $f_u$ is on the path
  $\simplepath{u_L}{u_R}$.  And since $C\cup{}f_u$ separates $u_L$ and
  $u_R$ from $e_u$, $f_u$ is on both the paths $\simplepath{u_L}{e_u}$
  and $\simplepath{u_R}{e_u}$ by
  Observation~\ref{obs:face-separating-face-path}.
  Thus $f_u=\meet(u_L,u_R,e_u)=\meet(u_L,u_R,v_L)=\meet(u_L,u_R,v_R)$.
\qed
\end{proof}

\begin{lemma}\label{lem:meet-both-on}
  If a fundamental cycle $C$ separates $f_u$ from $f_v$
  such that $u\in{}V[C]$ and $v\in{}V[C]$, then either
  $f_u=\meet(u_L,v_L,v_R)=\meet(u_L,u_R,v_R)$ or
  $f_u=\meet(u_R,v_L,v_R)=\meet(u_L,u_R,v_L)$ or
  $f_v=\meet(v_L,u_L,u_R)=\meet(v_L,v_R,u_R)$ or
  $f_v=\meet(v_R,u_L,u_R)=\meet(v_L,v_R,u_L)$.
\end{lemma}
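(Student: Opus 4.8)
The plan is to mirror the proof of Lemma~\ref{lem:meet-both-off}, using Lemma~\ref{lem:firstedge-cycle} to place $f_u$ on $\simplepath{u_L}{u_R}$ and Observation~\ref{obs:face-separating-face-path} to pin it down, but with an extra layer of case analysis caused by the hypothesis $u,v\in V[C]$. By the symmetry of the four displayed alternatives under the exchange $u\leftrightarrow v$ (which sends the first two lines to the last two), it suffices to treat the case where $u$ lies in a different flip-component from the non-tree edge $e$ of $C$ (at least one of $u,v$ does, as in Lemma~\ref{lem:meet-both-off}), and to derive the two $f_u$-alternatives; the two $f_v$-alternatives then follow symmetrically.

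First I would record the structural consequence of $u,v\in V[C]$: since $V[C]$ is exactly the vertex set of the tree-path between the endpoints of $e$, both $u$ and $v$ lie on that path, so $\simplepath{u}{v}$ is a sub-arc of $C$ and the first edges out of $u$ and of $v$ both lie on $C$. Consequently $u_L,u_R$ lie on \emph{opposite} sides of $C$, and likewise $v_L,v_R$. In $\cotree{T}$ the edge $\dual{e}=(e_u,e_v)$ is the unique co-tree edge crossing the partition of the faces into the side $A$ containing $f_u$ and the side $B$ containing $f_v$ (with $e_u\in A$, $e_v\in B$ as in Lemma~\ref{lem:meet-both-off}). Fixing an orientation of $C$ pairs the two $A$-side representatives: either $\set{u_L,v_R}\subseteq A$ or $\set{u_R,v_L}\subseteq A$, and this dichotomy is precisely what produces the $v_R$-versus-$v_L$ alternative in the statement.

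Next, in the sub-case $\set{u_L,v_R}\subseteq A$, I would exploit that any $A$-face projects onto an $A$/$B$-straddling co-tree path within the $A$-segment terminating at the crossing face $e_u$. Thus the lone $B$-endpoint in each displayed meet may be replaced by $e_u$ without changing the value: both $\meet(u_L,u_R,v_R)$ (lone $B$-face $u_R$) and $\meet(u_L,v_L,v_R)$ (lone $B$-face $v_L$) collapse to the single intra-$A$ meet $\meet(e_u,u_L,v_R)$. This reduces the claim to showing $f_u=\meet(e_u,u_L,v_R)$ with all three faces on the same side $A$ of $C$ --- exactly the clean, same-side statement that closes the proof of Lemma~\ref{lem:meet-both-off}. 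I would finish it by combining $f_u\in\simplepath{u_L}{u_R}$ from Lemma~\ref{lem:firstedge-cycle} (whence $f_u\in\simplepath{u_L}{e_u}$, the $A$-part of that path) with Observation~\ref{obs:face-separating-face-path} applied, inside side $A$, to the separations that $C\cup E[f_u]$ induces among $u_L$, $e_u$, and $v_R$. The sub-case $\set{u_R,v_L}\subseteq A$ is identical with $u_R,v_L$ in the roles of $u_L,v_R$.

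The main obstacle I anticipate is exactly this last separation bookkeeping, which is what makes the present lemma harder than Lemma~\ref{lem:meet-both-off}. There, $u_L$ and $u_R$ shared the side of $u$, so a single separation of $\set{u_L,u_R}$ from $e_u$ sufficed and $v_L,v_R$ entered interchangeably. Here $u_L,u_R$ straddle $C$, so I must (i) justify the collapse to $e_u$ via the crossing structure of $\dual{e}$, and (ii) verify that $C\cup E[f_u]$ really separates $v_R$ from both $u_L$ and $e_u$ inside $A$ --- the delicate point being the position of the face $e_u$ relative to $f_u$, i.e. on which side of $u$'s flip-component the edge $e$ lies. Resolving this is where the perimeter description of flip-components (Observation~\ref{obs:perimeter}) and the cycle-construction from the proof of Lemma~\ref{lem:firstedge-cycle} are needed, and it is the reason the final statement splits into four cases rather than one.
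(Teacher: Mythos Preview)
Your overall plan mirrors the paper's proof closely: reduce the two displayed meets on a given line to the single intra-$A$ meet $\meet(e_u,u_L,v_R)$ via the crossing edge $\dual{e}$, invoke Lemma~\ref{lem:firstedge-cycle} for $f_u\in\simplepath{u_L}{e_u}$, and finish with Observation~\ref{obs:face-separating-face-path}. That skeleton is right, and the reduction step is correct.

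The gap is in your symmetry bookkeeping. You assume \emph{$u$} is in a different flip-component from $e$ and then try to derive an $f_u$-alternative. This pairing is inverted. Under your assumption, on the $A$ side the edges of $f_u$ separate $u_L$ from $e_u$ (good) and $u_L$ from $v_R$ (always true, since $u$ and $v$ are in different flip-components). But your step~(ii), that $C\cup E[f_u]$ separates $v_R$ from $e_u$, is exactly what \emph{cannot} be verified here: if $v$ and $e$ happen to lie in the same flip-component, then $v_R$ and $e_u$ sit in the same region of $A\setminus\{f_u\}$, the co-tree path $\simplepath{e_u}{v_R}$ need not pass through $f_u$, and $\meet(e_u,u_L,v_R)$ can lie strictly beyond $f_u$ on the path from $u_L$. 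The ``delicate point'' you flag is not about which side of $u$'s flip-component $e$ lies on; it is about whether $v$'s flip-component separates $v_R$ from $e_u$, and your hypothesis says nothing about that.

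The fix is a one-symbol swap: assume \emph{$v$} is in a different flip-component from $e$ (at least one of $u,v$ is) and derive the $f_u$-alternative. Then $C\cup E[f_u]$ separates $v_R$ from both $u_L$ and $e_u$ (the perimeter of $v$'s flip-component on the $A$ side lies in $E[f_u]$), Observation~\ref{obs:face-separating-face-path} gives $f_u$ on $\simplepath{u_L}{v_R}$ and $\simplepath{e_u}{v_R}$, and together with $f_u\in\simplepath{u_L}{e_u}$ you get $f_u=\meet(u_L,e_u,v_R)=\meet(u_L,u_R,v_R)=\meet(u_L,v_L,v_R)$. The $u\leftrightarrow v$ symmetry then yields the $f_v$-alternatives from the assumption that $u$ is in a different flip-component from $e$. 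This is precisely how the paper argues it; note the contrast with Lemma~\ref{lem:meet-both-off}, where $u_L,u_R$ lie on the \emph{same} side of $C$ and the assumption on $u$ suffices.
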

\begin{proof}
  Let $e$ be the edge in $C\setminus{}T$, and let $e_u,e_v$ be the
  faces adjacent to $e$ that are on the same side of $C$ as $f_u$ and
  $f_v$ respectively.  Then $e$ is on all 4 paths in $\cotree{T}$ with
  $u_L$ or $v_R$ at one end and $v_L$ or $u_R$ at the other.
  Assume that $u_L$ and $v_R$ are on
  the side of $C$ containing $f_u$ and $u_R$ and $v_L$ are on the side
  of $C$ containing $f_v$.
  At least one of $u,v$ is in a different flip-component from $e$, so
  assume that $v$ is.
  By Lemma~\ref{lem:firstedge-cycle} $f_u$ is on the path
  $\simplepath{u_L}{e_u}\subset\simplepath{u_L}{u_R}$.  And since
  $C\cup{}f_u$ separates $u_L$ and $e_u$ from $v_R$ it is on both the
  paths $\simplepath{u_L}{v_R}$ and $\simplepath{e_u}{v_R}$ by
  Observation~\ref{obs:face-separating-face-path}.
  Thus $f_u=\meet(u_L,e_u,v_R)=\meet(u_L,v_L,v_R)=\meet(u_L,u_R,v_R)$.
  The remaining cases are symmetric.
\qed
\end{proof}

\begin{theorem}
  If $f_u,f_v$ exist, either $f_u\in\set{\meet(u_L,u_R,v_L),
  \meet(u_L,u_R,v_R)}$ or $f_v\in\set{\meet(u_L,v_L,v_R),
  \meet(u_R,v_L,v_R)}$.
\end{theorem}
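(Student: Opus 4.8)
The plan is to deduce the theorem from Lemmas~\ref{lem:meet-on-and-off}, \ref{lem:meet-both-off}, and~\ref{lem:meet-both-on} by exhibiting a single fundamental cycle that separates $f_u$ from $f_v$ and then branching on how it meets $u$ and $v$. First note that $f_u\neq f_v$, since $u\in V[f_u]\setminus V[f_v]$. Each of the three lemmas is stated conditionally on the existence of a separating fundamental cycle with a prescribed incidence pattern of $u$ and $v$; the patterns they cover are ``both off'' (Lemma~\ref{lem:meet-both-off}), ``exactly one on'' (Lemma~\ref{lem:meet-on-and-off}, whose two sub-cases are related by the $u\leftrightarrow v$ symmetry of the whole setup), and ``both on'' (Lemma~\ref{lem:meet-both-on}). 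These are jointly exhaustive, so it suffices to produce one separating fundamental cycle and invoke the lemma matching whichever pattern it realises.

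To produce the cycle, I would argue as in the proof of Lemma~\ref{lem:cotree-cycle-has-result}, with the roles of tree and cotree interchanged. Since $f_u\neq f_v$ and $\cotree{T}$ is a spanning tree of $\dual{G}$ (Observation~\ref{obs:tree-cotree}), the dual path $\simplepath{f_u}{f_v}$ is non-empty; let $\dual{e}$ be any edge on it, so that $e$ is a primal non-tree edge, and let $C=C(e)$ be its fundamental cycle in $T\cup\set{e}$. The only edge of $C$ whose dual lies in $\cotree{T}$ is $e$ itself, since every other edge of $C$ is a tree edge. Hence moving along $\cotree{T}$ switches side with respect to $C$ exactly when crossing $\dual{e}$, and because $\simplepath{f_u}{f_v}$ uses $\dual{e}$ once, $C$ separates $f_u$ from $f_v$. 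Moreover, whichever edge $e$ is chosen, the lemmas' internal requirement that at least one of $u,v$ lie in a different flip-component from $e$ is automatic: $u$ and $v$ are in different flip-components by hypothesis, and $e$ lies in exactly one flip-component, hence in a different one from at least one of $u,v$.

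With $C$ fixed, I would case on whether $u\in V[C]$ and whether $v\in V[C]$, applying Lemma~\ref{lem:meet-both-off}, Lemma~\ref{lem:meet-on-and-off} (with $u,v$ swapped when $u\in V[C]$ and $v\notin V[C]$), or Lemma~\ref{lem:meet-both-on} accordingly. To see that every resulting equality lands in the stated disjunction, I would use that $\meet(a,b,c)$, being $a$'s projection onto $\simplepath{b}{c}$, is the median of $a,b,c$ in the tree $\cotree{T}$ and is therefore symmetric in its three arguments. In particular $\meet(v_L,v_R,u_L)=\meet(u_L,v_L,v_R)$ and $\meet(v_L,v_R,u_R)=\meet(u_R,v_L,v_R)$, so every $f_v$-equality produced by Lemmas~\ref{lem:meet-both-off} and~\ref{lem:meet-both-on} rewrites to $f_v\in\set{\meet(u_L,v_L,v_R),\meet(u_R,v_L,v_R)}$, while the $f_u$-equalities already read $f_u\in\set{\meet(u_L,u_R,v_L),\meet(u_L,u_R,v_R)}$.

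The $\meet$-symmetry bookkeeping is routine, and the existence of a separating cycle is a short duality argument; the step I would be most careful about is verifying that the three lemmas' incidence patterns are genuinely exhaustive, so that no matter which pattern the chosen $C$ realises a matching lemma applies and its conclusion collapses into one of the four candidate medians of the triples drawn from $\set{u_L,u_R,v_L,v_R}$.
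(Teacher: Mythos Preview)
Your proposal is correct and follows essentially the same approach as the paper: exhibit a fundamental cycle separating $f_u$ from $f_v$, then case on the incidence of $u$ and $v$ with that cycle and invoke the matching lemma among Lemmas~\ref{lem:meet-on-and-off}, \ref{lem:meet-both-off}, and~\ref{lem:meet-both-on}. The paper's proof is only two sentences and leaves implicit both the existence of the separating cycle and the bookkeeping that every lemma conclusion lands in the stated four-element set; you have supplied both of these correctly (the duality argument producing $C$ from an edge on $\simplepath{f_u}{f_v}$ in $\cotree{T}$, and the median symmetry of $\meet$).
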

\begin{proof}
  If they exist there is at least one fundamental cycle $C$ separating
  them.  This cycle must have the properties of at least one of
  Lemmas~\ref{lem:meet-on-and-off}, \ref{lem:meet-both-off}, or~\ref{lem:meet-both-on}.
\qed
\end{proof}

By computing the at most two different $\meet$ values and checking
which ones (if any) contain $u$ or $v$ we therefore get at most two
candidates and are guaranteed that at least one of them is in
$\set{f_u,f_v}$ if they exist.



\subsection{Finding the other face}

\begin{lemma}
  Let $u$, $v$, and $f_u$ be given. Then the first edge $e_L$ on 
  $\simplepath{f_u}{v_L}$ or the first edge $e_R$ on $\simplepath{f_u}{v_R}$ induces a fundamental cycle
  $C(e_R)$ or $C(e_L)$ in $T$ 
  that separates $f_u$ from $f_v$.
\end{lemma}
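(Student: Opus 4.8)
The plan is to reduce the statement to a combinatorial fact about the tree $\cotree{T}$ and then translate back through planar duality. First I would pin down where $f_v$ sits. By Lemma~\ref{lem:firstedge-cycle}, $f_v$ lies on the $\cotree{T}$-path $\simplepath{v_L}{v_R}$. Next I would note that the first $T$-edge on the path from $v$ to $u$ is incident to $v$, so both faces adjacent to it contain $v$, i.e. $v\in V[v_L]\cap V[v_R]$; since $v\notin V[f_u]$ by hypothesis, this gives $f_u\neq v_L$ and $f_u\neq v_R$, so both $\simplepath{f_u}{v_L}$ and $\simplepath{f_u}{v_R}$ are non-trivial and $e_L,e_R$ are well defined. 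Finally $f_u\neq f_v$, because $u\in V[f_u]\setminus V[f_v]$.

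The combinatorial heart of the argument is the claim that the first edge of the $\cotree{T}$-path $\simplepath{f_u}{f_v}$ is either $e_L$ or $e_R$. To see this I would root $\cotree{T}$ at $f_u$ and compare the subtrees hanging off $f_u$ through $e_L$ and through $e_R$. If $e_L=e_R$, then $v_L$ and $v_R$ lie in the same subtree, so the entire path $\simplepath{v_L}{v_R}$, and in particular $f_v$, lies in that subtree, and the path from $f_u$ to $f_v$ begins with $e_L=e_R$. If instead $e_L\neq e_R$, then $v_L$ and $v_R$ lie in distinct subtrees, which forces $f_u$ itself onto $\simplepath{v_L}{v_R}$, with the path leaving $f_u$ towards $v_L$ through $e_L$ and towards $v_R$ through $e_R$. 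As $f_v$ also lies on $\simplepath{v_L}{v_R}$ and $f_v\neq f_u$, it sits strictly on one side of $f_u$, so $\simplepath{f_u}{f_v}$ starts with $e_L$ or with $e_R$ accordingly. In every case, deleting the identified edge from $\cotree{T}$ places $f_u$ and $f_v$ in different components.

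It then remains to convert ``$f_u$ and $f_v$ lie in different components of $\cotree{T}\setminus e_L$'' into ``$C(e_L)$ separates $f_u$ from $f_v$''. Here I would invoke the tree/cotree duality behind Observation~\ref{obs:tree-cotree}: the edge $e_L$ is simultaneously a tree edge of $\cotree{T}$ and, as a primal non-tree edge, the generator of the fundamental cycle $C(e_L)$ in $T$, and this primal cycle is dual to the fundamental cut of $e_L$ in $\cotree{T}$. The two sides of that cut are exactly the two components of $\cotree{T}\setminus e_L$, so a face lies inside or outside $C(e_L)$ according to its component. Hence $C(e_L)$ (resp.\ $C(e_R)$) separates $f_u$ from $f_v$ precisely because they fall in different components of $\cotree{T}\setminus e_L$ (resp.\ $\cotree{T}\setminus e_R$). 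This is the same passage between primal cycles and dual cuts already used in the proof of Lemma~\ref{lem:cotree-cycle-has-result}.

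I expect the main obstacle to be the case analysis of the second paragraph, especially handling the coincidence $e_L=e_R$ and verifying, when $e_L\neq e_R$, that $f_u$ genuinely lies on $\simplepath{v_L}{v_R}$ between the two subtrees so that $f_v$ is forced strictly to one side. The duality step is routine once the combinatorial separation is in hand, but it must be stated with care, since $e_L$ is named as an edge of $\cotree{T}$ while the fundamental cycle $C(e_L)$ it induces lives in the primal tree $T$.
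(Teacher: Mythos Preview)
Your proof is correct and follows essentially the same approach as the paper's. The paper's proof is a one-liner: by Lemma~\ref{lem:firstedge-cycle} $f_v$ lies on $\simplepath{v_L}{v_R}$, hence the first edge on $\simplepath{f_u}{f_v}$ is the first edge on $\simplepath{f_u}{v_L}$ or on $\simplepath{f_u}{v_R}$; you simply spell out the tree case analysis and the cycle/cut duality that the paper leaves implicit.
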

\begin{proof}
  By lemma~\ref{lem:firstedge-cycle}, $f_v$ is on
  $\simplepath{v_L}{v_R}$ in $\cotree{T}$, so the first edge on
  $\simplepath{f_u}{f_v}$ is also the first edge on either
  $\simplepath{f_u}{v_L}$ or $\simplepath{f_u}{v_R}$.
\qed
\end{proof}

Thus given the correct $f_u$ we can find at most two candidates for an edge $e$ that induces a fundamental cycle $C(e)$ in $T$ that separates $f_u$ from $f_v$, and be guaranteed that 
one of them is correct.

\begin{observation}\label{obs:pies}
For each vertex, $v$, we may consider the projection $\pi(v)$ of $v$ onto the cycle $C$.
For each flip-component, $X$, we may consider the projection $\pi(X)=\{\pi(v)\mid v\in X\}$. If $X$ is an articulation-flip component, the projection $\pi(X)$ is a single point in $S=V\left[f_u\right]\cap V\left[f_v\right]$. If $X$ is a separation-flip component, its projection is a segment of the cycle, $\simplepath{\pi_1}{\pi_2}$, between the separation pair $(\pi_1,\pi_2)\subset C(e)$ where $\pi_1,\pi_2\in S$.
\end{observation}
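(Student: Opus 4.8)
The plan is to reduce everything to one structural fact: each flip-component meets the rest of $G$ only at its incident vertices of $S=V[f_u]\cap V[f_v]$, and those attachment vertices lie on the cycle $C=C(e)$. I read $\pi(v)$ as the vertex of $C$ closest to $v$ in $T$ (the first vertex of $C$ on the $T$-path out of $v$), which is well defined because $C$ is a $T$-path together with a single non-tree edge, so projecting onto $C$ is the same as projecting onto that path. The first thing to establish is that the common vertices actually sit on the cycle, i.e. $S\subseteq V[C]$: viewing $C$ as a closed curve on $S^2$, its separating property puts $f_u$ and $f_v$ in different components of $S^2\setminus C$, so if some $s\in S$ were off the curve a small disc around $s$ would lie in a single component and force both faces incident to $s$ into that component, a contradiction.

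\textbf{Step 1 (identify $X\cap V[C]$).} Since a flip-component $X$ is cut from $G$ only through corners at $S$-vertices, the only vertices through which $X$ connects to the rest of $G$ are its incident $S$-vertices. For an articulation component, incident to a single $s\in S$, I would note that at $s$ the two cut-corners (toward $f_u$ and toward $f_v$) separate the two cycle-edges of $C$ in the rotation around $s$, so those edges land in different pieces of $s$; the cycle therefore passes from one flip-component to another at $s$, and an articulation component, whose sole attachment is $s$, can contain no edge of $C$ (such an edge would trap an arc of $C$ that can neither close up nor leave), giving $X\cap V[C]=\{s\}$. For a separation component, incident to $\pi_1,\pi_2\in S$, I would invoke Observation~\ref{obs:perimeter}: its perimeter is a nontrivial $f_u$-path and a nontrivial $f_v$-path joined at $\pi_1,\pi_2$, and since $C$ separates $f_u$ from $f_v$ it must cross $X$ along a single arc running from $\pi_1$ to $\pi_2$ and containing no other $S$-vertex. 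Hence $X\cap V[C]$ is exactly the arc $\simplepath{\pi_1}{\pi_2}$, and $\pi_1,\pi_2$ are consecutive on $C$.

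\textbf{Step 2 (projections land on $X\cap V[C]$).} Because the attachment vertices form a cut vertex (articulation) or a separation pair (separation), the $T$-path from any $x\in X$ to any cycle vertex outside $X$ must pass through one of them; so the nearest cycle vertex to $x$ already lies in $X\cap V[C]$. Every vertex of $X$ that is itself on $C$ projects to itself, whence $\pi(X)=X\cap V[C]$. Combined with Step~1 this yields $\pi(X)=\{s\}\subseteq S$ in the articulation case and $\pi(X)=\simplepath{\pi_1}{\pi_2}$ with $\pi_1,\pi_2\in S$ in the separation case, as claimed.

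\textbf{Main obstacle.} I expect the crux to be the geometric claims of Step~1 — that an articulation component contains no edge of $C$, and that $C$ crosses a separation component along a single $\pi_1$-to-$\pi_2$ arc. These are where planarity and the separating property of $C$ are genuinely used, turning the purely combinatorial definition of flip-components into the clean arc (``pie-slice'') picture; the projection argument of Step~2 is then routine. A few degenerate cases (for example $|S|=1$, or an $S$-vertex meeting $f_u$ or $f_v$ in more than one corner) should be checked separately, but they do not affect the conclusion.
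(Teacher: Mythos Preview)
The paper states Observation~\ref{obs:pies} without proof; it is offered as evident from the ``sausage'' picture (Figure~\ref{fig:sausages}) together with Observation~\ref{obs:perimeter}, so there is no argument in the paper to compare against. Your proposal therefore supplies what the authors omit, and the overall plan is sound: first show $S\subseteq V[C]$ via the Jordan-curve separation of $f_u$ and $f_v$, then identify $X\cap V[C]$ with the attachment set of $X$, and finally observe that the $T$-projection of any $x\in X$ onto $C$ already lies in $X$ because the attachment vertex (or pair) disconnects $X$ from the rest of $G$.

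Two remarks on the geometric step you flag as the crux. For the separation case your appeal to Observation~\ref{obs:perimeter} is exactly the right move: since the perimeter of $X$ has a nontrivial $f_u$-arc and a nontrivial $f_v$-arc, $X$ meets both sides of $C$, so $C$ must enter $X$; it can only enter or leave at $\pi_1,\pi_2$, and being a simple cycle it visits each once, giving a single arc $\simplepath{\pi_1}{\pi_2}$. For the articulation case your rotation-at-$s$ argument can be tightened by the same observation: one of the two perimeter paths is trivial, so $X$ touches only one of $f_u,f_v$ away from $s$; hence $X\setminus\{s\}$ lies entirely on one side of $C$ and cannot contain any edge of $C$, forcing $X\cap V[C]=\{s\}$. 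With that phrasing the degenerate cases you mention ($|S|=1$, or an $S$-vertex incident to $f_u$ or $f_v$ in several corners) are absorbed automatically.
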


\subsubsection{Finding an articulation-flip.}
Let $(x,y)$ be any edge inducing a cycle $C$ in $T\cup\set{(x,y)}$
that separates $f_u$ from $f_v$, let $\pi(u)=\meet_T(u,x,y)$ be the projection of $u$ on $C$. 

Now the articulation-flip cases are not necessarily symmetrical. 
First we present how to detect an articulation-flip, given $u,v,$ and $f_u$, if $f_v$ plays the role of $f_2$ (see Section~\ref{sec:articulationflip}).

If the flip-component containing $v$ is an articulation-flip component, then $\pi(v)$ is an articulation point incident to both $f_u$ and $f_v$, but the opposite is not necessarily the case. Assume $\pi(v)$ is incident to both $f_u$ and $f_v$ and let $c_u$ denote a corner between $\pi(v)$ and $f_u$.

\begin{SCfigure}[50]
\centering
\caption{ If $f_v$ is an articulation point, so is $\pi(v)$. But then the co-tree path from $u_L$ to $v_L$ must go through $f_v$.
Left: Primal graph. Right: Dual graph.
}
\includegraphics[width=0.6\textwidth]{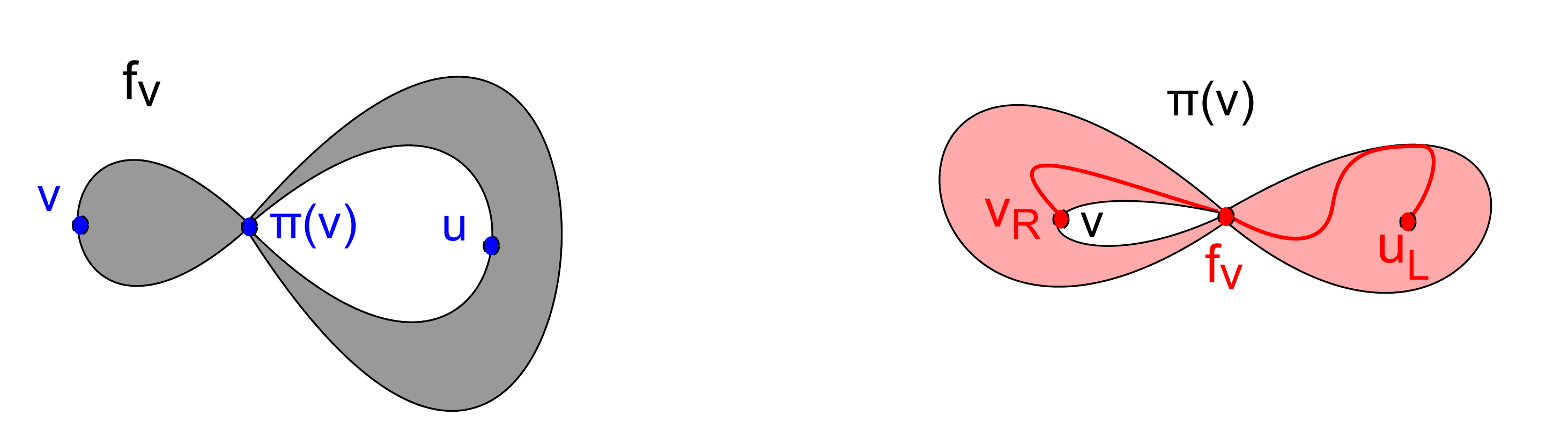}
\label{fig:articulation_dual}
\end{SCfigure}

Note that if $\pi(v)$ is an articulation point with corners $c_1,c_2$ both incident to $f_v$, then $f_v$ is an articulation point in the dual graph with corners $c_1,c_2$ both incident to $\pi(v)$. Removing $f_v$ from the dual graph would split its component into several components, and clearly, aside from $f_v$, only faces in \emph{one} of these components may contain faces incident to $v$. Any path in the co-tree starting and ending in different components w.r.t the split will have the property that the first face incident to $v$ on that path is $f_v$. (See Figure~\ref{fig:articulation_dual}.)

Now, in the case $f_u=f_1$ and $f_v=f_2$, to find the corner of $\pi(v)$ incident to $f_u$, we can simply use $\operatorname{linkable}(\pi(v),u)$ from before, which will return a corner of $f_u$ incident to $\pi(v)$. 
To find the two corners of $f_v$: 
With the dual structure (see Observation~\ref{obs:dualstruct}) 
we may mark the face $f_v$, and expose the vertices $u,v$. Now, $\pi(v)$ has a unique place in the face-list of some cluster --- if and only if that place is in the root cluster, and $n_{\min} = s_{\min} = 0$ for 
that cluster, $f_v$ plays the role of $f_2$. That is, if and only if $\pi(v)$ has a corner incident to $f_v$ to one side, and a corner incident to $f_v$ to the other side. In affirmative case, $\pi(v)$ appears with at least one corner to either 
vertex list; those corners can now be used as cutting-corners for the articulation-flip.

If instead $f_v$ played the role of $f_1$, a similar procedure is done with $\pi(u)$.

\begin{theorem}
Given $u,v$ are not already linkable, we can determine whether $u,v$ are linkable via an articulation-flip in time $\Oo(\log^2 n)$.
\end{theorem}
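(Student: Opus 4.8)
The plan is to reduce the decision to a constant number of candidate configurations, each checkable in $\Oo(\log^2 n)$ time, and then for each candidate to test whether the flip-component containing $v$ (or $u$) is genuinely an articulation-flip component. First I would apply the Theorem of the preceding subsection: if a suitable pair $f_u,f_v$ exists at all, then at least one of the four values $\meet(u_L,u_R,v_L)$, $\meet(u_L,u_R,v_R)$, $\meet(u_L,v_L,v_R)$, $\meet(u_R,v_L,v_R)$ is a correct $f_u$ or $f_v$. The faces $u_L,u_R,v_L,v_R$ are read off from the first edges on the $T$-paths out of $u$ and $v$ after an $\expose(u,v)$ in the primal top tree; each $\meet$ value is then obtained by exposing the relevant faces in the dual top tree and projecting onto a $\cotree{T}$-path, which costs $\Oo(\log^2 n)$ by the machinery of Sections~\ref{sec:marking} and~\ref{sec:query}. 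As there are only four such values, this stage is $\Oo(\log^2 n)$.

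Next, for each of the (at most four) candidate faces I would recover its companion. By the opening Lemma of ``Finding the other face,'' given a correct $f_u$ the first edge on $\simplepath{f_u}{v_L}$ or on $\simplepath{f_u}{v_R}$ induces a fundamental cycle $C$ in $T$ separating $f_u$ from $f_v$; reading off these first edges is again $\Oo(\log^2 n)$ and yields at most two candidate cycles. Thus after a constant number of $\Oo(\log^2 n)$-time steps I hold a constant-size list of candidate triples $(f_u,f_v,C)$, at least one of which is correct whenever an articulation-flip exists.

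The remaining and principal task is, for a given candidate, to decide whether the flip-component containing $v$ is an articulation-flip component --- equivalently, by Observation~\ref{obs:perimeter} and Observation~\ref{obs:pies}, whether the projection $\pi(v)=\meet_T(v,x,y)$ onto $C$ collapses to a single point of $S=V[f_u]\cap V[f_v]$ that carries corners of $f_v$ on both of its sides. The subtle point is that incidence of $\pi(v)$ to both $f_u$ and $f_v$ is necessary but not sufficient; I must certify the stronger condition that cutting $f_v$ out of $\cotree{T}$ separates the two sides of $\pi(v)$, i.e.\ that $f_v$ is an articulation point of the dual with corners at $\pi(v)$ on either side. Following Observation~\ref{obs:dualstruct}, I would run this test in the dual structure: mark the face $f_v$, expose $u$ and $v$, and locate $\pi(v)$ in the face-list of the resulting root cluster. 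The articulation condition holds exactly when $\pi(v)$ lies in the root cluster with $n_{\min}=s_{\min}=0$, meaning a corner of $f_v$ is recorded on each side of $\pi(v)$; the third corner needed to specify the flip, a corner of $f_u$ at $\pi(v)$, is then obtained from a single call to $\linkablequery(\pi(v),u)$.

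I expect the main obstacle to be precisely this verification: separating a true articulation-flip component from the spurious case where $\pi(v)$ touches both faces but does not actually separate them. Handling it correctly needs the two-sided $n_{\min},s_{\min}$ bookkeeping in the dual structure, so that the two corners of $f_v$ incident to $\pi(v)$ are detected on opposite sides at once, together with attention to the asymmetry between the roles $f_1,f_2$ of $f_u,f_v$ (which forces the symmetric test with $\pi(u)$ to be run as well). Since every individual test is a constant number of mark/expose operations at $\Oo(\log^2 n)$ each, and the search ranges over only constantly many candidates, the total cost is $\Oo(\log^2 n)$.
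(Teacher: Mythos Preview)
Your proposal follows essentially the same route as the paper: enumerate the constant-size candidate set of faces via the $\meet$ theorem, recover a separating cycle from the first co-tree edge out of a candidate $f_u$, project $v$ onto that cycle, and then certify the articulation condition by marking $f_v$ in the dual structure and checking that $\pi(v)$ sees an $f_v$-corner on each side (with the symmetric $\pi(u)$ test for the other role assignment). The one place where your writeup is looser than the paper is the sentence ``I hold a constant-size list of candidate triples $(f_u,f_v,C)$'': the lemma you invoke only hands you $C$, not $f_v$, and you should say explicitly how $f_v$ is read off (the paper uses that once $\pi(v)$ is an articulation point, $f_v$ is the first face incident to $v$ on any co-tree path crossing the split, cf.\ the discussion around Figure~\ref{fig:articulation_dual}). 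Apart from that expository gap the argument and its cost analysis match the paper.
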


\subsubsection{Finding a separation-flip}

Assume $v,u$ are not linkable via an articulation-flip, determine if they are linkable via a separation-flip.

\begin{lemma}
Let $(x,y)$ be any edge inducing a cycle $C$ in $T\cup\set{(x,y)}$
that separates $f_u$ from $f_v$, let $\pi(u)=\meet_T(u,x,y)$ be the projection of $u$ on $C$. 
  Let $e_1, e_2$ be the edges incident to $\pi(u)$ on $C$.  Then at least
  one of $e_1$, $e_2$ is in the same flip-component as $u$ w.r.t $f_u$
  and $f_v$.
\end{lemma}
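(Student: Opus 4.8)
The plan is to analyze the flip-component $X$ containing $u$ and to read off everything from the projection structure of Observation~\ref{obs:pies}. First I would record a basic fact: every vertex of $S = V[f_u]\cap V[f_v]$ lies on $C$. Indeed $C$ separates $f_u$ from $f_v$, so a vertex incident to both faces cannot lie strictly on one side of $C$ and must therefore be on $C$ itself. In particular $\pi(u)$ is a genuine vertex of $C$ with exactly two incident cycle-edges $e_1,e_2$, and the separation-pair vertices of any flip-component are vertices of $C$.

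Next I would argue that $X$ is a \emph{separation}-flip component rather than an articulation-flip component. We are in the regime where $u,v$ are not linkable via an articulation-flip, and $u\in V[f_u]$ lies on the $f_u$-part of the perimeter of $X$; by Observation~\ref{obs:perimeter} its $f_v$-part is then the only candidate to be trivial. If $X$ were an articulation-flip component it would touch $C$ in a single point of $S$, and flipping it around that point would expose $u$ to $f_v$ and thus make $u,v$ linkable by an articulation-flip, contradicting the assumption. Hence by Observation~\ref{obs:pies} the projection $\pi(X)$ is a full segment $\simplepath{\pi_1}{\pi_2}$ of $C$ with $\pi_1,\pi_2\in S$.

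The key geometric claim is then that the edges of $C$ lying inside the arc $\simplepath{\pi_1}{\pi_2}$ belong to $X$, while $\pi_1,\pi_2$ are precisely the transition points at which $C$ passes from $X$ into its neighboring flip-components. This is the sausage picture of Figure~\ref{fig:sausages}: since all of $S$ sits on $C$, the only cuts that can detach material from $X$ along this arc occur at $S$-vertices, and away from $\pi_1,\pi_2$ these cuts only peel off articulation ``balloons'' carrying no edge of $C$. Granting this, the proof finishes by a short case split. As $u\in X$ we have $\pi(u)\in\simplepath{\pi_1}{\pi_2}$. If $\pi(u)$ is interior to the arc, both $e_1$ and $e_2$ lie inside the arc and hence in $X$. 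If $\pi(u)\in\set{\pi_1,\pi_2}$, then exactly one of $e_1,e_2$ points back into the arc and lies in $X$ while the other leaves into a neighbor. Either way at least one of $e_1,e_2$ is in $X$, which is the claim.

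I expect the main obstacle to be the geometric claim of the third paragraph: precisely tying the projection segment $\simplepath{\pi_1}{\pi_2}$ to $X$'s ownership of that arc's cycle-edges, and correctly treating the boundary case $\pi(u)\in\set{\pi_1,\pi_2}$ where only one incident cycle-edge survives in $X$. A clean way to make this rigorous is to show that the vertices of $X$ lying on $C$ form exactly the closed arc $[\pi_1,\pi_2]$: any such vertex off $S$ is uncut and lies in the unique component reached along a tree-path inside $X$, and the only $C$-vertices bounding $X$ are the separation pair. Once the arc is identified with $X$ in this way, the conclusion about $e_1,e_2$ is immediate.
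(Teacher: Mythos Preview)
Your proposal is correct and follows the same route as the paper: both reduce to Observation~\ref{obs:pies} and the dichotomy on whether $\pi(u)$ lies in $S$ (equivalently, whether $\pi(u)$ is an endpoint $\pi_1,\pi_2$ of the projected arc or an interior point). The paper's proof is just two sentences citing that observation directly, whereas you unpack it---explicitly showing $S\subseteq V[C]$, ruling out the articulation case via the section-wide assumption, and justifying that the cycle-edges along the arc $\simplepath{\pi_1}{\pi_2}$ lie in $X$---so your version is more self-contained but not a different argument.
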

\begin{proof}
This follows from Observation~\ref{obs:pies}: If $\pi(u)$ is 
incident to both $f_u$ and $f_v$, then exactly one of the edges $e_1,e_2$ is in the same flip-component as $u$. Otherwise,
both of the edges $e_1,e_2$ are in the same flip-component as $u$.
\qed
\end{proof}

\begin{lemma}
  Let $C$ be any fundamental cycle separating $f_u$ from $f_v$, let $e_u$
  be an edge on $C$ in the same flip-component as $u$, let $f_1$ be
  the face adjacent to $e_u$ that is separated from $f_u$ by $C$, and
  let $f_2\in\set{v_L,v_R}$ be a face on the same side of $C$ as
  $f_1$.  Then $f_v$ is the first face on $\simplepath{f_1}{f_2}$ that
  contains $v$.
\end{lemma}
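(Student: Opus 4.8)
The plan is to prove two things at once: that $f_v$ lies on the co-tree path $\simplepath{f_1}{f_2}$, and that, reading the path from $f_1$, it is the \emph{first} face on it incident to $v$. Both will fall out of a single topological picture, namely that on the side of $C$ not containing $f_u$, the boundary walk of $f_v$ encloses a region $R$ which is exactly $v$'s flip-component restricted to this side; $R$ will contain $v$ and every face incident to $v$ except $f_v$ itself, while $f_1$ will lie outside $R$. The first ingredient I would record is that $\simplepath{f_1}{f_2}$ never crosses $C$: since $C$ is a fundamental cycle it is induced by a non-tree edge $e$, and $\dual{e}$ is the \emph{unique} co-tree edge whose deletion splits $\cotree{T}$ into the two sides of $C$ (the duals of the tree edges of $C$ are not in $\cotree{T}$). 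As $f_1$ and $f_2$ are on the same side of $C$ by hypothesis, the whole path stays on the $f_v$-side.

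Next I would establish the enclosure claim. By Observation~\ref{obs:perimeter} the perimeter of $v$'s flip-component is a path along $f_u$ glued to a path along $f_v$; since $v\in V[f_v]\setminus V[f_u]$, the vertex $v$ lies on the $f_v$-part of this perimeter and on no edge of $f_u$. Cutting along $C$ removes the $f_u$-part of the perimeter, so on the $f_v$-side the boundary of $f_v$ together with $C$ bounds a region $R$ equal to $v$'s flip-component there. Every face incident to $v$ other than $f_v$ is interior to this flip-component, hence lies in $R$, whereas $f_1$—adjacent to the edge $e_u$ of $u$'s \emph{distinct} flip-component—lies outside $R$.

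With the enclosure in hand both conclusions are short. For membership, $f_1$ and $f_2$ are not separated by $C$, but $C\cup E[f_v]$ does separate them: in $\dual{G}$, deleting the duals of $C$ and isolating the dual vertex $f_v$ disconnects $f_1$ (outside $R$) from $f_2$ (which lies in $R$, or equals $f_v$), so $f_v\in\simplepath{f_1}{f_2}$ by Observation~\ref{obs:face-separating-face-path}. For firstness, suppose some face $f'\neq f_v$ incident to $v$ occurred strictly before $f_v$ on $\simplepath{f_1}{f_v}$. Then $f'$ is interior to $v$'s flip-component, so $f'\in R$ while $f_1\notin R$; but the subpath joins $f_1$ to $f'$ without crossing $C$ and, since every boundary edge of $R$ is either on $C$ or incident to $f_v$, the only way for a co-tree path to enter $R$ without crossing $C$ is to visit $f_v$. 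This contradicts $f'$ preceding $f_v$, so $f_v$ is indeed the first $v$-incident face on $\simplepath{f_1}{f_2}$.

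The main obstacle is making the enclosure claim genuinely rigorous rather than appealing to the picture: one must argue from Observation~\ref{obs:perimeter} and from $v\notin V[f_u]$ that, after cutting along $C$, the boundary walk of $f_v$ truly separates the whole local fan of faces at $v$ (all but $f_v$) from $f_1$, including the possibility that $v$ meets $f_v$ in several corners. I would also dispose of the degenerate cases $f_1=f_v$ and $f_2=f_v$ separately; in each of these $f_v$ is trivially the first $v$-incident face on the path, since everything preceding it lies outside $R$ and is therefore not incident to $v$.
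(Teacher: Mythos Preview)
Your argument is correct and, for the ``firstness'' half, proceeds along a somewhat different route from the paper's two-sentence proof. Both you and the paper obtain membership the same way: show that $C\cup E[f_v]$ separates $f_1$ from $f_2$ and invoke Observation~\ref{obs:face-separating-face-path}. For firstness, however, the paper argues via projections onto $C$: once the path has met its first $v$-incident face, any subsequent face can only meet $C$ in the arc between $u'=\meet_T(u,x,y)$ and $v'=\meet_T(v,x,y)$, so $C$ together with that face's boundary cannot separate $f_1$ from $f_2$---hence $f_v$, which \emph{does} separate, cannot come later. Your approach instead exploits the flip-component structure uniformly: interior faces of distinct flip-components lie in different connected components of $\dual{G}\setminus\{f_u,f_v\}$, so any dual path from $f_1$ (interior to $u$'s component, once the degenerate case $f_1=f_v$ is set aside) to a $v$-incident face (interior to $v$'s component) must visit $f_u$ or $f_v$; since the co-tree path stays on the $f_v$-side it never reaches $f_u$, forcing it through $f_v$ first. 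This is conceptually cleaner and reuses one idea for both halves; the paper's route is terser but leans on an extra geometric observation about where later faces can touch $C$.

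Two small points of precision you should tighten. First, do not describe $R$ as what the boundary walk of $f_v$ together with $C$ ``bounds''---that locus need not be a single region. The definition you actually use (and which makes the argument go through verbatim) is: $R$ is the connected component of $\dual{G}\setminus(\{f_v\}\cup\dual{E[C]})$ containing $f_2$; equivalently, the interior faces of $v$'s flip-component on the $f_v$-side. With this, your claims ``$f_2\in R$'', ``$f_1\notin R$'', and ``entering $R$ without crossing $C$ requires visiting $f_v$'' are immediate. Second, the sentence ``cutting along $C$ removes the $f_u$-part of the perimeter'' overstates things: $C$ need not coincide with the $f_u$-arc of the flip-component's perimeter. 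You do not need this; what you actually need (and have) is only that $f_u$ lies on the far side of $C$, so a co-tree path confined to the $f_v$-side cannot reach it.
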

\begin{proof}
  $C\cup{}E[f_v]$ separates $f_1$ and $f_2$, so by
  Observation~\ref{obs:face-separating-face-path} $f_v$ is on the
  $\simplepath{f_1}{f_2}$ path.  It must be the first face on that
  path that contains $v$ because for any face $f$ after that,
  $C\cup{}V[f]$ does not separate $f_1$ and $f_2$, since it can only
  touch the part of $C$ between $u'=\meet(u,x,y)$ and
  $v'=\meet(v,x,y)$ where $(x,y)$ is the edge inducing $C$.
\qed
\end{proof}

\subsection{Finding the separation pair and corners}

Assume $u,v$ are not linkable and not linkable via an articulation-flip.

\begin{lemma}
  Given $u$, $v$, $f_u$, and $f_v$, let $\dual{(x,y)}$ be any edge on
  $\simplepath{f_u}{f_v}$ inducing a separating cycle $C$.
  If $\pi(u)=\pi(v)=\alpha$, then $\alpha$ is one of the separation
  points if it is adjacent to both $f_u$ and $f_v$, and otherwise no
  separation pair for $u,v$ exists.  The other separation point,
  $\beta$, is then the first vertex $\neq \alpha$ adjacent to both $f_u$ and $f_v$
  on either $\simplepath{\alpha}{x}$ or $\simplepath{\alpha}{y}$.
  If instead $\pi(u)\neq\pi(v)$, then $\alpha,\beta$ are among the first two
  vertices adjacent to both $f_u$ and $f_v$ either on
  $\simplepath{\pi(u)}{x}$ and $\simplepath{v}{x}$, or on
  $\simplepath{u}{y}$ and $\simplepath{v}{y}$.
\end{lemma}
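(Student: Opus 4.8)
The plan is to derive the statement entirely from Observation~\ref{obs:pies}: projecting the flip-components onto the separating cycle $C$ tiles $C$ into closed arcs that meet only at vertices of $S=V[f_u]\cap V[f_v]$, where each separation-flip-component $X$ occupies an arc $\simplepath{\pi_1}{\pi_2}$ whose endpoints $\pi_1,\pi_2\in S$ are precisely its separation pair. A separation-flip making $u,v$ linkable flips exactly one such component: either the one containing $v$, moving $v$ onto $f_u$, or the one containing $u$, moving $u$ onto $f_v$. Hence the sought pair $\set{\alpha,\beta}$ is the pair of $S$-vertices delimiting $u$'s or $v$'s flip-component, and the whole task reduces to reading these delimiters off from the two projections $\pi(u)=\meet_T(u,x,y)$ and $\pi(v)=\meet_T(v,x,y)$ that we can already compute. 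Throughout, membership in $S$ (incidence to both $f_u$ and $f_v$) is testable by the dual structure of Observation~\ref{obs:dualstruct}. Since $u$ and $v$ lie in different flip-components, their projection arcs are distinct and can share at most a common endpoint, which gives the two-case split.

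First I would treat $\pi(u)=\pi(v)=\alpha$. Because distinct arcs overlap only at vertices of $S$, a common projection $\alpha$ must be a shared arc-endpoint, i.e.\ a separation point --- provided $\alpha\in S$. So I would test whether $\alpha$ is adjacent to both $f_u$ and $f_v$; if it is not, $\alpha$ cannot be a separation point, and since it is forced to be the shared delimiter of $u$'s and $v$'s (adjacent) components, no separation pair for $u,v$ can exist. If $\alpha\in S$, then $\alpha$ is one separation point and the other, $\beta$, is the opposite endpoint of the flipped arc, namely the first $S$-vertex $\neq\alpha$ met on walking from $\alpha$ into that arc. As $C=\simplepath{x}{y}\cup\set{(x,y)}$ and the arc may extend toward either end, I take as the two candidates the first such vertex on $\simplepath{\alpha}{x}$ and on $\simplepath{\alpha}{y}$, exactly as in the statement.

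In the case $\pi(u)\neq\pi(v)$ the two points lie strictly inside two distinct arcs. For $u$'s flip-component, its two delimiters are recovered as the first $S$-vertices reached from $\pi(u)$ on $\simplepath{\pi(u)}{x}$ and on $\simplepath{\pi(u)}{y}$, and symmetrically for $v$'s component from $\pi(v)$. Since a valid flip flips whichever of these two components, $\set{\alpha,\beta}$ is among the $\Oo(1)$ vertices produced by these four walks --- matching the candidate pairs built from $\simplepath{\pi(u)}{x},\simplepath{\pi(v)}{x}$ and from $\simplepath{\pi(u)}{y},\simplepath{\pi(v)}{y}$ in the statement. I would then retain the pair whose two members both lie in $S$ and genuinely bound a component separating $u$ from $v$.

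The main obstacle is the bookkeeping of orientation and of which component to flip: I must argue that in every configuration the true separation pair occurs among the constant number of first-$S$-vertex hits, while spurious hits (arising when a projection coincides with an $S$-vertex, or when only one of $u$'s and $v$'s components yields a valid flip) are discarded by the $S$-membership and separation tests. The lemma that makes the candidate set both finite and exact is precisely Observation~\ref{obs:pies} --- that arcs of distinct flip-components share only vertices of $S$ --- which pins each delimiter down uniquely as the first $S$-vertex in a fixed direction along $C$; the only remaining freedom is the direction ($x$ versus $y$), which is handled by trying both.
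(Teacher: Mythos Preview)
Your proposal is correct and follows essentially the same route as the paper: both arguments rest entirely on Observation~\ref{obs:pies}, interpret the flip-components as arcs of $C$ meeting only in $S=V[f_u]\cap V[f_v]$, split on whether $\pi(u)=\pi(v)$, and recover the separation pair as the first $S$-vertices encountered when walking from the projections toward $x$ and toward $y$. The paper's proof is slightly more explicit on two bookkeeping points you defer: (i) the edge $(x,y)$ may lie inside one of $X_u,X_v$, so the walks from that projection can fail to reach a delimiter and one must fall back on the other projection/direction (this is why the statement pairs the $x$-walks together and the $y$-walks together rather than pairing by projection); and (ii) when a projection itself lies in $S$ one needs the \emph{first two} $S$-vertices on a walk, not just the first, to get past it---which is exactly the ``spurious hit'' case you flag but do not spell out.
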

\begin{proof}
If the projection of $u$ equals the projection of $v$, but $u$ and $v$ are in different flip-components, then 
the next point incident to both $f_u$ and $f_v$ along the cycle to either side will be the one we are looking for. However, $(x,y)$ may be internal in the flip component containing $u$ or that containing $v$, and thus one of the searches may return the empty list. But then the other will return the desired pair of vertices.

If the projections are different, and do not themselves form the desired pair $(\alpha,\beta)$, then we may assume
without loss of generality that $\pi(v)$ does not belong to the flip-component containing $u$. Let $X_v,X_u$ denote the flip-components containing $u$ and $v$, respectively. If $(x,y)$ is in $X_v$, such that no edge on $\simplepath{\pi(v)}{x}$ is incident to both $f_u$ and $f_v$, then the first vertex on $\simplepath{\pi(v)}{y}$ incident to $f_u$ and $f_v$ is $\alpha$. Recall (Observation~\ref{obs:pies}) that $\pi(X_u)$ is an arc $\pi_1 \cdots \pi_2\subset C$, and suppose without loss of generality  $\pi_1$ is on the path $\simplepath{u}{v}$. If $\pi(u) = \pi_1$, $\beta$ is the second vertex on the path $u$ to $y$ incident to both $f_u$ and $f_v$, as $\pi(u)$ itself is the first. Otherwise, the first such vertex on the path is $\beta$. 
If, on the other hand, $(x,y)$ did not belong to $X_v$, let $x$ be the vertex of $x,y$ with the property that the path $\simplepath{u}{x}$ goes through $\pi(u)$. Then the first vertices on the paths to $x$ which are incident to $f_u$ and $f_v$ both, will be the desired separation pair.
\qed
\end{proof}

\begin{lemma}
In the scenario above, we may find the first two vertices on the path incident to both faces in time $\Oo(\log^2 n)$.
\end{lemma}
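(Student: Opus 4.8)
The plan is to reuse the marking-and-search machinery of the linkable query, but with the roles of vertices and faces dualised (Observation~\ref{obs:dualstruct}) and with the search refined from ``find some witness'' to ``find the \emph{first} witness along a prescribed tree path.'' Since the previous lemma has reduced the task to locating, on paths such as $\simplepath{\pi(v)}{x}$, $\simplepath{v}{x}$, or $\simplepath{u}{y}$, the first one or two vertices incident to both $f_u$ and $f_v$, and since each such path is a sub-path of the portion $\simplepath{x}{y}$ of the separating cycle $C$ (which, lying in $T\cup\set{(x,y)}$, is a path in $T$), all these search paths are ordinary tree paths and can be isolated as a single path cluster by an $\expose$ in the primal top tree.

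First I would mark all corners incident to $f_u$ and to $f_v$. Using the deactivation-count scheme, this is done by exposing $f_u$ and $f_v$ in the dual top tree, costing $\Oo(\log^2 n)$ exactly as in the proof of Lemma~\ref{lem:7}. The key geometric point is that because $C$ separates $f_u$ from $f_v$, for any vertex $w$ on the cluster path the corners of $w$ incident to $f_u$ lie in one of the two $\EET(T)$-segments of $w$ and those incident to $f_v$ lie in the other (cf. Observation~\ref{obs:segments}). Hence $w$ is incident to \emph{both} faces precisely when it carries an active (minimally-deactivated) corner on \emph{each} of the two sides of the cluster path --- which is exactly the condition already testable from the maintained $c_{\min}$ values, just as in the linkable query.

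Next I would search for the first witness rather than an arbitrary one. Instead of maintaining only a single witness $f_{\min}$ per path cluster, I would descend the $\Oo(\log n)$-high top tree from the root path cluster toward the endpoint $\pi(v)$ (or $v$), at each node deciding into which child the first doubly-incident vertex falls. For a node whose cluster path splits at a shared boundary vertex $b$ into a near and a far sub-cluster, I would first use the per-segment $c_{\min}$ values to test whether the near sub-cluster contains a doubly-incident interior vertex and, if so, descend into it; otherwise I would test whether $b$ itself is doubly incident using the per-boundary values $c_{\min}(C,b)$ on each side (cf. Lemma~\ref{lem:7}), and only failing that move to the far sub-cluster. This produces the first witness from the $\pi(v)$-end; re-exposing with that vertex as a new endpoint and repeating yields the second. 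Each descent step costs at most one $\Oo(\log n)$ expose and the descent has depth $\Oo(\log n)$, so the whole search is $\Oo(\log^2 n)$.

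The hard part will be the correct handling of a candidate vertex whose two decisive corners---one incident to $f_u$, one to $f_v$---are split across a cluster boundary: a naive ``does this sub-cluster contain a witness'' test built only from per-segment minima can miss exactly the boundary vertex, or falsely accept a vertex that is only singly incident within each child. The slim-path invariant, which guarantees at most one cluster-path edge per boundary vertex (Lemma~\ref{lem:bnode-discard-segments}), together with the per-boundary-vertex minima $c_{\min}(C,b)$, is what makes the boundary case decidable in $\Oo(1)$ per node, and hence makes the descent both correct and within the $\Oo(\log^2 n)$ budget.
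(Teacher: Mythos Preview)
Your overall setup is right and matches the paper: work in the dual copy of the data structure (Observation~\ref{obs:dualstruct}), mark $f_u$ and $f_v$ there at cost $\Oo(\log^2 n)$, and then search a primal-tree sub-path of the separating cycle $C$ for vertices carrying an active corner on each side. The geometric point you make---that $f_u$-corners and $f_v$-corners fall on opposite sides of this path---is exactly what the paper uses.

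Where you diverge is in how you extract the \emph{first} witnesses. The paper does no descent at all: it simply reads them from the linked \emph{vertex-list} already maintained in the dual structure (the analogue of the face-list from the ``$k$ answers to $\linkablequery$'' lemma). After exposing $\pi(u),x$, the root cluster's list contains precisely the doubly-incident interior vertices, already in path order, so one reads off the first two and handles the two endpoints separately as in $\linkablequery$.

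Your descent argument has a gap. The test ``per-segment $c_{\min}=0$ on both sides of the near sub-cluster'' does \emph{not} decide ``the near sub-cluster contains a doubly-incident interior vertex''; it only certifies that \emph{some} path vertex there is incident to $f_u$ and \emph{some} (possibly different) one is incident to $f_v$. You flag this ``falsely accept'' possibility in your final paragraph but then only resolve the boundary-vertex case via $c_{\min}(C,b)$; the interior false positive remains unaddressed. With that criterion the descent can be steered into a witness-free sub-cluster and forced to backtrack, in the worst case touching linearly many clusters. The correct per-cluster indicator is precisely the $f_{\min}$-style aggregate (here, a ``$v_{\min}$'') that the linkable machinery already maintains; once you use it the descent becomes sound---but then the maintained ordered list already gives the first two witnesses directly, which is what the paper does.
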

 
\begin{proof}
We use the dual structure (see Observation~\ref{obs:dualstruct}) to 
search for vertices incident to $f_u$ and $f_v$. Now since the path $\simplepath{\pi(u)}{x}$ is a sub-path of the fundamental cycle $C$ induced by $(x,y)$ which separates $f_u$ from $f_v$, all corners incident to $f_v$ will be on one side,
and all corners incident to $f_u$ will be on the other side
of the path, or at the endpoints. Thus, we expose $f_u$ and $f_v$ in the dual structure, which takes time $\Oo(\log^2 n)$. Now expose $\pi(u),x$ in the primal tree. Since this path is part of the separating cycle, if $n_{\min}=s_{\min}=0$, 
then the maintained vertex-list will contain exactly those vertices incident to both faces, and a corner list for each of them. 
We now deal separately with the endpoints exactly as with $\linkablequery$, by exposing the endpoint faces one by one in the dual structure, and noting whether $c_{\min}=0$ and in that case, the corner list, for each endpoint. 
\qed
\end{proof}
We conclude with the following theorem.
\begin{theorem}
  We can maintain an embedding of a dynamic graph under $\insertrm$ (edge),
  $\remove$ (edge), $\vertexsplit$ (vertex), $\vertexjoin$ (vertex), and $\flip$ (subgraph), together with
  queries that
  \begin{enumerate}
  \item \emph{(linkable)} Answer whether an edge can be inserted
    between given endpoints with no other changes to the embedding, and if
    so, where.
  \item \emph{(one-flip-linkable)} Answer whether there exists a flip that would change the
    answer for \emph{linkable} from ``no'' to ``yes'', and
    if so, what flip.
  \end{enumerate}
  The worst case time per operation is $\Oo(\log^2 n)$.
  
\end{theorem}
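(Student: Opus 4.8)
The plan is to assemble the theorem from the component results established in Sections~\ref{sec:maintaining}--\ref{sec:flip-find}; since each individual operation has already been analysed, the work here is to specify the data structure once, list which earlier result realises each operation, and verify that the pieces compose without breaking any invariant. The data structure is the pair of coordinated top trees — a slim-path extended embedded top tree over the primal spanning tree $T$ and one over the co-tree $\cotree{T}$ — linked through their common extended Euler tour via Lemma~\ref{lem:corner-euler-tour}. I would first recall that each of the two top trees is maintained with $\Oo(\log n)$ calls to $\clustermerge$ and $\clustersplit$ per structural change (Lemma~\ref{lem:slimpath}), and that the deactivation-count, $c_{\min}$, $\Delta$, and $f_{\min}$ augmentations are maintained with $\Oo(\log^2 n)$ work per change to the marked set (Lemma~\ref{lem:7} and its refinements).

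For the query half, the linkable query is exactly the lemma supporting $\linkablequery(u,v)$ in $\Oo(\log^2 n)$ time (and $\Oo(\log^2 n + k)$ when listing $k$ placements). The one-flip-linkable query is the composite procedure of Section~\ref{sec:flip-find}: compute the at most two candidate faces $f_u,f_v$ from the $\meet$ identities, verify incidence, and then locate the separation pair and its delimiting corners, each sub-step being an $\Oo(\log^2 n)$ expose-and-read on the primal and dual structures (using the dual copy of Observation~\ref{obs:dualstruct}). Because only a constant number of $\meet$ candidates and endpoint faces must be tested, the total remains $\Oo(\log^2 n)$.

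For the update half, I would invoke the explicit procedures of Section~\ref{sec:updates} for $\insertrm$, $\remove$ (in its bridge, non-tree-edge, and replacement-edge cases), $\vertexsplit$, and $\vertexjoin$, each realised as an $\Oo(1)$ number of cut/join/link/expose operations on the two top trees; the two flip operations follow from the articulation-flip and separation-flip theorems of Section~\ref{sec:flip} via the seclude/alter/include scheme. Every constituent operation costs $\Oo(\log^2 n)$, dominated by the primal expose, so taking the maximum over the constantly many operation types yields the claimed worst-case bound.

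The main obstacle — which is really why the theorem is more than a one-line corollary — is maintaining consistency of the two top trees through each update. Because the $\clustermerge$ and $\clustersplit$ we use for $T$ depend on $T$ and $\cotree{T}$ sharing related extended Euler tours, every update must follow the disciplined order flagged in Section~\ref{sec:updates}: first split the primal tree enough to isolate each unchanged subtree as a partial top tree, then perform the corresponding splits, joins, and links on the dual top tree while updating the EET, and only then rebuild the primal top tree. I would therefore organise the verification around showing that each listed update restores the full set of invariants — the slim-path invariant, the segment decomposition of Observation~\ref{obs:segments}, and the correspondence of Lemma~\ref{lem:corner-euler-tour} — before any query can observe the structure, so that the query lemmas apply verbatim and the overall $\Oo(\log^2 n)$ bound holds in the worst case.
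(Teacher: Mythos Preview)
Your proposal is correct and matches the paper's approach: the paper presents this theorem as a concluding summary with no separate proof, relying entirely on the preceding lemmas and theorems for each operation and query, which is precisely the assembly you outline. If anything, your write-up is more explicit than the paper's own treatment, since you spell out the invariant-restoration discipline from Section~\ref{sec:updates} that the paper only mentions in passing.
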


\subsubsection*{Acknowledgments}

We would like to thank Christian Wulff-Nilsen and Mikkel Thorup for
many helpful and interesting discussions and ideas.

\nocite{Sleator1983362}
\nocite{Tutte1963}
\nocite{Klein:2005}

\newpage
\bibliographystyle{plain}
\bibliography{refs}
\end{document}